\newcommand{\ot}{\ensuremath{\mbox{{\sc {\small 1-2}-ot}}}}
\newcommand{\otalone}{{\sc ot}}
\newcommand{\nlb}{{\sc nl}}
\newcommand{\emb}[1]{\ensuremath{\mathcal{E}}(#1)}
\newcommand{\expe}[1]{\ensuremath{\mbox{e}^{#1}}}
\newcommand{\dep}[2]{\ensuremath{#1 \searrow #2}}
\newcommand{\otp}[1]{\ensuremath{\ot^{#1}}}
\newcommand{\srotp}[1]{\ensuremath{\mbox{\sc rot}^{#1}}}
\newcommand{\srot}{\srotp{r}}
\newcommand{\sotr}{\otp{r}}
\newcommand{\sand}{\ensuremath{\mbox{\sc sand}}}
\newcommand{\otn}{\ensuremath{\ot_p}}
\newcommand{\pnl}{\ensuremath{P^{\mbox{{\tiny \nlb}}}_{X,Y}}}
\newcommand{\pot}{\ensuremath{P^{\mbox{{\tiny \otalone}}}_{X,Y}}}
\newcommand{\psot}{\ensuremath{P^{\mbox{{\tiny {\sc ot}$^r$}}}_{X,Y}}}
\newcommand{\potn}{\ensuremath{P^{\mbox{{\tiny \otalone$_p$}}}_{X,Y}}}
\newcommand{\potnq}{\ensuremath{P^{\mbox{{\tiny \otalone$_{1/4}$}}}_{X,Y}}}
\newcommand{\psrot}{\ensuremath{P^{\mbox{{\tiny {\sc rot}$^r$}}}_{X,Y}}}
\newcommand{\hil}{\mathcal{H}}
\newcommand{\unit}[1]{{\sf U}(#1)}
\newcommand{\idf}[1]{\ensuremath{{\mathsf{ID}}(#1)}}
\newcommand{\correct}{strictly correct}
\newcommand{\correctly}{strict-correctly}
\newcommand{\correctness}{strict correctness}
\newcommand{\CorrectNess}{Strict Correctness}
\newcommand{\assign}{\ensuremath{\kern.5ex\raisebox{.1ex}{\mbox{\rm:}}\kern -.3em =}}
\newcommand{\eps}{\varepsilon}
\newcounter{itm}
\title{
Quantifying the Leakage of Quantum Protocols\\for Classical Two-Party Cryptography\thanks{A previous version of this article as appeared at ASIACRYPT 2009~\cite{SSS09}.}
}
\author{
Louis Salvail\inst{1}
\and
Christian Schaffner\inst{2,3}
\and Miroslava Sot\'{a}kov\'{a}\inst{4}
}
\institute{
Universit\'e de Montr\'eal (DIRO), QC, Canada\\
\email{salvail@iro.umontreal.ca}
\and
Institute for Logic, Language and Computation (ILLC)\\ University of  Amsterdam, The Netherlands\\
\email{c.schaffner@uva.nl}
\and
Centrum Wiskunde \& Informatica (CWI), Amsterdam, The Netherlands\\
\and
Knewton, Inc, NY, USA\\
\email{gwhitehawk@gmail.com}
}
\begin{document}
\pagestyle{plain}
\maketitle
\begin{abstract}
  We study quantum protocols among two distrustful parties. By adopting a rather strict definition of correctness---guaranteeing that honest players
  obtain their correct outcomes only---we can show that every \correct\  quantum protocol implementing a non-trivial \emph{classical} primitive necessarily leaks information to a dishonest player. This extends known impossibility
  results to all non-trivial primitives. We provide a framework for
  \emph{quantifying} this leakage and argue that leakage is a good measure
  for the privacy provided to the players by a given protocol. Our
  framework also covers the case where the two players are helped by a
  trusted third party. We show that despite the help of a trusted
  third party, the players cannot amplify the cryptographic power of
  any primitive. All our results hold even against quantum
  honest-but-curious adversaries who honestly follow the protocol but
  purify their actions and apply a different measurement at the end of
  the protocol. As concrete examples, we establish lower bounds on the
  leakage of standard universal two-party primitives such as oblivious
  transfer.

\vspace{1mm}
{\bf Keywords:} two-party cryptography, quantum protocols, quantum
information theory, information leakage.

\end{abstract}

\section{Introduction}
\label{chap:intro}
Quantum communication allows to implement tasks which are classically
impossible. The most prominent example is quantum key
distribution~\cite{BB84} where two honest players establish a secure
key against an eavesdropper. In the two-party setting however, quantum
and classical cryptography often show similar limits. Oblivious
transfer~\cite{Lo97}, bit commitment~\cite{Mayers97,LC97}, and even fair
coin tossing~\cite{Kitaev03} are impossible to realize securely both
classically and quantumly.  On the other hand, quantum cryptography
allows for some weaker primitives impossible in the classical
world. For example, quantum coin-flipping protocols with maximum bias
of $\frac{1}{\sqrt{2}}-\frac12$ exist\footnote{In fact, protocols with
  better bias are known for weak quantum coin
  flipping~\cite{Mochon04,Mochon05,Mochon07}.} against any
adversary~\cite{CK09} while remaining impossible based solely on
classical communication. A few other weak primitives are known to be
possible with quantum communication.
For example, the generation of an additive
secret-sharing for the product $xy$ of two bits, where Alice holds bit
$x$ and Bob bit $y$, has been introduced by Popescu and Rohrlich as
machines modeling non-signaling non-locality (also called NL-boxes)~\cite{PR94}.
If Alice and Bob share an EPR pair, they can simulate an NL-box with symmetric
error probability $\sin^{2}{\frac{\pi}{8}}$~\cite{PR94,BLMPPR05}.
Equivalently, Alice and Bob can implement {\em 1-out-of-2 oblivious
  transfer} (\ot) privately provided the receiver Bob gets the bit of
his choice only with probability of error $\sin^{2}{\frac{\pi}{8}}$~\cite{Ambainis05OT}.  It is easy to verify that even with such imperfection these two
primitives are impossible to realize in the classical world.
This discussion naturally leads to the following question:
\begin{itemize}
\item Which two-party cryptographic primitives are possible to achieve using
quantum communication?  
\end{itemize}
Most standard classical two-party primitives have been shown
impossible to implement securely against weak quantum adversaries
reminiscent to the classical honest-but-curious (HBC)
behavior~\cite{Lo97}. The idea behind these impossibility proofs is to
consider parties that {\em purify} their actions throughout the
protocol execution.  This behavior is indistinguishable from the one
specified by the protocol but guarantees that the joint quantum state
held by Alice and Bob at any point during the protocol remains
pure. The possibility for players to behave that way in any two-party
protocol has important consequences. For instance, the impossibility
of quantum bit commitment follows from this fact~\cite{Mayers97,LC97}:
After the commit phase, Alice and Bob share the pure state
$\ket{\psi^x} \in \hil_A \otimes \hil_B$ corresponding to the
commitment of bit~$x$. Since a proper commitment scheme provides no
information about $x$ to the receiver Bob, it follows that
$\tr_A\proj{\psi^{{0}}} =\tr_A\proj{\psi^1}$. In this case, the
Schmidt decomposition guarantees that there exists a unitary $U_{0,1}$
acting only on Alice's side such that $\ket{\psi^1} = (U_{0,1}\otimes
\I_{B})\ket{\psi^{{0}}}$.  In other words, if the commitment is
concealing then Alice can open the bit of her choice by applying a
suitable unitary transform only to her part. A similar argument allows
to conclude that \ot\ is impossible~\cite{Lo97}: Suppose Alice is
sending the pair of bits $(b_0,b_1)$ to Bob through \ot.  Since Alice
does not learn Bob's selection bit, it follows that Bob can get bit
$b_0$ before undoing the reception of $b_0$ and transforming it into
the reception of $b_1$ using a local unitary transform similar to
$U_{0,1}$ for bit commitment. For both these primitives, privacy for
one player implies that local actions by the other player can
transform the honest execution with one input into the honest
execution with another input.

In this paper, we investigate the cryptographic power of two-party
quantum protocols against players that purify their actions while
trying to implement a classical primitive. This {\em
  quantum honest-but-curious (QHBC)} behavior is the natural quantum
version of classical HBC behavior.  
This class of adversaries  
was recently called \emph{(perfectly) specious} in~\cite{DNS10}.
It contains all adversaries that could \emph{prove} to a judge,
at any step during a protocol execution, that
the joint state (up to an adversary's local computation) is the honest one. We consider classical primitives providing Alice and Bob with random variable $X$
and $Y$ respectively according distribution $P_{X,Y}$.
Any such
$P_{X,Y}$ models a two-party cryptographic primitive where neither
Alice nor Bob provide input. For the purpose of this paper, this model
is general enough since any two-party primitive with inputs can be
randomized (Alice and Bob pick their input at random) so that its
behavior can be described by a suitable joint probability distribution
$P_{X,Y}$. If the classical primitive with inputs $f:A\times B \rightarrow W\times Z$ 
is implemented  securely by some protocol $\pi_f$  then it must also remain
secure when Alice's and Bob's private input $(a,b)\in_R {A}\times {B}$ 
is picked uniformly at random. In this case, the joint probability distribution 
$P_{X,Y}$ implemented by $\pi_f$ is simply:
\[    P_{X,Y}((a,w),(b,z)) = \frac{ \Pr{\left(f(a,b)=(w,z)\right)}}{|A| \cdot |B|}.
\]
 If the randomized version $P_{X,Y}$ is shown to be
impossible to implement securely by any quantum protocol then
the original primitive with inputs must also be impossible. 

Any quantum protocol implementing $P_{X,Y}$ must produce, when both
parties purify their actions, a joint pure state $\ket{\psi}\in
\hil_{AA'} \otimes\hil_{BB'}$ that, when subsystems of $A$ and $B$ are
measured in the computational basis, leads to outcomes $X$ and $Y$
according the distribution $P_{X,Y}$. Notice that the registers $A'$
and $B'$ only provide the players with extra working space and, as
such, do not contribute to the output of the functionality (so parties
are free to measure them the way they want). In this paper, we adopt a
somewhat strict point of view and define a quantum protocol $\pi$ for
$P_{X,Y}$ to be \emph{\correct}\ if and only if the correct outcomes
$X,Y$ are obtained \emph{and} the registers $A'$ and $B'$ do not
provide any additional information about $Y$ and $X$ respectively
since otherwise $\pi$ would be implementing a different primitive
$P_{XX',YY'}$ rather than $P_{X,Y}$. 
The state $\ket{\psi}$ produced by any \correct\ protocol for $P_{X,Y}$
is what we call a {\em quantum embedding} of $P_{X,Y}$.  An embedding is
called \emph{regular} if registers $A'$ and $B'$ are empty. Any
embedding $\ket{\psi}\in \hil_{AA'}\otimes\hil_{BB'}$ can be produced
in the QHBC model by the trivial protocol asking Alice to generate
$\ket{\psi}$ before sending the quantum state in $\hil_{BB'}$ to
Bob. It follows that in the QHBC model, any embedding of $P_{X,Y}$
corresponds to a \correct\  protocol and, since any protocol implementing  $P_{X,Y}$ can be purified 
in the bare model, any \correct\ protocol generates some embedding of $P_{X,Y}$
in the QHBC model.

Notice that if $X$ and $Y$ were provided privately to Alice and
Bob---through a trusted third party for instance---then the expected
amount of information one party gets about the other party's output is
minimal and can be quantified by the Shannon mutual information
$I(X;Y)$ between $X$ and $Y$. Assume that $\ket{\psi}\in
\hil_{AA'}\otimes\hil_{BB'}$ is an embedding of $P_{X,Y}$ produced by
a \correct\ quantum protocol.  We define the leakage of $\ket{\psi}$ as
\begin{equation}
\label{leak_intro} 
\Delta_{\psi} \assign \max \left\{ \, S(X;BB')-I(X;Y) \, ,
  \:S(Y;AA')-I(Y;X) \,\right\},
\end{equation}
where $S(X;BB')$ (resp. $S(Y;AA')$)  is the information
the quantum registers $BB'$ (resp. $AA'$) provide about 
the output $X$ (resp. $Y$). That is, the leakage is the
maximum amount of extra information about the other party's output
given the quantum state held by one party. It turns out that
$S(X;BB')=S(Y;AA')$ holds for all embeddings, exhibiting a symmetry
similar to its classical counterpart $I(X;Y)=I(Y;X)$ and therefore,
the two quantities we are taking the maximum of in~\eqref{leak_intro} coincide. 

\subsection{Contributions} Our first contribution establishes that
the notion of leakage is well behaved.  We show that the leakage of
any embedding for $P_{X,Y}$ is lower bounded by the leakage of some
regular embedding of the same primitive. Thus, in order to lower bound
the leakage of any \correct\ implementation of a given primitive, it
suffices to minimize the leakage over all its regular embeddings.  We
also show that the only non-leaking embeddings are the ones for
trivial primitives, where a primitive $P_{X,Y}$ is said to be {\em
  (cryptographically) trivial} if it can be generated by a classical
protocol against HBC adversaries\footnote{\label{foot:caveat}We are
  aware of the fact that our definition of triviality encompasses
  cryptographically interesting primitives like coin-tossing and
  generalizations thereof for which highly non-trivial protocols
  exist~\cite{Mochon07,CK09}. However, the important fact (for
  the purpose of this paper) is that all these primitives can be
  implemented by \emph{trivial} classical protocols against HBC
  adversaries.}. It follows that any quantum protocol implementing a
non-trivial primitive $P_{X,Y}$ must leak information under the sole
assumption that it produces $(X,Y)$ with the right joint
distribution. This extends known impossibility results for two-party
primitives to all non-trivial primitives.

Embeddings of primitives arise from protocols where Alice and Bob have
full control over the environment. Having in mind that any embedding
of a non-trivial primitive leaks information, it is natural to
investigate what tasks can be implemented without leakage with the
help of a trusted third party. The notion of leakage can easily be
adapted to this scenario.
We show that no cryptographic two-party primitive can be implemented
without leakage with just one call to the ideal functionality of a
weaker primitive\footnote{The weakness of a primitive will be formally
  defined in terms of entropic monotones for classical two-party
  computation introduced by Wolf and Wullschleger~\cite{WW04}, see
  Section~\ref{sec:limitedresources}.}. This new impossibility result
does not follow from the ones known since they all assume that the
state shared between Alice and Bob is pure.

We then turn our attention to the leakage of \correct\ protocols for a
few concrete universal primitives. From the results described above,
the leakage of any \correct\ implementation of a primitive can be
determined by finding the (regular) embedding that minimizes the
leakage.  In general, this is not an easy task since it requires to
find the eigenvalues of the reduced density matrix $\rho_A =
\tr_B\proj{\psi}$ (or equivalently $\rho_B = \tr_A\proj{\psi}$). As
far as we know, no known results allow us to obtain a non-trivial
lower bound on the leakage (which is the difference between the mutual
information and accessible information) of non-trivial primitives. One
reason being that in our setting we need to lower bound this
difference with respect to a measurement in one particular basis.
However, when $P_{X,Y}$ is such that the bit-length of either $X$ or
$Y$ is short, the leakage can be computed precisely. We show that any
\correct\ implementation of \ot\ necessarily leaks $\frac{1}{2}$\,bit.
Since NL-boxes and \ot\ are locally equivalent, the same minimal
leakage applies to NL-boxes~\cite{WW05b}.  This is a stronger
impossibility result than the one by Lo~\cite{Lo97} since he
assumes perfect/statistical privacy against one party while our
approach only assumes \correctness\ (while both approaches apply even
against QHBC adversaries). We finally show that for Rabin-OT and \ot\
of $r$-bit strings (i.e.~\srot\ and \sotr\ respectively), the leakage
approaches $1$ exponentially in $r$. In other words, \correct\
implementations of these two primitives trivialize as $r$ increases
since the sender gets almost all information about Bob's reception of
the string (in case of \srot) and Bob's choice bit (in case of
\sotr). These are the first quantitative impossibility results for
these primitives and the first time the hardness of
implementing different flavors of string OT is shown to increase as
the strings to be transmitted get longer.

Finally, we note that our lower bounds on the leakage of the
randomized primitives also lower-bound the minimum leakage for the
standard versions of these primitives\footnote{The definition of
  leakage of an embedding can be generalized to protocols with inputs,
  where it is defined as $\max\{ \sup_{V_B} S(X;V_B)-I(X;Y) \, , \,
  \sup_{V_A} S(V_A;Y)-I(X;Y) \}$, where $X$ and $Y$ involve both 
inputs and outputs of Alice and Bob, respectively. The supremum is taken over all
  possible (quantum) views $V_A$ and $V_B$ of Alice and Bob obtained
  by their (QHBC-consistent) actions (and containing their inputs).}
where the players choose their
inputs uniformly at random. While we focus on the typical case where
the primitives are run with uniform inputs, the same reasoning can be
applied to primitives with arbitrary distributions of inputs.

\subsection{Related Work} Our framework allows to quantify the
minimum amount of leakage whereas standard impossibility proofs as the
ones of~\cite{LC97,Mayers97,Lo97,AKSW07,BCS12} do not in general
provide such quantification since they usually assume privacy for one
player in order to show that the protocol must be totally insecure for
the other player\footnote{Trade-offs between the security for one and the
  security for the other player have been considered before, but
  either the relaxation of security has to be very small~\cite{Lo97}
  or the trade-offs are restricted to particular primitives such as
  commitments~\cite{SR01,BCHLW08} or oblivious transfer~\cite{CKS13}.}. By contrast, we derive lower
bounds for the leakage of any \correct\ implementation.  At first
glance, our approach seems contradictory with standard impossibility
proofs since embeddings leak the same amount towards both parties.  To
resolve this apparent paradox it suffices to observe that in previous
approaches only the adversary purified its actions whereas in our case
both parties do. If a honest player does not purify his actions then
some leakage may be lost by the act of irreversibly and unnecessarily
measuring some of his quantum registers.

Our results complement the ones obtained by Colbeck
in~\cite{Colbeck07} for the setting where Alice and Bob have inputs
and obtain identical outcomes (called single-function computations).
\cite{Colbeck07} shows that in any \correct\ implementation of
primitives of a certain form, an honest-but-curious player can access
more information about the other party's input than it is available
through the ideal functionality. Unlike~\cite{Colbeck07}, we deal in
our work with the case where Alice and Bob do not have inputs but
might receive different outputs according to a joint probability
distributions. We show that only trivial distributions can be
implemented securely in the QHBC model.  Furthermore, we introduce a
quantitative measure of protocol-insecurity that lets us answer which
embedding allow the least effective cheating.

Another notion of privacy in quantum protocols, generalizing its
classical counterpart from~\cite{BK91,Kushilevitz92}, is proposed by
Klauck in~\cite{Klauck04}.  Therein, two-party quantum protocols with
inputs for computing a function
$f:\mathcal{X}\times\mathcal{Y}\rightarrow \mathcal{Z}$, where
$\mathcal{X}$ and $\mathcal{Y}$ denote Alice's and Bob's respective
input spaces, and privacy against QHBC adversaries are considered.
Privacy of a protocol is measured in terms of \emph{privacy loss},
defined for each round of the protocol and fixed distribution of
inputs $P_{X',Y'}$ by $S(B;X|Y)=H(X|Y)-S(X|B,Y)$, where $B$ denotes
Bob's private working register, and $X\assign(X',f(X',Y'))$,
$Y\assign(Y',f(X',Y'))$ represent the complete views of Alice and Bob,
respectively.  Privacy loss of the entire protocol is then defined as
the supremum over all joint input distributions, protocol rounds, and
states of working registers.  In our framework, privacy loss
corresponds to $S(X;YB)-I(X;Y)$ from Alice point's of view and
$S(Y;XA)-I(X;Y)$ from Bob's point of view.  Privacy loss is therefore
very similar to our definition of leakage except that it requires the
players to get their respective honest outputs.  As a consequence, the
protocol implementing $P_{X,Y}$ by asking one party to prepare a
regular embedding of $P_{X,Y}$ before sending her register to the
other party would have no privacy loss. Moreover, the scenario
analyzed in \cite{Klauck04} is restricted to primitives which provide
the same output $f(X,Y)$ to both players.  Another difference is that
since privacy loss is computed over all rounds of a protocol, a party
is allowed to abort which is not considered QHBC in our setting.  In
conclusion, the model of~\cite{Klauck04} is different from ours even
though the measures of privacy loss and leakage are similar.
\cite{Klauck04} provides interesting results concerning trade-offs
between privacy loss and communication complexity of quantum
protocols, building upon similar results of~\cite{BK91,Kushilevitz92}
in the classical scenario. It would be interesting to know whether a
similar operational meaning can also be assigned to the new measure of
privacy, introduced in this paper. 

A result by K\"unzler et al.~\cite{KMR09} shows that two-party
functions that are securely computable against active quantum
adversaries form a strict subset of the set of functions which are
securely computable in the classical HBC model. This complements our
result that the sets of securely computable functions in both HBC and
QHBC models are the same.

A recent paper by Fehr, Katz, Song, Zhou and Zikas~\cite{FKSZZ13} studies our question with respect to the stricter requirements of universal composability. They give classification results for quantum protocols achieving classical primitives with computational and information-theoretic security. Interestingly, classical and quantum protocols seem to be similarly powerful with respect to computational security whereas in the information-theoretic setting, the two landscapes look different.

\subsection{Roadmap}  In Section~\ref{chap:prelim}, we introduce the
cryptographic and information-theoretic notions and concepts used
throughout the paper. We define, motivate, and analyze the generality
of modeling two-party quantum protocols by embeddings in
Section~\ref{qembprotocols} and define triviality of primitives and
embeddings. In Section~\ref{crypto}, we define the notion of leakage
of embeddings, show basic properties and argue that it is a reasonable
measure of privacy. In Section~\ref{sec:primleakage}, we explicitly
lower bound the leakage of some universal two-party
primitives. Finally, in Section~\ref{conclusion} we discuss possible
directions for future research and open questions.

\section{Preliminaries}
\label{chap:prelim}
\subsection{Quantum Information Theory}  For $x,y\in\{0,1\}^n$, 
$\delta_{x,x}=1$ and $\delta_{x,y}=0$ if $x\neq y$. In the following, 
we denote by $\unit{A}$ the set of unitary transforms acting in Hilbert 
space $\hil_{A}$. Let $\ket{\psi}_{AB} \in
\hil_{AB}$ be an arbitrary pure state of the joint systems $A$ and
$B$. The states of these subsystems are $\rho_A = \tr_B\proj{\psi}$
and $\rho_B=\tr_A\proj{\psi}$, respectively.  We denote by $S(A)_{\psi}
\assign S(\rho_A)$ and $S(B)_{\psi} \assign S(\rho_B)$ the von Neumann
entropy (defined as the Shannon entropy of the eigenvalues of the
density matrix) of subsystem $A$ and $B$ respectively. Whenever the quantum
state $\ket{\psi}$ is clear from the context, we omit the
subscripts from entropic quantities and simply write $S(A)$ and $S(B)$. Since the joint
system is in a pure state, it follows from the Schmidt decomposition
that $S(A)=S(B)$ (see e.g.~\cite{NC00}). Analogously to their
classical counterparts, we can define \emph{quantum conditional entropy}
$S(A|B)\assign S(AB)-S(B)$, and \emph{quantum mutual information}
$S(A;B)\assign S(A)+S(B)-S(AB)=S(A)-S(A|B)=S(B)-S(B|A).$ Note that applying a local unitary transform $U = \id_A \otimes U_B$ to the bipartite state $\rho_{AB}$ does not change the mutual information $S(A;B)_\rho = S(A;B)_{U \rho U^\dag}$, because the spectra of eigenvalues of $\rho_A$, $\rho_B$ and $\rho_{AB}$ remain the same. Even though 
$S(A|B)$ can be negative in general, $S(A|B)\geq 0$ is always true if $A$ is a
classical register.  

% When registers $A$ and $B$ belong to pure state $\ket{\psi}_{ABC}$ and
% whenever required for clarity, we write $S(A|B)_\psi$ and $S(A;B)_\psi$. 
% CS: already mentioned above
Let $R=\{(P_X(x),\rho_R^x\}_{x\in {\cal X}}$ be
an ensemble of states $\rho_R^x$ with prior probability $P_X(x)$. This
defines a classical-quantum (cq) state $\rho_{XR}$ where the
average quantum state is $\rho_R = \sum_{x\in {\cal X}} P_X(x)
\rho_R^x$. The following lemma states that applying a separate unitary
transform to each $\rho_{R}^x$ does not change the entropies $S(XR)$ and
$H(X)$, but it might change $S(R)$.
\begin{lemma} \label{lem:unitaryentropy}
Let $\rho_{XR}=\sum_{x\in {\cal X}} P_X(x)
\rho_R^x$ be a cq-state and let $U_{XR} =\sum_{x} \proj{x}_X \otimes U^{x}_{R}$ be a
unitary transform acting only on register $R$,
conditioned on the classical value $x$ in $X$. Then,
$S(XR)_{\rho_{XR}} = S(XR)_{U_{XR} \rho_{XR} U_{XR}^\dag}$ and
$H(X)_{\rho_{XR}} = H(X)_{U_{XR} \rho_{XR} U_{XR}^\dag}$.
\end{lemma}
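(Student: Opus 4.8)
The plan is to handle the two equalities separately, since they rest on different (and elementary) facts. For the first equality I would argue that $U_{XR}$, though described as a controlled operation, is a bona fide unitary on the whole joint system: from
\[
U_{XR}^{\phantom{\dag}} U_{XR}^\dag = \sum_{x,x'}\proj{x}\proj{x'}\otimes U_R^x (U_R^{x'})^\dag = \sum_x \proj{x}\otimes U_R^x (U_R^x)^\dag = \id_{XR},
\]
where I use $\proj{x}\proj{x'}=\delta_{x,x'}\proj{x}$ together with the unitarity of each $U_R^x$. Since the von Neumann entropy is invariant under conjugation by a unitary acting on the entire system, the equality $S(XR)_{\rho_{XR}} = S(XR)_{U_{XR}\rho_{XR}U_{XR}^\dag}$ is then immediate.

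For the second equality I would compute the action of $U_{XR}$ on the cq-state $\rho_{XR}=\sum_x P_X(x)\proj{x}_X\otimes\rho_R^x$ explicitly. Using the same orthogonality $\proj{x}\proj{x'}=\delta_{x,x'}\proj{x}$ on both sides, the cross terms drop out and one finds that the state transforms block-by-block,
\[
U_{XR}\rho_{XR}U_{XR}^\dag = \sum_x P_X(x)\,\proj{x}_X\otimes U_R^x\rho_R^x (U_R^x)^\dag,
\]
so that the classical labels and their weights $P_X(x)$ are left untouched while each conditional state $\rho_R^x$ is rotated to $U_R^x\rho_R^x(U_R^x)^\dag$. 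Tracing out $R$ and using $\tr\!\big(U_R^x\rho_R^x(U_R^x)^\dag\big)=\tr(\rho_R^x)=1$, the reduced state $\rho_X=\sum_x P_X(x)\proj{x}_X$ is seen to be unchanged, and hence so is $H(X)=S(\rho_X)$, the Shannon entropy of its (fixed) diagonal.

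There is no serious obstacle here; the one point deserving a line of justification is \emph{why} $\rho_X$ stays fixed. This is precisely because $U_{XR}$ is controlled in the \emph{same} basis $\{\ket{x}\}$ that defines the classical register $X$: it acts within each block $\proj{x}_X\otimes(\cdot)$ without permuting the blocks or introducing coherences between them, so it can neither redistribute the probabilities $P_X(x)$ nor alter the eigenvalues of $\rho_X$. By contrast, no such invariance is claimed for $S(R)$, and indeed a block-dependent rotation of the $\rho_R^x$ generally changes the average state $\rho_R=\sum_x P_X(x)U_R^x\rho_R^x(U_R^x)^\dag$ and thus $S(R)$, consistent with the remark preceding the lemma.
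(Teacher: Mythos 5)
Your proof is correct and follows essentially the same route as the paper's: the first equality comes from the fact that conjugating by the (block-diagonal) unitary $U_{XR}$ preserves the spectrum of $\rho_{XR}$, and the second from the fact that $U_{XR}$ acts within the blocks labelled by $x$ and so leaves $\rho_X$ untouched. Your version simply spells out the unitarity of the controlled operation and the block-by-block transformation more explicitly than the paper does.
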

\begin{proof}
The density matrix of the cq-state $\rho_{XR}$ is block-diagonal and
applying separate unitary transforms $U^x_R$ in every sub-block does
not change the overall spectrum of eigenvalues. Hence, the entropy
$S(XR)$ remains the same. The second equality follows from the fact
that the unitary $U_{XR}$ only acts on register $R$.\qed
\end{proof}

 The famous result by Holevo upper-bounds the amount of
classical information about $X$ that can be obtained by measuring
$\rho_R$:
\begin{theorem}[Holevo bound~\cite{Holevo73,Ruskai02}]\label{holevo}
  Let $Y$ be the random variable describing the outcome of some
  measurement applied to $\rho_R$ for $R=\{P_X(x),\rho_R^x\}_{x\in
    {\cal X}}$. Then, $ I(X;Y) \leq S(\rho_R)-\sum_xP_X(x)S(\rho_R^x),$
  where equality can be achieved if and only if
  $\{\rho_R^x\}_{x\in{\cal X}}$ are simultaneously diagonalizable.
\end{theorem}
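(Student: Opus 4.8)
The plan is to recognize the right-hand side as the quantum mutual information of the underlying classical-quantum state and then obtain the bound from monotonicity of mutual information under local quantum operations. Concretely, I would form the cq-state $\rho_{XR} = \sum_x P_X(x)\proj{x}_X \otimes \rho_R^x$, where the $X$-register records the classical label in the computational basis. Because this density matrix is block-diagonal in $X$, its eigenvalues are the products $P_X(x)\lambda_{x,j}$ of the classical weights with the eigenvalues of each $\rho_R^x$, so the entropy splits as $S(XR) = H(X) + \sum_x P_X(x)\,S(\rho_R^x)$ — the same block-diagonal reasoning underlying Lemma~\ref{lem:unitaryentropy}. Substituting into $S(X;R) = S(X) + S(R) - S(XR)$ and using $S(X) = H(X)$ and $S(R) = S(\rho_R)$ yields exactly $S(X;R) = S(\rho_R) - \sum_x P_X(x)\,S(\rho_R^x)$, so the claimed bound is precisely $S(X;R)$.

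Next I would model the measurement producing $Y$ as a CPTP map $\Lambda_{R\to Y}\colon \sigma \mapsto \sum_y \tr(M_y\sigma)\,\proj{y}_Y$ acting only on $R$, where $\{M_y\}$ is the POVM defining the measurement. Applying $\Lambda$ to $\rho_{XR}$ produces the classical-classical state $\rho_{XY}$ with joint distribution $P_{X,Y}(x,y) = P_X(x)\,\tr(M_y\rho_R^x)$. Two facts then close the inequality: monotonicity of quantum mutual information under the local operation $\Lambda$ gives $S(X;Y) \le S(X;R)$; and for a classical-classical state the quantum mutual information coincides with the Shannon mutual information, so $S(X;Y) = I(X;Y)$. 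Chaining these gives $I(X;Y) \le S(X;R) = S(\rho_R) - \sum_x P_X(x)\,S(\rho_R^x)$.

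The main obstacle is that monotonicity of mutual information is not elementary: it is equivalent to strong subadditivity of the von Neumann entropy, which I would invoke as a known result rather than reprove. For the forward (``if'') direction of the equality claim I would simply exhibit the optimal measurement: if the $\rho_R^x$ commute they share an orthonormal eigenbasis $\{\ket{e_j}\}$, and the projective measurement in this basis makes $\rho_R$ diagonal with eigenvalues $P_Y(j)$ while $S(\rho_R^x) = H(Y|X{=}x)$, so a direct computation gives $I(X;Y) = H(Y) - H(Y|X)$, matching the bound. The delicate (``only if'') direction — that attaining equality forces the states to be simultaneously diagonalizable — is where the real work lies, since it amounts to saturating the underlying data-processing/strong-subadditivity inequality; I would derive it from the structural characterization of equality in these entropy inequalities due to Ruskai~\cite{Ruskai02}, from which commutativity of the ensemble follows.
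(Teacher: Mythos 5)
The paper does not prove this statement: it is imported verbatim from the literature (the citations to Holevo and to Ruskai's survey are the ``proof''), so there is no in-paper argument to compare against. Your proposal is the standard modern proof of the Holevo bound and it is correct: the identification of the Holevo quantity $S(\rho_R)-\sum_x P_X(x)S(\rho_R^x)$ with $S(X;R)$ of the cq-state via the block-diagonal entropy decomposition is exactly right, the reduction of the inequality to monotonicity of quantum mutual information under the local measurement channel (equivalently, strong subadditivity) is the accepted route, and the ``if'' direction of the equality claim is correctly settled by exhibiting the measurement in the common eigenbasis, where $H(Y)=S(\rho_R)$ and $H(Y|X{=}x)=S(\rho_R^x)$. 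Deferring the ``only if'' direction to Ruskai's equality conditions for strong subadditivity is legitimate and is in fact precisely what the paper's own citation of \cite{Ruskai02} is doing; just be aware that this is the one step you have not actually proved, and that it is genuinely nontrivial (it requires the structural characterization of states saturating data processing, not merely the inequality itself).
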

Note that if all states in the ensemble are pure and all different
then in order to achieve equality in the theorem above, they have to
form an orthonormal basis of the space they span. In this case, the
variable $Y$ achieving equality is the measurement outcome in this
orthonormal basis.

\subsection{Markov Chains} We say that three classical random variables $X,Y,Z$ with joint distribution $P_{XYZ}$ form a \emph{Markov chain} $X \leftrightarrow Y \leftrightarrow Z$, if $X$ and $Z$ are independent given $Y$, i.e.,~$P_{XZ|Y}=P_{X|Y} \cdot P_{Z|Y}$. Equivalent conditions are $P_{X|YZ} = P_{X|Y}$ or $P_{Z|YX} = P_{Z|Y}$~\cite{CT91}.
Markov chains with quantum ends have been defined in~\cite{DFSS07} and used in subsequent works such
as~\cite{FS09}. For a ccq-state $\rho_{XYR} = \sum_{x,y}P_{X Y}(x,y)\proj{x} \otimes \proj{y} \otimes \rho_{R}^{x,y}$, we say that $X$, $Y$, $R$ form a Markov chain $X \leftrightarrow Y \leftrightarrow R$, if $\rho_{XYR} = \sum_{x,y}P_{X Y}(x,y)\proj{x} \otimes \proj{y} \otimes \rho_{R}^y$, i.e., the quantum register $R$ depends only on the classical variable $y$ but not on $x$. 
\begin{lemma} \label{lem:qmarkov}
For a ccq-state $\rho_{XYR}$, the following conditions are equivalent:
\begin{enumerate}
\item $X \leftrightarrow Y \leftrightarrow R$
\item $S(X|YR) = S(X|Y)$
\item $S(R|YX) = S(R|Y)$
\item $S(X;YR) = I(X;Y)$\enspace .
\end{enumerate}
\end{lemma}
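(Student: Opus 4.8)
The plan is to prove the equivalence of the four conditions in Lemma~\ref{lem:qmarkov} by establishing a cycle of implications. The cleanest route is to prove $(1) \Rightarrow (2) \Rightarrow (4) \Rightarrow (1)$ as one loop and $(2) \Leftrightarrow (3)$ separately, since conditions $(2)$ and $(3)$ are symmetric rewritings that will fall out of the chain-rule identity for mutual information. Throughout I will work with the ccq-state $\rho_{XYR}$ and use the definitions $S(X;YR) = S(X) - S(X|YR)$ and $I(X;Y) = H(X) - H(X|Y) = S(X) - S(X|Y)$, noting that for a classical register $X$ the quantum quantities reduce to the Shannon ones. The key observation making several steps routine is that conditioned on the classical value $y$, the reduced state on $X$ and $R$ is exactly $\sum_x P_{X|Y}(x|y)\proj{x} \otimes \rho_R^{x,y}$, so the Markov condition $(1)$ says precisely that $\rho_R^{x,y} = \rho_R^y$ is independent of $x$ for each $y$.

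First I would show $(1) \Rightarrow (2)$. Under condition~$(1)$, for each fixed $y$ the state is a product $\rho_{X|y} \otimes \rho_R^y$, so $R$ carries no information about $X$ once $Y$ is fixed; formally $S(X|YR) = \sum_y P_Y(y) S(X|R)_{\rho^y}$, and since the $\rho_R^y$ is independent of $x$, this collapses to $\sum_y P_Y(y) H(X|Y{=}y) = S(X|Y) = H(X|Y)$. The equivalence $(2) \Leftrightarrow (4)$ is then immediate by subtracting both sides of $S(X|YR) = S(X|Y)$ from $S(X)$, using the two expansions of mutual information above; no real content is needed here beyond the definitions. The equivalence $(2) \Leftrightarrow (3)$ follows from the chain rule $S(X;YR) - S(X;Y) = S(R;X|Y)$ (expand both conditional mutual informations via the definition $S(A;B|C) = S(A|C) - S(A|BC)$), so condition~$(2)$, condition~$(3)$, and condition~$(4)$ all assert the vanishing of the same conditional mutual information $S(R;X|Y) = 0$.

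The step I expect to be the main obstacle is closing the loop with $(4) \Rightarrow (1)$ (equivalently, showing that $S(R;X|Y) = 0$ forces the product structure $\rho_R^{x,y} = \rho_R^y$), because this is the direction that requires an equality condition rather than a monotonicity inequality. The natural tool is strong subadditivity: the quantity $S(R;X|Y)$ is nonnegative, and I would argue that its vanishing, for a state that is classical on both $X$ and $Y$, forces $\rho_{XYR}$ into the Markov form. Concretely, I would condition on each classical value $y$ and reduce the problem to the following statement about the cq-state $\sigma_{XR} = \sum_x P_{X|Y}(x|y)\proj{x}\otimes\rho_R^{x,y}$: if $I(X;R)_\sigma = 0$ then $\rho_R^{x,y}$ is the same for all $x$ in the support. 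This is exactly the equality case of the Holevo-type bound specialized to a classical conditioning register, and it can be proven directly---for instance by noting that $S(R|X)_\sigma = S(R)_\sigma$ together with concavity of von Neumann entropy forces all the conditional states to coincide---so I would invoke the equality analysis of mutual information (alternatively citing strong subadditivity's equality condition) to conclude condition~$(1)$, completing the cycle.
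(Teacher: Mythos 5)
Your proof is correct, and it takes a genuinely different (and in one respect tighter) route than the paper's. The paper first reduces to the case where each conditional state $\rho_R^{x,y}$ is pure by absorbing the spectral index into $Y$, then checks the forward direction via the identities $S(XYR)=S(XY)$ and $S(YR)=S(Y)$, and argues the converse operationally: if the Markov condition fails, some $\rho_R^{x,y}\neq\rho_R^{x',y}$ can be partially distinguished, so a measurement on $YR$ reveals strictly more about $X$ than $Y$ alone, whence $S(X|YR)\neq S(X|Y)$. You instead observe that conditions $(2)$, $(3)$, $(4)$ all say $S(X;R|Y)=0$ (chain rule plus symmetry of conditional mutual information), and since $Y$ is classical this quantity decomposes as $\sum_y P_Y(y)\,S(X;R)_{\sigma^y}$ with each term a Holevo quantity of the ensemble $\{P_{X|Y}(x|y),\rho_R^{x,y}\}_x$; its vanishing forces $\rho_R^{x,y}=\rho_R^y$ on the support. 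This buys you a fully self-contained converse: the paper's "a distinguishing measurement yields strictly more information" step implicitly needs that a positive bias gives strictly positive mutual information and then the Holevo bound to lift this to the von Neumann quantity, whereas your argument needs only the standard fact that $\chi=0$ iff all ensemble states coincide (strict concavity of $S$, or equivalently $\chi=\sum_x p_x D(\rho_R^x\|\rho_R)$ with Klein's inequality --- do cite strictness, as mere concavity does not suffice). You also correctly note that the full equality analysis of strong subadditivity is not needed since the conditioning register is classical, and you avoid the paper's WLOG reduction to pure conditional states entirely.
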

\begin{proof}
For fixed $x,y$, we can diagonalize $\rho_R^{x,y} = \sum_k \lambda_k^{x,y} \proj{\varphi_k^{x,y}}_R$. By redefining the random variable $Y$ to be $(YK)$ with joint distribution $P_{X(YK)}(x,yk) = P_{XY}(x,y) \lambda^{x,y}_k$, we can assume without loss of generality that $\rho_R^{x,y}=\proj{\varphi^{x,y}}_R$ is a pure state for every fixed $x,y$. 
In that case, it is easy to check that $X \leftrightarrow Y \leftrightarrow R$ implies the other three conditions, because $S(XYR)=S(XY)$ and $S(YR)=S(Y)$. 

On the other hand, if $X \leftrightarrow Y \leftrightarrow R$ does not hold, there exist $x\neq x'$ and $y$ such that $\rho^{x,y}_R \neq \rho^{x',y}_R$. Hence, there exists a measurement on registers $YR$ that reveals more information about $X$ than just knowing $Y$, which implies $S(X|YR) \neq S(X|Y)$. The other implications can be shown similarly.
\qed
\end{proof}

\subsection{Dependent Part}  The following definition introduces a
random variable describing the correlation between two random
variables $X$ and $Y$, obtained by collapsing all values $x_1$ and
$x_2$ for which $Y$ has the same conditional distribution, to a
single value.
\begin{definition}[Dependent part~\cite{WW04}]
\label{dep_part}
For two random variables $X,Y$, let $f_X(x) \assign P_{Y|X=x}$. Then the
\emph{dependent part of $X$ with respect to $Y$} is defined as $\dep{X}{Y} \assign f_X(X)$.
\end{definition}
The dependent part $\dep{X}{Y}$ is the minimum random variable among
the random variables computable from $X$ for which $X\leftrightarrow
\dep{X}{Y} \leftrightarrow Y$ forms a Markov chain \cite{WW04}. In
other words, for any random variable $K=f(X)$ such that
$X\leftrightarrow K \leftrightarrow Y$ is a Markov chain, there exists
a function $g$ such that $g(K)=\dep{X}{Y}$.  Immediately from the
definition we get several other properties of
$\dep{X}{Y}$~\cite{WW04}: $H(Y|\dep{X}{Y})=H(Y|X)$,
$I(X;Y)=I(\dep{X}{Y};Y)$, and $\dep{X}{Y}=\dep{X}{(\dep{Y}{X})}$. The
second and the third formula yield
$I(X;Y)=I(\dep{X}{Y};\dep{Y}{X})$. For two random variables $X$ and
$Z$, we write $X \equiv Z$ if $X$ and $Z$ have the same distributions
(over possibly different alphabets). In particular, we write $X \equiv
\dep{X}{Y}$ if the random variable $X$ consists only of the dependent
part $\dep{X}{Y}$ with respect to $Y$.

The notion of dependent part has been further investigated in
\cite{FWW04,IMNW04,WW05a}.  Wullschleger and Wolf have shown that
quantities $H(X\searrow Y|Y)$ and $H(Y\searrow X|X)$ are monotones for
two-party computation~\cite{WW05a}. That is, none of these values can
increase during classical two-party protocols. In particular, if Alice
and Bob start a protocol from scratch then classical two-party
protocols can only produce $(X,Y)$ such that: $H(X\searrow
Y|Y)=H(Y\searrow X|X)=0$, since $H(X\searrow Y|Y)>0$ if and only if
$H(Y\searrow X|X)>0$~\cite{WW05a}.  Conversely, any primitive
satisfying $H(X\searrow Y|Y)=H(Y\searrow X|X)=0$ can be implemented
securely in the honest-but-curious (HBC) model. We call such
primitives \emph{trivial}\footnote{See Footnote~\ref{foot:caveat} for
  a caveat about this terminology.}.

\subsection{Connected Components}\label{connected}
Another property of a joint probability distribution $P_{XY}$ which we require is the notion of \emph{connected components}, as in~\cite[Def.~1]{WW04}.
 
\begin{definition} \label{def:connected}
Let $X$ and $Y$ be random variables with (disjoint) ranges $\mathcal{X}$ and $\mathcal{Y}$, distributed according to $P_{XY}$. Consider the bipartite graph $G$ with vertex set $\mathcal{X} \cup \mathcal{Y}$ such that two vertices $x \in \mathcal{X}$ and $y \in \mathcal{Y}$ are connected by an edge iff $P_{XY}(x,y) > 0$ holds. We call the edge sets $\mathcal{C}_1, \ldots \mathcal{C}_\ell$ of connected components of the graph $G$ the \emph{connected components of $P_{XY}$}.
\end{definition}

In this way, the joint distribution $P_{XY}$ can be split into $\ell$ distributions $\{P_{X_j,Y_j}\}_{j=1}^\ell$. For every $j$, $P_{X_j,Y_j}$ is a distribution with a single component over alphabet $\mathcal{X}_j \times \mathcal{Y}_j$, where 
$\mathcal{X}$ is the disjoint union of the $\mathcal{X}_j$ and
$\mathcal{Y}$ the disjoint union of the $\mathcal{Y}_j$. We denote by
the random variable $C$ the component of $XY$, resulting in the joint distribution
$P_{CXY}$. Then, $P_C(j) = \sum_{xy \in \mathcal{C}_j} P_{XY}(x,y) = \sum_{x \in \mathcal{X}_j} P_X(x) = \Pr(X
\in \mathcal{X}_j) = \sum_{y \in \mathcal{Y}_j} P_Y(y) = 
\Pr(Y \in \mathcal{Y}_j)$ is the probability that $XY$ ends up in
component $\mathcal{C}_j$ (which is the same as the probability that $X$ ends up
in $\mathcal{X}_j$ and that $Y$ ends up in $\mathcal{Y}_j$).
Note that $C$ is a deterministic function of $X$ (and also of $Y$), hence 
\begin{align} 
\begin{split} \label{eq:component}
I(X;Y) &= H(Y) - H(Y|X) = H(YC) - H(Y|XC) = H(C) + H(Y|C) - H(Y|XC)\\ 
&= H(C) + I(X;Y|C)\enspace .
\end{split}
\end{align}

\subsection{Purification}
\label{purification}
All security questions we ask are with respect to \emph{(quantum)
  honest-but-curious} adversaries. In the classical honest-but-curious
adversary model (HBC), the parties follow the instructions of a
protocol but store all information available to them. Quantum
honest-but-curious adversaries (QHBC), on the other hand, are allowed
to behave in an arbitrary way that cannot be distinguished from their
honest behavior by the other player.

Almost all impossibility results in quantum cryptography rely upon a
quantum honest-but-curious behavior of the adversary.  This behavior
consists in {\em purifying} all actions of the honest
players. Purifying means that instead of invoking classical randomness
from a random tape, for instance, the adversary relies upon quantum
registers holding all random bits needed. The operations to be
executed from the random outcome are then performed quantumly without
fixing the random outcomes.  For example, suppose a protocol instructs
a party to pick with probability $p$ state $\ket{\phi^0}_C$ and with
probability $1-p$ state $\ket{\phi^1}_C$ before sending it to the
other party through the quantum channel $C$. The purified version of
this instruction looks as follows: Prepare a quantum register in state
$\sqrt{p}\ket{0}_R+\sqrt{1-p}\ket{1}_R$ holding the random
process. Add a new register initially in state $\ket{0}_C$ before
applying the unitary transform $U:\ket{r}_R\ket{0}_C \mapsto
\ket{r}_R\ket{\phi^r}_C$ for $r\in\{0,1\}$, send register $C$
through the quantum channel and keep register $R$.

>From the receiver's point of view, the purified behavior is
indistinguishable from the one relying upon a classical source of
randomness because in both cases, the state of register $C$ is
$\rho=p\proj{\phi^0}+(1-p)\proj{\phi^1}$. All operations invoking
classical randomness can be purified
similarly~\cite{LC97,Mayers97,Lo97,Kent04}. The result is that
measurements are postponed as much as possible and only extract
information required to run the protocol in the sense that only when
both players need to know a random outcome, the corresponding quantum
register holding the random coin will be measured.  If both players
purify their actions then the joint state at any point during the
execution will remain pure, until the very last step of the protocol
when the outcomes are measured.

\subsection{Secure Two-Party Computation}
\label{prel.crypto}
In Section~\ref{sec:primleakage}, we investigate the leakage of
several universal cryptographic two-party primitives. By universality
we mean that any two-party secure function evaluation can be reduced
to them. We investigate the completely randomized versions where
players do not have inputs but receive randomized outputs instead.
Throughout this paper, the term \emph{primitive} usually refers to the
joint probability distribution defining its randomized version. Any
protocol implementing the standard version of a primitive (with
inputs) can also be used to implement a randomized version of the same
primitive, with the ``inputs'' chosen according to an arbitrary fixed
probability distribution.

\section{Two-Party Protocols and Their Embeddings}\label{qembprotocols}

\subsection{\CorrectNess}\label{protocols}
In this work, we consider \emph{cryptographic primitives} providing
$X$ to honest player Alice and $Y$ to honest player Bob according to a
joint probability distribution $P_{X,Y}$. The goal of this section is
to define when a protocol $\pi$ \emph{correctly implements} the
primitive $P_{X,Y}$. The first natural requirement is that once the
actions of $\pi$ are purified by both players, measurements of
registers $A$ and $B$ in the computational basis\footnote{It is clear
  that every quantum protocol for which the final measurement
  (providing $(x,y)$ with distribution $P_{X,Y}$ to the players) is
  not in the computational basis can be transformed into a protocol of
  the described form by two additional local unitary transformations.}
provide joint outcome $(X,Y)=(x,y)$ with probability $P_{X,Y}(x,y)$.

Protocol $\pi$ can use extra registers $A'$ on Alice's and $B'$ on
Bob's side providing them with (quantum) working space.  The
purification of all actions of $\pi$ therefore generates a pure state
$\ket{\psi}\in \hil_{AB}\otimes \hil_{A'B'}$. A second requirement for
the correctness of the protocol $\pi$ is that these extra registers
are only used as working space, i.e.~the final state
$\ket{\psi}_{ABA'B'}$ is such that the content of Alice's working
register $A'$ does not give her any further information about Bob's
output $Y$ than what she can infer from her honest output $X$ and vice
versa for $B'$. Formally, we require that $S(XA' ; Y)=I(X;Y)$ and $S(X
; YB')=I(X;Y)$. By Lemma~\ref{lem:qmarkov}, the two conditions are equivalent to requiring $A' \leftrightarrow X \leftrightarrow Y \leftrightarrow B'$ to be a Markov chain.

\begin{definition}\label{defcorrect}
\label{correct}
A protocol $\pi$ for $P_{X,Y}$ is \emph{\correct}\ if measuring
registers $A$ and $B$ of its final state in the computational basis
yields outcomes $X$ and $Y$ with distribution $P_{X,Y}$ and the final
state satisfies $S(X; YB') = S(XA'; Y) =I(X;Y)$ where $A'$ and $B'$
denote the extra working registers of Alice and Bob. The state
$\ket{\psi}\in \hil_{AB} \otimes \hil_{A'B'}$ is called an
\emph{embedding of $P_{X,Y}$} if it can be produced by the
purification of a \correct\ protocol for $P_{X,Y}$.
\end{definition}
We would like to point out that our definition of correctness is
stronger than the usual classical notion which only requires the
correct distribution for the output of the honest players. For example,
the trivial classical protocol for the primitive $P_{X,Y}$ in which
Alice samples both player's outputs $XY$, sends $Y$ to Bob, but keeps
a copy of $Y$ for herself, is not \emph{\correct}\ because it 
implements a fundamentally different primitive,
namely $P_{XY,Y}$. Definition~\ref{defcorrect} requires that any protocol
for $P_{X,Y}$  leaks no information beyond $I(X;Y)$
to any party having measured its output $X$ or $Y$.

\subsection{Regular Embeddings} \label{sec:embed} We call an embedding
$\ket{\psi}_{ABA'B'}$ \emph{regular} if the working registers $A',B'$
are empty. Formally, let ${\Theta}_{n,m} \assign
\{\theta:\{0,1\}^n\times\{0,1\}^m \rightarrow [0\ldots2\pi) \}$ be the
set of functions mapping bit-strings of length $m+n$ to real numbers
between $0$ and $2\pi$.

\begin{definition}
For a joint probability distribution $P_{X,Y}$ where $X\in\{0,1\}^n$
and $Y\in\{0,1\}^m$, we define the set
\[ \emb{P_{X,Y}} \assign \left\{\ket{\psi} \in \hil_{AB} : \ket{\psi}
  = \!\!\!\!\!\!\!\! \sum_{x\in\{0,1\}^n \! , \, y \in\{0,1\}^m}
  \!\!\!\!\!\!\!\!
  \expe{i\theta(x,y)}\sqrt{P_{X,Y}(x,y)}\ket{x,y}_{AB} \, , \theta\in
  {\Theta}_{n,m} \right\} \, ,
\]
and call any state $\ket{\psi}\in \emb{P_{X,Y}}$ a {\em regular
  embedding} of the joint probability distribution $P_{X,Y}$ .
\end{definition}

Clearly, any $\ket{\psi}\in \emb{P_{X,Y}}$ produces $(X,Y)$ with
distribution $P_{X,Y}$ since the probability that Alice measures $x$
and Bob measures $y$ in the computational basis is
$|\bracket{\psi}{x,y}|^2= P_{X,Y}(x,y)$. In order to specify a
particular regular embedding one only needs to give the description of
the {\em phase function} $\theta(x,y)$. We denote by
$\ket{\psi_\theta}\in \emb{P_{X,Y}}$ the quantum embedding of
$P_{X,Y}$ with phase function~$\theta$.  The constant function
$\theta(x,y) \assign 0$ for all $x\in\{0,1\}^n, y\in\{0,1\}^m$
corresponds to what we call {\em canonical embedding}
$\ket{\psi_{\vec{0}}} \assign
\sum_{x,y}\sqrt{P_{X,Y}(x,y)}\ket{x,y}_{AB}$.

In Lemma~\ref{super_leak} below we show that every primitive $P_{X,Y}$
has a regular embedding which is in some sense the most secure among
all embeddings of $P_{X,Y}$.

\subsection{Trivial Classical Primitives and Trivial Embeddings}
In this section, we define \emph{triviality} of classical primitives
and (bipartite) embeddings. We show that for any non-trivial classical
primitive, its canonical quantum embedding is also non-trivial.
Intuitively, a primitive $P_{X,Y}$ is {\em trivial} if $X$ and $Y$ can
be generated by Alice and Bob from scratch in the classical
honest-but-curious (HBC) model\footnote{See Footnote~\ref{foot:caveat}
  for a caveat about this terminology.}. Formally, we define
triviality via an entropic quantity based on the notion of
\emph{dependent part} (see Section~\ref{chap:prelim}).
\begin{definition}
  A primitive $P_{X,Y}$ is called {\em trivial} if it satisfies
  $H(\dep{X}{Y}|Y)=0$, or equivalently,
  \mbox{$H(\dep{Y}{X}|X)=0$}. Otherwise, the primitive is called {\em
    non-trivial}.
\end{definition}

\begin{definition}
  A regular embedding $\ket{\psi}_{AB}\in \emb{P_{X,Y}}$ is called
  {\em trivial} if either $S(\dep{X}{Y}|B)=0$ or
  $S(\dep{Y}{X}|A)=0$. Otherwise, we say that $\ket{\psi}_{AB}$ is
  {\em non-trivial}.
\end{definition}
Notice that unlike in the classical case,
$S(\dep{X}{Y}|B)=0\Leftrightarrow S(\dep{Y}{X}|A)=0$ does not hold in
general. As an example, consider a shared quantum state where the
computational basis corresponds to the Schmidt basis for only one of
its subsystems, say for $A$. Let
$\ket{\psi}=\alpha\ket{0}_A\ket{\xi_0}_B+\beta\ket{1}_A\ket{\xi_1}_B$
be such that both subsystems are two-dimensional,
$\{\ket{\xi_0},\ket{\xi_1}\}\neq \{\ket{0},\ket{1}\}$,
$\bracket{\xi_0}{\xi_1}=0$, and $|\bracket{\xi_0}{0}|\neq
|\bracket{\xi_1}{0}|$.  We then have $S(X|B)=0$ and $S(Y|A)>0$ while
$X\equiv \dep{X}{Y}$ and $Y\equiv \dep{Y}{X}$.

To illustrate this definition of triviality, we argue in the following
that if a primitive $P_{X,Y}$ has a trivial regular embedding, there
exists a classical protocol which generates $X,Y$ securely in the HBC
model.  Let $\ket{\psi}\in \emb{P_{X,Y}}$ be trivial and assume
without loss of generality that $S(\dep{Y}{X}|A)=0$.  Intuitively,
this means that Alice can learn everything possible about Bob's
outcome $Y$ ($Y$ could include some private coin-flips on Bob's side,
but that is ``filtered out'' by the dependent part). More precisely,
Alice holding register $A$ can measure her part of the shared state to
completely learn a realization of $\dep{Y}{X}$, specifying
$P_{X|Y=y}$. 
She then chooses $X$ according to the distribution $P_{X|Y=y}$.  An
equivalent way of trivially generating $(X,Y)$ classically is the
following classical protocol:
\begin{enumerate}
\item Alice samples $y'$ from distribution $P_{\dep{Y}{X}}$ and
  announces the outcome to Bob.
\item Alice samples $x$ from distribution $P_{X|\dep{Y}{X}=y'}$.
\item Bob samples $y$ from distribution $P_{Y|\dep{Y}{X}=y'}$.
\end{enumerate}
Of course, the same reasoning applies in case $S(\dep{X}{Y}|B)=0$ with
the roles of Alice and Bob reversed. 

In fact, the following lemma
%  (\append{proven in
  % Appendix~\ref{app:triviality}}{whose proof can be found in the full
  % version~\cite{SSS09arxiv}})
shows that any non-trivial primitive
$P_{X,Y}$ has a non-trivial embedding, i.e.~there exists a quantum
protocol \correctly\ implementing $P_{X,Y}$ while leaking less
information to QHBC adversaries than any classical protocol for
$P_{X,Y}$ in the HBC model.
\begin{lemma}
\label{nontrivialemb}
If $P_{X,Y}$ is a non-trivial primitive then the canonical embedding
$\ket{\psi_{\vec{0}}}\in \emb{P_{X,Y}}$ is also non-trivial.
\end{lemma}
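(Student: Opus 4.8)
The plan is to prove the contrapositive of each of the two quantum triviality conditions separately and then combine them using the symmetry already present in the classical triviality notion. Recall from the discussion of the Wolf--Wullschleger monotones in Section~\ref{chap:prelim} that $H(\dep{X}{Y}|Y)=0$ holds if and only if $H(\dep{Y}{X}|X)=0$, so a non-trivial primitive satisfies \emph{both} $H(\dep{X}{Y}|Y)>0$ and $H(\dep{Y}{X}|X)>0$. Dually, the canonical embedding $\ket{\psi_{\vec 0}}$ is non-trivial precisely when \emph{both} $S(\dep{X}{Y}|B)>0$ and $S(\dep{Y}{X}|A)>0$. Hence it suffices to establish the single implication $H(\dep{Y}{X}|X)>0 \Rightarrow S(\dep{Y}{X}|A)>0$ for $\ket{\psi_{\vec 0}}$; the companion implication $H(\dep{X}{Y}|Y)>0 \Rightarrow S(\dep{X}{Y}|B)>0$ then follows by exchanging the roles of Alice and Bob.

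I would prove this in contrapositive form, $S(\dep{Y}{X}|A)=0 \Rightarrow H(\dep{Y}{X}|X)=0$, and begin by describing the relevant cq-state. Measuring register $B$ of $\ket{\psi_{\vec 0}}$ in the computational basis yields outcome $y$ with probability $P_Y(y)$ and collapses $A$ to the pure state $\ket{\alpha_y} \assign \frac{1}{\sqrt{P_Y(y)}}\sum_x \sqrt{P_{X,Y}(x,y)}\ket{x}_A$. Writing $V\assign\dep{Y}{X}=f_Y(Y)$ and grouping the $y$'s by the value $v=f_Y(y)$, the state of $A$ conditioned on $V=v$ is $\rho_A^v \propto \sum_{y:\,f_Y(y)=v} P_Y(y)\proj{\alpha_y}$, whose support equals $\mathrm{span}\{\ket{\alpha_y}:f_Y(y)=v\}$. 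This gives the cq-state $\rho_{VA}=\sum_v P_V(v)\proj{v}\otimes\rho_A^v$ whose conditional entropy is the quantity $S(\dep{Y}{X}|A)$ under study.

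The key step is the standard fact that for such a cq-state, $S(V|A)=0$ if and only if the conditional states $\{\rho_A^v\}_v$ have pairwise orthogonal supports (this is the equality case of $S(\rho_A)\le H(V)+\sum_v P_V(v)S(\rho_A^v)$, equivalently the equality case of the Holevo bound of Theorem~\ref{holevo}). Assuming $S(\dep{Y}{X}|A)=0$, the supports of the $\rho_A^v$ are therefore orthogonal, which forces $\bracket{\alpha_y}{\alpha_{y'}}=0$ whenever $f_Y(y)\neq f_Y(y')$. Now observe that $\bracket{\alpha_y}{\alpha_{y'}} = \frac{1}{\sqrt{P_Y(y)P_Y(y')}}\sum_x \sqrt{P_{X,Y}(x,y)P_{X,Y}(x,y')}$ is a sum of non-negative terms, so it vanishes only if $y$ and $y'$ share no common $x$ (no $x$ with both $P_{X,Y}(x,y)>0$ and $P_{X,Y}(x,y')>0$). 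Applying this to every pair of $y$-values lying in distinct $V$-groups shows that each $x$ is compatible only with $y$'s belonging to a single group, so $V=\dep{Y}{X}$ is a deterministic function of $X$; that is, $H(\dep{Y}{X}|X)=0$, as required.

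The main obstacle I anticipate is the bookkeeping around the two equivalences: (i) that $S(V|A)=0$ is exactly orthogonality of the conditional supports, and (ii) that this orthogonality matches the combinatorial ``no shared $x$ across distinct $\dep{Y}{X}$-groups'' condition and hence $H(\dep{Y}{X}|X)=0$. Point (i) should be derived carefully, since it is the non-negativity of all the inner products $\bracket{\alpha_y}{\alpha_{y'}}$ that rules out accidental cancellations and lets orthogonality of the spans reduce to pairwise orthogonality of the spanning vectors. A minor but necessary point is to restrict throughout to $x,y$ in the supports of $P_X,P_Y$, so that $\ket{\alpha_y}$ and the conditional distributions $P_{X|Y=y}$ defining $f_Y$ are well defined.
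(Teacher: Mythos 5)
Your proof is correct, but it runs in the opposite logical direction from the paper's and is organized quite differently. The paper proves the implication directly: it first reduces (somewhat sketchily, via local unitaries) to the case $X\equiv \dep{X}{Y}$, $Y\equiv \dep{Y}{X}$, then uses non-triviality to exhibit a single witness pair $x_0\neq x_1$ sharing a common $y_0$, so that the conditional states $\ket{\varphi^{x_0}},\ket{\varphi^{x_1}}$ on Bob's side are neither identical nor orthogonal (non-orthogonality coming, exactly as in your argument, from the non-negativity of the canonical amplitudes); imperfect distinguishability then gives $S(B)<H(X)$, i.e.\ $S(\dep{X}{Y}|B)>0$. You instead prove the contrapositive $S(\dep{Y}{X}|A)=0\Rightarrow H(\dep{Y}{X}|X)=0$: you invoke the equality condition of the mixing bound $S(\rho_A)\le H(V)+\sum_v P_V(v)S(\rho_A^v)$ to force orthogonal supports across the fibers of $f_Y$, and then use non-negativity of the amplitudes to turn pairwise orthogonality of the $\ket{\alpha_y}$ into the combinatorial statement that no $x$ is compatible with $y$'s in two different fibers, which is precisely classical triviality. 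Your route avoids the paper's WLOG reduction to dependent parts entirely (you handle the grouping by $f_Y$ explicitly), at the price of needing the full equality characterization of the mixture bound rather than just its strict version for one non-orthogonal pair; both arguments ultimately rest on the same structural fact that all inner products of conditional states of the canonical embedding are sums of non-negative terms.

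One small correction: the equality case of $S(\rho_A)\le H(V)+\sum_v P_V(v)S(\rho_A^v)$ (equivalently $S(V|A)=0$) is \emph{orthogonality of the supports} of the $\rho_A^v$, which is not the same as the equality case of the Holevo bound as stated in Theorem~\ref{holevo} (simultaneous diagonalizability, a strictly weaker condition --- e.g.\ identical maximally mixed conditional states are simultaneously diagonalizable yet give $S(V|A)=H(V)>0$). Since the fact you actually use is the orthogonal-supports characterization, the proof is unaffected; just drop the parenthetical identification with the Holevo bound or cite the mixture-entropy bound instead.
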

\begin{proof}
A non-trivial embedding of $P_{X,Y}$ can be created from a non-trivial embedding of  $P_{\dep{X}{Y},\dep{Y}{X}}$ by applying local unitary transforms.  We therefore assume
without loss of generality that $X\equiv \dep{X}{Y}$ and $Y\equiv \dep{Y}{X}$.
Let
$$\ket{\psi_{\vec{0}}} \assign \sum_{x,y}\sqrt{P_{X,Y}(x,y)}\ket{x,y}$$ be the canonical
embedding of $P_{X,Y}$.
Since $X\equiv \dep{X}{Y}$ and $Y\equiv \dep{Y}{X}$, it holds for any
$x_0\neq x_1$ that $P_{Y|X=x_0}\neq P_{Y|X=x_1}$. Furthermore, since $P_{X,Y}$
is non-trivial, there exist $x_0\neq x_1$ and $y_0$ such that $P_{Y|X=x_0}(y_0)>0$
and $P_{Y|X=x_1}(y_0)>0$.
The state $\ket{\psi_{\vec{0}}}$ can be written in the form:
\begin{eqnarray*}
\ket{\psi_{\vec{0}}}&=&\sqrt{P_{X}(x_0)}\ket{x_0}\sum_{y}\sqrt{P_{Y|X=x_0}(y)}\ket{y}+\sqrt{P_{X}(x_1)}\ket{x_1}\sum_{y}\sqrt{P_{Y|X=x_1}(y)}\ket{y}+\ket{\psi'}\enspace ,
\end{eqnarray*}
where
$\tr(\proj{x_0}\tr_B\proj{\psi'})=\tr(\proj{x_1}\tr_B\proj{\psi'})=0$.
Set $\ket{\varphi^{x_b}}\assign \sum_y \sqrt{P_{Y|X=x_b}(y)}\ket{y}$ for $b\in\{0,1\}$.
Since $P_{Y|X=x_0}\neq P_{Y|X=x_1}$, we get that
$|\bracket{\varphi^{x_0}}{\varphi^{x_1}}|<1$.  Because all coefficients
at $\ket{y}$ in the normalized vectors $\ket{\varphi^{x_0}}$ and
$\ket{\varphi^{x_1}}$  are non-negative, and the coefficients at
$\ket{y_0}$ are both positive,
$\bracket{\varphi^{x_0}}{\varphi^{x_1}}\neq 0$. Therefore, the
non-identical states $\ket{\varphi^{x_0}}$ and $\ket{\varphi^{x_1}}$
cannot be perfectly distinguished, which implies that Bob cannot learn
whether $X=x_0$ or $X=x_1$ with probability 1. Therefore, the von
Neumann entropy on Bob's side $S(B)$ is such that $S(B)<H(X)$. As shown in~\cite{WW05a}, 
$H(\dep{X}{Y}|Y)>0$ implies $H(\dep{Y}{X}|X)>0$, and we can argue in the
same way as above that $S(A)<H(Y)$ from which follows that
$\ket{\psi_{\vec{0}}}$ is a non-trivial quantum embedding of
$P_{X,Y}$.  \qed
\end{proof}

\section{The Leakage of Quantum Embeddings}\label{crypto}

In this section, we formally define the leakage of embeddings and establish properties
of the leakage. 
% The proofs of all statements in this section can be found in
% \append{Appendix~\ref{app:crypto_with_emb}}{the full
%   version~\cite{SSS09arxiv}}.

\subsection{Definition and Basic Properties of Leakage}\label{sleakage}
A perfect implementation of $P_{X,Y}$ simply provides $X$ to Alice and
$Y$ to Bob and does nothing else. The expected amount of information
that one random variable gives about the other is $I(X;Y)=
H(X)-H(X|Y)=H(Y)-H(Y|X) = I(Y;X)$. Intuitively, we define the {\em
  leakage of a quantum embedding $\ket{\psi}_{ABA'B'}$ of $P_{X,Y}$}
as the larger of the two following quantities: the extra amount of
information Bob's quantum registers $BB'$ provide about $X$ and the
extra amount Alice's quantum state in $AA'$ provides about $Y$
respectively in comparison to ``the minimum amount''
$I(X;Y)$.\footnote{\label{foot:guess}There are other natural
  candidates for the notion of leakage such as the difference in
  difficulty between guessing Alice's output $X$ by measuring Bob's
  final quantum state $B$ and based on the output of the ideal
  functionality $Y$. While such definitions do make sense, they turn
  out not to be as easy to work with and it is an open question
  whether the natural properties described later in this section can
  be established for these notions of leakage as well.}

\begin{definition}
Let $\ket{\psi}\in \hil_{ABA'B'}$ be an embedding of $P_{X,Y}$. 
We define the leakage $\ket{\psi}$ as
$$\Delta_{\psi}(P_{X,Y}):=\max \left\{ S(X;BB')-I(X;Y) \, , \, S(AA';Y)-I(X;Y) \right\}\, .$$
Furthermore, we say that $\ket{\psi}$ is 
{\em $\delta$-leaking} if $\Delta_{\psi}(P_{X,Y})\geq\delta$ .
\end{definition}

It is easy to see that the leakage is non-negative since $S(X;BB')\geq
S(X;\tilde{B})$ for $\tilde{B}$ the result of a quantum operation
applied to $BB'$.  Such an operation could be the trace over the extra
working register $B'$ and a measurement in the computational basis of
each qubit of the part encoding $Y$, yielding
$S(X;\tilde{B})=I(X;Y)$.

We want to argue that our notion of leakage is a good measure for the
privacy of the player's outputs. In the same spirit, we will argue
that the minimum achievable leakage for a primitive is related to the
``hardness'' of implementing it. We start off by proving several basic
properties about leakage.

For a general state in $\hil_{ABA'B'}$ the quantities
$S(X;BB')-I(X;Y)$ and $S(AA';Y)-I(X;Y)$ are not necessarily
equal. Note though that they coincide for regular embeddings
$\ket{\psi}\in\emb{P_{X,Y}}$ produced by a \correct\ protocol (where the
work spaces $A'$ and $B'$ are empty): Notice that $ S(X;B) =
S(X)+S(B)-S(X,B) = H(X)+S(B) - H(X) = S(B)$ and because $\ket{\psi}$
is pure, $S(A)=S(B)$. Therefore, $S(X;B)=S(A;Y)$ and the two
quantities coincide. The following lemma states that this actually
happens for \emph{all} embeddings and hence, the definition
of leakage is symmetric with respect to both players.

\begin{lemma}[Symmetry]\label{symmth}
\label{symmetry}
Let $\ket{\psi}\in \hil_{ABA'B'}$ be an embedding of
$P_{X,Y}$. Then,
$$\Delta_\psi(P_{X,Y})=S(X;BB')-I(X;Y)=S(AA';Y)-I(X;Y)\enspace .$$
\end{lemma}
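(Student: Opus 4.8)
The plan is to prove the single identity $S(X;BB')=S(AA';Y)$; once this holds, both arguments of the maximum in the definition of $\Delta_\psi$ coincide after subtracting $I(X;Y)$, so $\Delta_\psi$ equals each of them and the lemma follows immediately. I would reduce this identity to a statement purely about the working registers, namely $S(A'|X)=S(B'|Y)$, and then settle the latter using the \correctness\ (Markov-chain) conditions together with the purity of the conditional post-measurement states.

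First I would rewrite each mutual information in a convenient form. Since $X$ is obtained by measuring register $A$ in the computational basis, conditioning on the outcome $X=x$ leaves $A'BB'$ in a \emph{pure} state $\ket{\psi_x}$, so that $\rho_{XBB'}=\sum_x P_X(x)\proj{x}\otimes\rho^x_{BB'}$ is block-diagonal and $S(\rho^x_{BB'})=S(\rho^x_{A'})$ by the Schmidt decomposition applied to $\ket{\psi_x}$ across the cut $A'$ versus $BB'$. Expanding $S(X;BB')=S(X)+S(BB')-S(XBB')$ and using $S(X)=H(X)$ together with $S(XBB')=H(X)+S(A'|X)$ then yields
\[ S(X;BB')=S(BB')-S(A'|X), \qquad\text{and symmetrically}\qquad S(AA';Y)=S(AA')-S(B'|Y). \]
Because $\ket{\psi}_{ABA'B'}$ is globally pure, $S(BB')=S(AA')$ by the Schmidt decomposition, so the desired identity is equivalent to $S(A'|X)=S(B'|Y)$.

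For this last equality I would pass to the state obtained by measuring \emph{both} $A$ and $B$ in the computational basis. Conditioned on $(X,Y)=(x,y)$, the register pair $A'B'$ is again in a pure state, so the Schmidt decomposition across $A'$ versus $B'$ gives $S(\rho^{x,y}_{A'})=S(\rho^{x,y}_{B'})$ for each $x,y$; averaging over $P_{X,Y}$ produces $S(A'|XY)=S(B'|XY)$ unconditionally. The \correctness\ of the embedding now does the remaining work: by Definition~\ref{defcorrect} the chain $A'\leftrightarrow X\leftrightarrow Y\leftrightarrow B'$ is Markov, and applying Lemma~\ref{lem:qmarkov} to its sub-chains $Y\leftrightarrow X\leftrightarrow A'$ and $X\leftrightarrow Y\leftrightarrow B'$ gives $S(A'|XY)=S(A'|X)$ and $S(B'|XY)=S(B'|Y)$, respectively. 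Chaining these three equalities yields $S(A'|X)=S(B'|Y)$, completing the proof.

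The main obstacle is the careful bookkeeping of which systems are classical versus quantum and of which state each entropy is evaluated on; in particular one must verify that conditioning on a computational-basis measurement genuinely leaves the remaining registers in a pure state (so that $S(\rho^x_{A'})=S(\rho^x_{BB'})$ and $S(\rho^{x,y}_{A'})=S(\rho^{x,y}_{B'})$ can be invoked) and that the two correctness conditions translate into exactly the two Markov sub-chains needed. The use of the correctness hypothesis is essential here: the equality $S(A'|XY)=S(B'|XY)$ holds for \emph{any} pure state, but passing from conditioning on $XY$ to conditioning on $X$ (resp.\ $Y$) alone is precisely where the Markov structure guaranteed by \correctness\ is required.
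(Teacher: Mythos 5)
Your proof is correct, but it takes a genuinely different and substantially shorter route than the paper's. The paper first disposes of the cases where $A'$ or $B'$ is trivial, splits $P_{X,Y}$ into connected components, Schmidt-decomposes each $\ket{\varphi^{x,y}}_{A'B'}$, uses the two Markov conditions to argue that the Schmidt coefficients $\lambda_k^{x,y}$ are constant within a component, and then applies an explicit controlled unitary to bring the state into the normal form $\sum_k\sqrt{\lambda_k}\ket{\chi_k}_{AB}\ket{k,k}_{A'B'}$, from which $S(XBB')=H(X)+H(\lambda_1,\dots,\lambda_K)$ and $S(AA'Y)=H(Y)+H(\lambda_1,\dots,\lambda_K)$ are read off and transported back via Lemma~\ref{lem:unitaryentropy}. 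You instead work purely at the level of entropies: purity of the post-measurement states gives $S(X;BB')=S(BB')-S(A'|X)$ and $S(AA';Y)=S(AA')-S(B'|Y)$, global purity gives $S(AA')=S(BB')$, purity of $\ket{\varphi^{x,y}}_{A'B'}$ gives $S(A'|XY)=S(B'|XY)$, and the two sub-chains of $A'\leftrightarrow X\leftrightarrow Y\leftrightarrow B'$ (via condition~3 of Lemma~\ref{lem:qmarkov}) collapse $S(A'|XY)$ to $S(A'|X)$ and $S(B'|XY)$ to $S(B'|Y)$. Your argument sidesteps the connected-component decomposition entirely --- the paper needs it only because identifying the eigenvalues $\lambda_k^{x,y}$ across different $(x,y)$ fails between components, a step you never take --- and it absorbs the degenerate trivial-register cases automatically. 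What the paper's heavier machinery buys is the explicit normal form of Equation~\eqref{eq:schmidt}, which is reused in the proof of Lemma~\ref{super_leak}; your argument does not produce that structural byproduct, but as a proof of the symmetry statement itself it is complete and cleaner. Your closing remark correctly isolates where \correctness\ is indispensable: without the Markov conditions one only gets $S(A'|XY)=S(B'|XY)$, which is not enough.
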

\begin{proof}
We have already shown that the statement is true in the case where both $A'$ 
and $B'$ are trivial. In the case where $A'$ is trivial and
$B'$ is not, the Markov chain condition 
implies that $\ket{\psi}$ is of the form
$$\ket{\psi}=\sum_{x,y}\sqrt{P_{X,Y}(x,y)}\ket{x,y}_{AB}\ket{\varphi^y}_{B'}\enspace ,$$
hence, Bob can fix $y_0$ and apply a unitary transform $U_{BB'}$
on his part of the system,
such that $U_{BB'}\ket{y,\varphi^y}=\ket{y,\varphi^{y_0}}$,
and
$$\id_A\otimes U_{BB'}\ket{\psi}_{ABB'}=\ket{\psi^*}_{AB}\otimes \ket{\varphi^{y_0}}_{B'}\enspace ,$$
where $\ket{\psi^*}\in\emb{P_{X,Y}}$. Note that the unitary transform
$U_{BB'}$ does not change the entropic quantity $S(X;BB')_{\ket{\psi}}
= S(X;BB')_{U_{BB'} \ket{\psi}}$. Hence, in the resulting product
state,  we have that
$S(X;BB')-I(X;Y)=S(X;B)-I(X;Y)=S(A;Y)-I(X;Y)$, due to the fact that 
$\ket{\psi^*}\in\emb{P_{X,Y}}$. An analogous statement holds
in the case where $B'$ is trivial and $A'$ is non-trivial. 

We now assume that both $A'$ and $B'$ are non-trivial. 
An embedding of $P_{X,Y}$ can be written as 
\begin{align*} 
\ket{\psi} &= \sum_{x,y} \sqrt{P_{X,Y}(x,y)} \ket{x,y}_{AB}
\ket{\varphi^{x,y}}_{A'B'}\\
&= \sum_j \sqrt{P_C(j)} \sum_{x \in \mathcal{X}_j,y \in \mathcal{Y}_j} \sqrt{P_{X,Y|C=j}(x,y)} \ket{x,y}_{AB}
\ket{\varphi^{x,y}}_{A'B'} \\
&= \sum_j \sqrt{P_C(j)} \ket{\psi_j}_{ABA'B'} \enspace ,
\end{align*}
where $C$ denotes the connected component of $X,Y$ (see Section~\ref{connected}) and where for any $j$, $\ket{\psi_j}$ is an embedding of the single-component primitive $P_{X_j,Y_j}$.

We want to show that 
\begin{equation} \label{eq:symmetry} 
S(X;BB')_{\psi} - I(X;Y) = S(AA';Y)_{\psi} - I(X;Y) \enspace .
\end{equation} 
Using the reasoning of Equation~\eqref{eq:component} for the three
terms $S(X;BB'), I(X;Y), S(AA';Y)$, Equation~\eqref{eq:symmetry} is
equivalent to\footnote{\label{footnote14}The only step that needs some extra thought is
  the following: $S(X|BB') = S(X|BB'C)$ holds, because the component
  $C$ can be determined with certainty by measuring register $B$ with
  projectors $\{\sum_{y \in \mathcal{Y}_j} \proj{y}_B \}_j$.} 
\begin{equation*} 
S(X;BB'|C)_{\psi} - I(X;Y|C) = S(AA';Y|C)_{\psi} - I(X;Y|C) 
\end{equation*}
and hence, it suffices to show symmetry for all single-component primitives $P_{X_j,Y_j}$ and their embeddings $\ket{\psi_j}$. For the rest of the proof, we drop the index $j$ for the ease of notation.
\medskip

Note that $S(X;BB') = H(X)+S(BB')-S(XBB')$ and
$S(AA';Y)=H(Y)+S(AA')-S(AA'Y)$. As $\ket{\psi}$ is a pure state, we
have that $S(AA')_\psi = S(BB')_\psi$, and it suffices to
show that 
\begin{equation} \label{eq:toshow}
H(X) - S(XBB') _\psi = H(Y) - S(AYA') _\psi \enspace .
\end{equation}

For every $x$ and $y$, we can write the bipartite pure
state 
$$\ket{\varphi^{x,y}}_{A'B'} =
\sum_{k=1}^K \sqrt{\lambda_k^{x,y}} \ket{e_k^{x,y}}_{A'} \ket{f_k^{x,y}}_{B'}$$ 
in Schmidt form. For the reduced density matrices, we obtain
$$\rho_{A'}^{x,y} = \sum_k \lambda_k^{x,y} \proj{e_k^{x,y}}\enspace .$$

Since any embedding $\ket{\psi}\in \hil_{ABA'B'}$ of $P_{X,Y}$ is
produced by a \correct\ protocol, it satisfies
$$S(XA';Y)=S(X;YB')=I(X;Y)$$ 
which is
equivalent by Lemma~\ref{lem:qmarkov} to $A'\leftrightarrow X\leftrightarrow Y$ and
$X\leftrightarrow Y\leftrightarrow B'$ being Markov chains. It follows
that for every $x$ and $y \neq y'$ in the same connected component of $P_{XY}$, the reduced density matrices
$\rho_{A'}^{x,y}=\rho_{A'}^{x,y'}=\rho_{A'}^x$ coincide and therefore,
the eigenvalues $\lambda_k^{x,y}$ cannot depend on $y$. Because of
$X\leftrightarrow Y\leftrightarrow B'$, they can neither depend on
$x$. Hence, $\ket{\varphi^{x,y}} = \sum_k \sqrt{\lambda_k}e^{i\theta'(k,x,y)}
\ket{e_k^{x,y}}\ket{f_k^{x,y}}$.\footnote{We note that it is only possible to draw this conclusion within the same connected component. The eigenvalues $\lambda_k^{x,y}$ and $\lambda_k^{x',y'}$ for $x,y$ and $x',y'$ not in the same connected component of $P_{XY}$ cannot be related to each other.} The phase factors arise from the fact
that from a reduced density matrix the global phases of the
Schmidt-basis elements cannot be determined.

Let us fix a set of orthogonal states $\{\ket{k}\}_k$. We define the
unitary transformation $U_{ABA'B'}$  to map the orthonormal states $\{\ket{e_k^{x,y}}_{A'}\}_k$ into the orthonormal
states $\{\ket{k}_{A'}\}_k$, and $\{\ket{f_k^{x,y}}_{B'}\}_k$ into $\{\ket{k}_{B'}\}_k$.
Note that $U_{ABA'B'}$ only acts on
registers $A'B'$ conditioned on the $x$-value in $A$ and the $y$-value
in $B$. Applying $U_{ABA'B'}$ to $\ket{\psi}$ results into state
\begin{align*}
\ket{\chi} &= \sum_{x,y} \sqrt{P_{X,Y}(x,y)} \ket{x,y}_{AB} \sum_k
\sqrt{\lambda_k} e^{i\theta'(k,x,y)}\ket{k,k}_{A'B'}\\
&= \sum_k \sqrt{\lambda_k} \left(\sum_{x,y}\sqrt{P_{X,Y}(x,y)}
e^{i\theta'(k,x,y)}\ket{x,y}\right) \ket{k,k}\\
&= \sum_k \sqrt{\lambda_k} \ket{\chi_k}_{AB} \otimes \ket{k,k}_{A'B'}\enspace,
\end{align*}  
where each $\ket{\chi_k}_{AB}\in\emb{P_{X,Y}}$. 
The cqq-state $\sigma_{XBB'}$ can now be written in the form:
$$\sigma_{XBB'}=\sum_x P_X(x)\proj{x}\otimes\sum_k \lambda_k
\proj{\gamma^x_k,k} \enspace,$$
where $\ket{\gamma^x_k}=\sum_y \sqrt{P_{Y|X=x}}e^{i\theta'(k,x,y)}\ket{y}$. 
Due to the second register, the states $\ket{\gamma^x_k,k}$ are mutually 
orthogonal for each $x$. Therefore, for each $x$,  
$$S\left(\sum_k\lambda_k \proj{\gamma^x_k,k}\right)=H(\lambda_1,\dots,\lambda_K)\enspace.$$
As a result we get that
$$S(XBB')_\chi=H(X)+\sum_x P_X(x)H(\lambda_1,\dots,\lambda_K)=H(X)+H(\lambda_1,\dots,\lambda_K)$$ 
and analogously, 
$$S(AA'Y)_\chi=H(Y)+H(\lambda_1,\dots,\lambda_K)\enspace .$$
Equation~\eqref{eq:toshow} now follows by applying
Lemma~\ref{lem:unitaryentropy} in the first and last step of the
following equations.
\begin{align*}
H(X) - S(XBB')_\psi &= H(X) - S(XBB')_\chi \\
&= - H(\lambda_1,\dots,\lambda_K)\\
&=  H(Y) - S(AYA')_\chi \\
&=  H(Y) - S(AYA')_\psi \enspace .
\end{align*}
\qed
\end{proof}

If a primitive $P_{X,Y}$ has multiple connected components and $\ket{\psi_j}$ are (not necessarily regular) embeddings of $P_{X_j,Y_j}$, then the state $\ket{\psi} \assign \sum_j \sqrt{P_C(j)} \ket{\psi_j}$ is an embedding of $P_{X,Y}$ with leakage 
\begin{align}
\begin{split} \label{eq:splitleak}
\Delta_\psi(P_{X,Y}) &= S(X;BB')_\psi - I(X;Y) = S(X;BB'|C)_\psi - I(X;Y|C)\\
&= \sum_j P_C(j) \Delta_{\psi_j}(P_{X_j,Y_j}) \enspace,
\end{split}
\end{align}
by the same reasoning as in the previous proof (along the lines of
Equation~\eqref{eq:component}). Any party can determine the active component without
disturbing the state once the other party got his/her
output (see Footnote~\ref{footnote14}). Therefore, measuring the component $C$ can be done without changing the amount of information the state contains about the 
other party's output. Hence, we can always assume that the parties know the current
component in use.

The next lemma shows that the leakage of an embedding for a given 
primitive is always lower-bounded by the leakage of some regular embedding 
of the same primitive, which simplifies the calculation of lower bounds for the leakage of embeddings.

\begin{lemma}
\label{super_leak}
For every embedding $\ket{\psi}$ of a primitive $P_{X,Y}$, there exists
 $\ket{\psi^*}\in \emb{P_{X,Y}}$ such that
$\Delta_\psi(P_{X,Y})\geq \Delta_{\psi^*}(P_{X,Y}).$
\end{lemma}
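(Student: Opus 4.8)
The plan is to exploit the structural decomposition already obtained inside the proof of Lemma~\ref{symmetry}. There it is shown that any embedding $\ket{\psi}\in\hil_{ABA'B'}$ can be mapped, by a unitary $U_{ABA'B'}$ acting on the working registers $A'B'$ conditioned on the output registers $AB$, to a state of the form $\ket{\chi}=\sum_k\sqrt{\lambda_k}\,\ket{\chi_k}_{AB}\otimes\ket{k,k}_{A'B'}$, where $\{\lambda_k\}_k$ is the (output-independent) Schmidt spectrum of the working-register states and each $\ket{\chi_k}\in\emb{P_{X,Y}}$ is a regular embedding. Since the flags $\ket{k}_{A'}$ and $\ket{k}_{B'}$ are orthonormal and perfectly correlated, $\ket{\chi}$ has exactly the shape of the multi-component state analyzed in Equation~\eqref{eq:splitleak}, with the index $k$ playing the role of the component $C$: both parties can read $k$ off their working register without disturbing the information contained in the output registers. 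I would therefore first record that $\Delta_\chi(P_{X,Y})=\sum_k\lambda_k\,\Delta_{\chi_k}(P_{X,Y})$, using that $\rho_{BB'}^\chi=\sum_k\lambda_k\,\rho_B^{(k)}\otimes\proj{k}_{B'}$ (with $\rho_B^{(k)}\assign\tr_A\proj{\chi_k}$) is block-diagonal, so $S(BB')_\chi=H(\lambda)+\sum_k\lambda_k S(\rho_B^{(k)})$ while $S(XBB')_\chi=H(X)+H(\lambda)$, and hence $S(X;BB')_\chi=\sum_k\lambda_k S(\rho_B^{(k)})=\sum_k\lambda_k\,S(X;B)_{\chi_k}$.

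The second step is to transfer this identity from $\ket{\chi}$ back to $\ket{\psi}$, i.e.\ to show $\Delta_\psi=\Delta_\chi$. Since $S(X;BB')=S(X)+S(BB')-S(XBB')$ and the proof of Lemma~\ref{symmetry} already gives $S(XBB')_\psi=S(XBB')_\chi=H(X)+H(\lambda)$ (via Lemma~\ref{lem:unitaryentropy}), it suffices to prove $S(BB')_\psi=S(BB')_\chi$. Choosing Schmidt decompositions $\ket{\varphi^{x,y}}_{A'B'}=\sum_k\sqrt{\lambda_k}e^{i\theta'(k,x,y)}\ket{e_k^{x}}_{A'}\ket{f_k^{y}}_{B'}$ in which the left vectors depend only on $x$ and the right vectors only on $y$ (the Markov conditions force $\rho_{A'}^{x,y}=\rho_{A'}^x$ and $\rho_{B'}^{x,y}=\rho_{B'}^y$, so this is possible whenever the $\lambda_k$ are non-degenerate), I would compute $\rho_{BB'}^\psi=\tr_{AA'}\proj{\psi}$ directly: writing $\ket{\psi}=\sum_k\sqrt{\lambda_k}\ket{\tilde\chi_k}$, the cross terms vanish because for fixed $x$ the vectors $\{\ket{e_k^{x}}\}_k$ are orthonormal, leaving $\rho_{BB'}^\psi=\sum_k\lambda_k\sigma_k$ with mutually orthogonal supports; each $\sigma_k$ is isometric to $\rho_B^{(k)}$ via the isometry $\ket{y}\mapsto\ket{y}_B\ket{f_k^{y}}_{B'}$, so $S(\sigma_k)=S(\rho_B^{(k)})$ and $S(BB')_\psi=H(\lambda)+\sum_k\lambda_k S(\rho_B^{(k)})=S(BB')_\chi$.

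Combining the two steps yields $\Delta_\psi(P_{X,Y})=\sum_k\lambda_k\,\Delta_{\chi_k}(P_{X,Y})$, a convex combination of the leakages of the regular embeddings $\ket{\chi_k}\in\emb{P_{X,Y}}$. Hence some index $k^\ast$ satisfies $\Delta_{\chi_{k^\ast}}(P_{X,Y})\le\Delta_\psi(P_{X,Y})$, and setting $\ket{\psi^\ast}\assign\ket{\chi_{k^\ast}}$ proves the lemma. As usual it suffices to treat a single connected component of $P_{X,Y}$, the general case following by averaging over components as in Equation~\eqref{eq:splitleak}.

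The main obstacle I anticipate is precisely the clean splitting $\rho_{BB'}^\psi=\sum_k\lambda_k\sigma_k$ into orthogonal blocks: it relies on choosing the Schmidt vectors so that $\ket{e_k^{x,y}}$ depends only on $x$ and $\ket{f_k^{x,y}}$ only on $y$, which is immediate when the Schmidt coefficients $\lambda_k$ are non-degenerate but requires a careful simultaneous choice of eigenbases (and possibly a continuity argument from the non-degenerate case) when they coincide. If a fully gauge-independent proof of the equality turns out to be awkward, I would retreat to the one-sided bound $S(BB')_\psi\ge H(\lambda)+\min_k S(\rho_B^{(k)})$, which is all the lemma actually needs and which should follow from concavity of the von Neumann entropy together with the fact that a flag measurement on $B'$ can only increase $S(BB')$.
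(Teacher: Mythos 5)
Your strategy---reduce to $\ket{\chi}=\sum_k\sqrt{\lambda_k}\,\ket{\chi_k}_{AB}\ket{k,k}_{A'B'}$ and establish the exact identity $\Delta_\psi(P_{X,Y})=\sum_k\lambda_k\Delta_{\chi_k}(P_{X,Y})$---is more ambitious than the paper's, which only proves $S(X;BB')_\psi\geq\min_h S(X;BB')_{\psi_h}$ via a local measurement on $BB'$. Your Step 1 is correct, and Step 2 is correct when the $\lambda_k$ are non-degenerate. But the obstacle you flag is a genuine gap, and neither of your remedies closes it. The simultaneous factorized choice $\ket{e_k^{x,y}}=\ket{e_k^x}$, $\ket{f_k^{x,y}}=\ket{f_k^y}$ can really fail in the degenerate case: the Markov conditions only fix the reduced states $\rho_{A'}^{x,y}=\rho_{A'}^x$ and $\rho_{B'}^{x,y}=\rho_{B'}^y$, so within an $m$-fold degenerate eigenvalue the component of $\ket{\varphi^{x,y}}$ is $\sqrt{\lambda}\sum_{j,k}M^{x,y}_{jk}\ket{e_j^x}\ket{f_k^y}$ for an \emph{arbitrary} unitary $M^{x,y}$, and once $|\mathcal{X}|\cdot|\mathcal{Y}|$ exceeds the available gauge freedom $V^x,W^y$ these matrices cannot in general be simultaneously diagonalized. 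A continuity argument does not rescue this: perturbing the degenerate $\lambda_k$ while keeping the (genuinely $(x,y)$-dependent) Schmidt vectors destroys the condition that $\rho_{A'}^{x,y}$ is independent of $y$, so the perturbed state is no longer an embedding. The fallback fails for a different reason: the pinching $\rho_{BB'}\mapsto\sum_h\mathbb{Q}_h\rho_{BB'}\mathbb{Q}_h$ can only \emph{increase} the von Neumann entropy, which gives $S(BB')_\psi\leq H(\lambda)+\sum_h\lambda_h S(\sigma_h)$---an upper bound, whereas your argument needs a lower bound on $S(BB')_\psi$.

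The repair is to stop trying to control $S(BB')$ alone and instead work with the mutual information, which is what the paper does. One only needs the $B'$ Schmidt vectors to depend on $y$ alone (always achievable: for each $y$ fix any eigenbasis $\{\ket{f_h^y}\}_h$ of $\rho_{B'}^y$ and let the $A'$ partners depend on both $x$ and $y$), defines the incomplete projective measurement $\mathbb{Q}_h=\sum_y\proj{y}_B\otimes\proj{f_h^y}_{B'}$ on Bob's registers, and invokes monotonicity of $S(X;BB')$ under local CPTP maps to get $S(X;BB')_\psi\geq\sum_h\lambda_h S(X;BB')_{\psi_h}\geq\min_h S(X;BB')_{\psi_h}$; each post-measurement state $\ket{\psi_h}$ has a trivial $B'$ after a local unitary and hence reduces to a regular embedding by your (and the paper's) one-trivial-register case. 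With that substitution your argument coincides with the paper's proof; the exact convex-combination identity you aimed for is a nice stronger statement in the non-degenerate case, but it is not needed for the lemma and is not established in general by your proposal.
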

\begin{proof}
In the case where $A'$ and $B'$ are both trivial, then
$\ket{\psi}\in\emb{P_{X,Y}}$ is a regular embedding and the statement
holds trivially. In the case where $A'$ is trivial and $B'$ is not, we
have shown at the beginning of the proof of Lemma~\ref{symmetry} that an embedding
$\ket{\psi}$ of $P_{X,Y}$ is locally equivalent to a state
$\ket{\psi'}_{AB}\otimes \ket{\sigma}_{B'}$ for $\ket{\psi'}\in\emb{P_{X,Y}}$
and a pure state $\ket{\sigma}_{B'}$.  An analogous statement holds if $B'$
is trivial and $A'$ is not.  Therefore, in these two cases we get for
some $\ket{\psi'}\in\emb{P_{X,Y}}$ that
$\Delta_\psi(P_{X,Y})=\Delta_{\psi'}(P_{X,Y})$.

Now assume that both $A'$ and $B'$ are non-trivial and
that $P_{X,Y}$ has multiple connected components. As in the proof of Lemma~\ref{symmetry},
the state $\ket{\psi}_{ABA'B'}$ can be written as 
\[ \ket{\psi}_{ABA'B'} = \sum_j \sqrt{P_C(j)} \ket{\psi_j}_{ABA'B'} \enspace ,
\]
where $\ket{\psi_j}$ is an embedding of $P_{X_j,Y_j}$, the primitive
corresponding to the $j$th connected component of $P_{X,Y}$. Let us assume for now that the lemma holds for single-component primitives. In that case, we get for every $j$ and embedding $\ket{\psi_j}$ a regular embedding $\ket{\psi^*_j} \in \emb{P_{X_j,Y_j}}$ such that $\Delta_{\psi_j}(P_{X_j,Y_j}) \geq \Delta_{\psi^*_j}(P_{X_j,Y_j})$. We define $\ket{\psi^*} = \sum_j \sqrt{P_C(j)} \ket{\psi^*_j}$ and conclude that 
\begin{align*}
\Delta_\psi(P_{X,Y}) &= \sum_j P_C(j) \Delta_{\psi_j}(P_{X,Y})
\geq \sum_j P_C(j) \Delta_{\psi^*_j}(P_{X,Y}) = \Delta_{\psi^*}(P_{X,Y}) \enspace ,
\end{align*}
where the equalities are due to Equation~\eqref{eq:splitleak}.
\medskip

It remains to show the lemma for single-component primitives $P_{X,Y}$.
The state $\ket{\psi}_{ABA'B'}$ is of the form established in the proof
of Lemma~\ref{symmetry}:
\begin{equation} \label{eq:schmidt}
\ket{\psi}_{ABA'B'} = \sum_{x,y}\sqrt{P_{X,Y}(x,y)}
\ket{x,y}_{AB}\otimes\sum_{k}\sqrt{\lambda_k}
e^{i\theta(k,x,y)}\ket{e^{x,y}_k}_{A'}\ket{f^{x,y}_k}_{B'} \enspace .
\end{equation}
Let $\lambda=(\lambda_1,\lambda_2,
\ldots,\lambda_{t})$ be an ordering of all eigenvalues $\{\lambda_k\}_k$ each repeated 
as many times as their multiplicity.
Let $F_{x,y}=\{f^{x,y}_{k}\}_k$ be the set of eigenvectors in $B'$ for each pair $(x,y)$.
Since $X\leftrightarrow Y \leftrightarrow B'$ is a Markov chain, the eigenvectors $f^{x,y}_k$ can be chosen such that $F_{x,y}=F_{x',y}=: F_y$ for any $x,x',y$ in the same connected component. Let us fix an ordering of the elements of
$F_y$, $\langle F_y \rangle =\langle f^y_1,f^y_2, \ldots, f^y_{t}\rangle$,
such that eigenvector $f^y_h$ has eigenvalue $\lambda_h$ whenever $y\in {\cal Y}$.  
\footnote{The Markov chain condition guarantees that a single ordering $\langle F_y \rangle$
suffices in the following sense: 
two eigenvectors $f^{x,y}_k \in F_{x,y}$ and $f^{x',y}_{k'} \in F_{x',y}$ such that 
$f^{x,y}_k=f^{x',y}_{k'}=f^y_h$ for some $f^y_h \in F_y$ necessarily have the same eigenvalue $\lambda_h$. 
% Otherwise, measuring $B'$ in the Schmidt
% basis $F_y$ will be such that $P_{X|Y=y,B'=f^{y}_i}(x)>P_{X|Y=y}(x)$, which is not allowed
% for \correct\ protocols.
% It follows that for any $(x,y)$, $f^y_i$ is an eigenvector
% with eigenvalue $\lambda^{\downarrow}_i$.
}

Consider the (incomplete) projective measurement ${\cal M}=\{\mathbb{Q}_h\}_h$ 
with measurement operators
\[ \mathbb{Q}_h = \sum_{y\in {\cal Y}} \proj{y}_B \otimes \proj{f^y_h}_{B'} \enspace .
\]
Now, suppose that ${\cal M}$ is applied to registers $BB'$ of $\ket{\psi}_{ABA'B'}$. It is easy to
verify that with probability $\lambda_h$, outcome $h$ will be obtained and
the state will collapse to:
\[ \ket{\psi_h}_{ABA'B'} = \sum_{x,y} \sqrt{P_{X,Y}(x,y)} \ket{x,y}_{AB}\otimes e^{i\theta(k(h,x,y),x,y)} \ket{e^{x,y}_{k(h,x,y)}}_{A'}
\otimes \ket{f^y_h}_{B'}
\enspace ,
\]
where $k(h,x,y)$ is the index such that $\ket{e^{x,y}_{k(h,x,y)}}$ is associated with
$\ket{f^y_h}$ in the Schmidt decomposition~\eqref{eq:schmidt} when $X=x$ and $Y=y$. Notice that $\ket{\psi_h}$ is an embedding of $P_{X,Y}$. 
Let $U_h \in \unit{BB'}$ be the local unitary transform on $BB'$ defined as:
\[ U_h \ket{y}_B \ket{f^y_h}_{B'} = \ket{y}_B\ket{\mathbf{0}}_{B'}\enspace ,
\]
and let $\ket{\widehat{\psi}_h}=(\I_{AA'}\otimes U_h)\ket{\psi_h}=\sum_{x,y}\sqrt{P_{X,Y}(x,y)}\ket{x,y}_{AB}
\otimes e^{i\theta(k,x,y)}\ket{e^{x,y}_k}_{A'}\ket{\mathbf{0}}_{B'}$ be 
an embedding of $P_{X,Y}$ locally equivalent to $\ket{\psi_h}$ but with a trivial register $B'$.

Let us put things together:
\begin{align}
S(X;BB')_\psi &\geq S(X;BB')_{\sum_h \lambda_h \proj{\psi_{h}}} \label{localtcp} \\
                     &= \sum_h \lambda_h  S(X;BB')_{\psi_{h}} \nonumber \\
                      & \geq \min_h{S(X;BB')_{\psi_h}} \nonumber \\
                      &= S(X;BB')_{\widehat{\psi}_{h^*}} \label{localequiv} \enspace , 
\end{align} 
where  (\ref{localtcp}) 
follows from the fact that the local measurement $\mathcal{M}$ does not increase mutual 
information~\cite[Theorem 11.15(3)]{NC00}, and (\ref{localequiv}) follows since 
$\ket{\widehat{\psi}_h}$ is locally equivalent
to $\ket{\psi_h}$ for all $h$. Since $\ket{\widehat{\psi}_{h^*}}$ 
is an embedding of $P_{X,Y}$ with register $B'$ 
being trivial, we can use the reasoning from the beginning of the proof that  
$\ket{\widehat{\psi}_{h^*}}$ is locally equivalent to a state $\ket{\psi^*}_{AB} \otimes \ket{\sigma}_{B'}$ with $\ket{\psi^*}\in \emb{P_{X,Y}}$.
By  Lemma~\ref{symmetry}, the same proof applies to $S(Y;AA')_\psi$. 
\qed
\end{proof}

So far, we have defined the leakage of an embedding of a primitive. We now
define the leakage of a primitive the  natural way:
\begin{definition}
\label{infimum}
We define the \emph{leakage of a primitive $P_{X,Y}$} as the minimal
leakage among all protocols \correctly\ implementing
$P_{X,Y}$. Formally, 
$$\Delta_{P_{X,Y}} \assign \min_{\ket{\psi}}\Delta_{\psi}(P_{X,Y}) \enspace,$$
where the minimization is over all embeddings $\ket{\psi}$ of $P_{X,Y}$.
\end{definition}
Notice that the minimum in the previous definition is well-defined,
because by Lemma~\ref{super_leak}, it is sufficient to minimize over regular embeddings
$\ket{\psi}\in\emb{P_{X,Y}}$. Furthermore, the function
$\Delta_{\psi}(P_{X,Y})$ is continuous on the compact (i.e.~closed and
bounded) set $[0,2\pi]^{|\mathcal{X}\times\mathcal{Y}|}$ of complex
phases corresponding to elements $\ket{x,y}_{AB}$ in the formula for
$\ket{\psi}_{AB}\in \emb{P_{X,Y}}$ and therefore it achieves its
minimum.

The following theorem shows that the leakage of any embedding of a
primitive $P_{X,Y}$ is lower-bounded by the minimal leakage achievable
for primitive $P_{\dep{X}{Y},\dep{Y}{X}}$ (which due to
Lemma~\ref{super_leak} is achieved by a regular embedding).
\begin{theorem}
\label{specialform2}
For any primitive $P_{X,Y}$, 
$\Delta_{P_{X,Y}} \geq \Delta_{P_{\dep{X}{Y},\dep{Y}{X}}}.$
\end{theorem}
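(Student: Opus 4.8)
The plan is to show that any correct implementation of $P_{X,Y}$ induces, via local operations, a correct implementation of $P_{\dep{X}{Y},\dep{Y}{X}}$ whose leakage is no larger. By Lemma~\ref{super_leak} it suffices to work with a regular embedding $\ket{\psi}\in\emb{P_{X,Y}}$ achieving the minimum $\Delta_{P_{X,Y}}$. The key observation is that $\dep{X}{Y}=f_X(X)$ and $\dep{Y}{X}=f_Y(Y)$ are deterministic functions of $X$ and $Y$ respectively; moreover, by the defining Markov-chain property of the dependent part, $X\leftrightarrow\dep{X}{Y}\leftrightarrow Y$ and $Y\leftrightarrow\dep{Y}{X}\leftrightarrow X$. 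So from $\ket{\psi}$ I want to produce an embedding $\ket{\psi'}$ of $P_{\dep{X}{Y},\dep{Y}{X}}$ by ``compressing'' the registers $A$ and $B$ down to their dependent parts and relegating the residual information to the working registers $A'$ and $B'$.

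Concretely, I would apply local unitaries that coherently map $\ket{x}_A\mapsto\ket{f_X(x)}_{\tilde A}\ket{x}_{A'}$ and $\ket{y}_B\mapsto\ket{f_Y(y)}_{\tilde B}\ket{y}_{B'}$ (padding so the map is a well-defined isometry into slightly enlarged spaces). The resulting state $\ket{\psi'}$ has its $\tilde A\tilde B$ registers carrying $(\dep{X}{Y},\dep{Y}{X})$ in the computational basis and produces that joint distribution correctly. The crucial point is \emph{correctness} of $\ket{\psi'}$ as an embedding of $P_{\dep{X}{Y},\dep{Y}{X}}$: I must verify the Markov conditions $A'\leftrightarrow\dep{X}{Y}\leftrightarrow\dep{Y}{X}\leftrightarrow B'$ from Definition~\ref{defcorrect}. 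These follow because $A'$ now holds (a copy of) $X$, and by the dependent-part identities $I(X;Y)=I(\dep{X}{Y};\dep{Y}{X})$ together with the Markov chains above—measuring $A'$ gives nothing about $\dep{Y}{X}$ beyond $\dep{X}{Y}$ precisely because $X\leftrightarrow\dep{X}{Y}\leftrightarrow Y$.

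For the leakage comparison I would use Lemma~\ref{symmetry} and the fact that local unitaries preserve the relevant entropies (Lemma~\ref{lem:unitaryentropy}). Writing $U\ket{\psi}=\ket{\psi'}$ for the local unitary described above, the quantum mutual information is unchanged: $S(X;BB')_\psi$ becomes $S(X;\tilde B B')_{\psi'}$, and since $X$ is recoverable from $(\dep{X}{Y},A')$ while $B'$-side information about $X$ factors through $\dep{X}{Y}$, one gets $S(\dep{X}{Y};\tilde B B')_{\psi'}=S(X;BB')_\psi - \big(I(X;Y)-I(\dep{X}{Y};\dep{Y}{X})\big)$. Because $I(X;Y)=I(\dep{X}{Y};\dep{Y}{X})$, the subtracted term vanishes, giving $\Delta_{\psi'}(P_{\dep{X}{Y},\dep{Y}{X}})=\Delta_\psi(P_{X,Y})$, and since $\Delta_{P_{\dep{X}{Y},\dep{Y}{X}}}$ minimizes over all its embeddings, $\Delta_{P_{\dep{X}{Y},\dep{Y}{X}}}\leq\Delta_{\psi'}=\Delta_{P_{X,Y}}$.

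The main obstacle I anticipate is justifying that the entropic bookkeeping really gives equality (or the needed inequality) rather than losing a term: specifically, that passing from $S(X;BB')$ to $S(\dep{X}{Y};\tilde BB')$ exactly accounts for the redundant part of $X$ via the Markov structure, and that the constructed $\ket{\psi'}$ genuinely satisfies the strict-correctness Markov chains so that it qualifies as an embedding. Handling the multiple-connected-component case cleanly (where $C$ is shared and the split $\eqref{eq:splitleak}$ must be respected by the compression map) is the other place requiring care, but it reduces componentwise to the single-component argument exactly as in the proofs of Lemmas~\ref{symmetry} and~\ref{super_leak}.
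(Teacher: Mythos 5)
Your opening moves match the paper's: reduce to a regular embedding $\ket{\psi}\in\emb{P_{X,Y}}$ via Lemma~\ref{super_leak}, use $I(X;Y)=I(\dep{X}{Y};\dep{Y}{X})$ to turn the claim into a comparison of $S(A)$, and compress $A$ and $B$ down to the dependent parts while pushing the residual information into working registers. But the step you yourself flag as ``the main obstacle'' is a genuine gap, and the justification you sketch for it is wrong. After the isometry $\ket{y}_B\mapsto\ket{f_Y(y)}_{\tilde B}\ket{y}_{B'}$, the residual register $B'$, conditioned on $X=x$ and $Y'=f_Y(y)=y'$, is in the pure state $\sum_{y\in f_Y^{-1}(y')}\sqrt{P_{Y|Y'=y'}(y)}\,e^{i\theta(x,y)}\ket{y}$. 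The amplitudes are independent of $x$ (that is what the classical Markov chain $X\leftrightarrow\dep{X}{Y}\leftrightarrow Y$ buys you), but the phases are not: for a generic phase function $\theta$ these states differ for different $x$, so $S(X;Y'B')>I(X;Y')$. Hence the compressed state violates the strict-correctness condition $X\leftrightarrow Y'\leftrightarrow B'$ of Definition~\ref{defcorrect} and is \emph{not} an embedding of $P_{X,\dep{Y}{X}}$; the symmetric failure occurs on Alice's side. Since $\Delta_{P_{\dep{X}{Y},\dep{Y}{X}}}$ is a minimum over embeddings only, your chain of (in)equalities does not go through: the local isometry preserves $S(X;\tilde B B')$, but it does not produce an object over which that minimum ranges. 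The classical dependent-part identities cannot rescue this, because the violation lives entirely in relative phases that are invisible classically.

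The paper's proof is precisely about overcoming this obstacle. It does not try to make the compressed state correct; instead it applies a further isometry $W$ that ``slices'' each residual amplitude $\sqrt{p_j}$ into roughly $t\,p_j$ pieces of equal weight $1/\sqrt{t}$, so that the state becomes (up to an error vanishing as $t\to\infty$) $\frac{1}{\sqrt{t}}\sum_{z=1}^{t}\ket{\hat\psi_z}_{A\tilde B}\ket{z}_{B'}$, where each $\ket{\hat\psi_z}$ is a \emph{regular} embedding of a distribution close to $P_{X,Y'}$ --- the offending $x$-dependent phases are absorbed into the phase functions of the individual $\ket{\hat\psi_z}$, where they are harmless. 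Concavity of the von Neumann entropy then yields some $z$ with $S(A)_{\hat\psi_z}\leq S(A)_\psi$, and a continuity argument in $t$ transfers this to a genuine regular embedding of $P_{X,Y'}$; the reduction $X\searrow Y$ is then handled symmetrically. This slicing-plus-concavity step is the actual content of the theorem and is absent from your proposal. Note also that your claimed equality $\Delta_{\psi'}=\Delta_\psi$ is stronger than what is true or needed; the argument only delivers an inequality, and only after the limiting procedure.
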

\begin{proof}
In fact, the random variables $\dep{X}{Y}$ and $\dep{Y}{X}$ in the
claim can be replaced by any variables $X'$ and $Y'$ with the property that
$X\leftrightarrow X'\leftrightarrow Y$ and $X\leftrightarrow
Y'\leftrightarrow Y$ are Markov chains, and that $Y'=f_Y(Y)$ and
$X'=f_X(X)$ for some deterministic functions $f_Y$ and $f_X$. For such
random variables we then have $I(X';Y')=I(X;Y)$. Therefore, showing
that for $\ket{\psi}\in\emb{P_{X,Y}}$ with the lowest leakage among all regular embeddings of $P_{X,Y}$ (regularity follows from Lemma~\ref{super_leak}) 
and for some $\ket{\psi^*}\in\emb{P_{X',Y'}}$ , it holds that
$$S(A)_\psi-I(X;Y)=\Delta_\psi(P_{X,Y})\geq
\Delta_{\psi^*}(P_{X',Y'})=S(A)_{\psi^*}-I(X';Y')$$ 
is equivalent to proving $S(A)_\psi\geq S(A)_{\psi*}$. First, we show
that there exists $\ket{\tilde{\psi}}\in\emb{P_{X,Y'}}$ such that
$S(A)_{\psi}\geq S(A)_{\tilde{\psi}}$, i.e. $\Delta_\psi(P_{X,Y})\geq
\Delta_{\tilde{\psi}}(P_{X,Y'})$. The existence of $\ket{\psi^*}$ such
that $\Delta_{\tilde{\psi}}(P_{X,Y'})\geq \Delta_{\psi^*}(P_{X',Y'})$
follows from an analogous argument.

State $\ket{\psi}$ can be written in the form:
$$\ket{\psi}=\sum_{x,y}\sqrt{P_{X,Y}(x,y)}e^{i\theta(x,y)}\ket{x,y}_{AB} \enspace.$$

For any realization $y'$ of $Y'$, let $O_{y'}$ be the set of elements $y$ which are mapped to $y'$ under $f_Y(\cdot)$, i.e. $O_{y'} \assign \{y:\ f_{Y}(y)=y'\}$. Let $g$ denote the bijection mapping tuples $(y',j_y) \in Y' \times O_{y'}$ back to $y$. There exists an isometry $U$ on Bob's side such that 
\begin{align}
(\id_A\otimes U)&\ket{\psi}_{AB}
=\sum_{x,y}\sqrt{P_{X,Y}(x,y)}e^{i\theta(x,y)}\ket{x,f_{Y}(y)j_{y}}_{A
  \tilde{B}
  \tilde{B}^{\perp}} \nonumber\\
&=
\sum_{x,y'}\sqrt{P_{X,Y'}(x,y')}\ket{x,y'}_{A \tilde{B}} \sum_{j \in O_{y'}} \sqrt{P_{Y|Y'=y'}\big(g(y',j)\big)}e^{i\theta(x,g(y',j))}\ket{j}_{\tilde{B}^{\perp}}
\, , \label{eq:wantswap}
\end{align}
where $\hil_{\tilde{B}\tilde{B}^{\perp}}\cong \hil_{B}$.

Our goal for the rest of the proof is to transform the register containing 
$j$ into a form where the order of the summations over $(x,y')$ and $j$ in \eqref{eq:wantswap} can be reversed 
to get a state of the
form
\begin{align} \label{eq:swapped}
\ket{\varphi} = \frac{1}{\sqrt{t}}\sum_{j=1}^t
\ket{\hat{\psi}_j}_{A\tilde{B}}\ket{j}_{B'} \enspace ,
\end{align}
where $t$ is some normalization factor and each $\ket{\hat{\psi}_j}$ is in
$\emb{P_{X,Y'}}$. Due to concavity of the von Neumann entropy, we can then argue that
\begin{equation} \label{eq:switchsum} 
\frac{1}{t} \sum_j S\left(\tr_{\tilde{B}B'} \proj{\hat{\psi_j}} \right) \leq S \left(
\frac{1}{t} \sum_j \tr_{\tilde{B}B'} \proj{\hat{\psi_j}} \right) = S\left( \tr_{\tilde{B}B'} \proj{\varphi} \right) = S(A)_\varphi \enspace.
\end{equation}
Hence, there exists a $j$ such that $\ket{\tilde{\psi_j}} \in \emb{P_{X,Y'}}$ and
$S(A)_{\tilde{\psi_j}} \leq S(A)_\varphi = S(A)_\psi$, proving the claim.
\medskip

Let us fix $\delta > 0$ and we show the existence of an embedding
$\ket{\tilde{\psi_j}} \in \emb{P_{X,Y'}}$ such that
$S(A)_{\tilde{\psi_j}} \leq S(A)_\psi + \delta$. In order to reverse the
order of summation in \eqref{eq:wantswap}, we show the existence of an
isometry $W$ on Bob's system such that
$$\ket{\hat{\varphi}} := (\id_A\otimes W)(\id_A\otimes
U)\ket{\psi}_{AB} = \frac{1}{\sqrt{t}}\sum_{z=1}^t
\ket{\hat{\psi}_z}_{A\tilde{B}}\ket{z}_{B'} \enspace,$$
where each $\ket{\hat{\psi}_z}$ is a quantum embedding of a primitive $P_{\hat{X},\hat{Y}}$ that is $\eps$-close (in statistical distance) to the primitive $P_{X,Y'}$. 

The idea is for a given $y'$ and $j \in O_{y'}$ to ``slice up'' the
term $\ket{j}_{\tilde{B}^{\perp}}$ with weight
$\sqrt{P_{Y|Y'=y'}\big(g(y',j)\big)}$ into a lot of very small pieces
of weight $1/\sqrt{t}$ by letting $W$ map
$\ket{j}_{\tilde{B}^{\perp}}$ into superpositions $\sum_z
\ket{z}_{B'}$, where $t \in \mathbb{N}$ is a large natural number to
be determined later as a function of $\delta$. More formally, let us fix $y'$ and denote the elements of the set $O_{y'}$ as $\{1,2,\ldots,k\}$. As a shorthand, we use $p_j := P_{Y|Y'=y'}(g(y',j))$ and note that $\sum_{j=1}^k p_j = 1$. We define $n_j := \lceil t \cdot p_j \rfloor$ to be the natural number of pieces required to approximate $p_j \approx \frac{n_j}{t}$ for large $t$. Let $t_0:=0$ and $t_j := \sum_{i \leq j} n_i$. Then, we define $W$ to map $\ket{j}_{\tilde{B}^{\perp}}$ to $\frac{1}{\sqrt{n_j}} \sum_{z=t_{j-1}+1}^{t_j} \ket{z}$ and get 
\begin{align*}
\ket{\hat{\varphi}} &= 
(\id_A\otimes W) \sum_{x,y'}\sqrt{P_{X,Y'}(x,y')}\ket{x,y'}_{A \tilde{B}} \sum_{j \in O_{y'}} \sqrt{p_j} e^{i\theta(x,g(y',j))}\ket{j}_{\tilde{B}^{\perp}} \\
&= \sum_{x,y'}\sqrt{P_{X,Y'}(x,y')}\ket{x,y'}_{A \tilde{B}} \sum_{j \in O_{y'}} \sqrt{\frac{p_j}{n_j}} e^{i\theta(x,g(y',j))} \sum_{z=t_{j-1}+1}^{t_j} \ket{z} \enspace, 
\end{align*}
It is not hard to verify~\cite{Sotakova08} that $\frac{p_j}{n_j}$ can be written as $\frac{1}{t} + \frac{\eps(y',z)}{t^2}$ where the error $|\eps(y',z)| \leq c$ is upper bounded by a constant $c$ independent of $t$.
Then, we get
\begin{align*} 
\ket{\hat{\varphi}} = &\sum_{x,y'} \sqrt{P_{X,Y'}(x,y')} \ket{x,y'}_{A\tilde{B}} \sum_{z=1}^t  \sqrt{\frac{1}{t} + \frac{\eps(y',z)}{t^2}} e^{i \theta'(x,y',z)}\ket{z}_{B'} \\
&=\frac{1}{\sqrt{t}}\sum_{z=1}^{t}\left(\sum_{x,y'}e^{i\theta'(x,y',z)} \sqrt{1+ \frac{\eps(y',z)}{t}}  \sqrt{P_{X,Y'}(x,y')}\ket{x,y'}_{A\tilde{B}}\right)\ket{z}_{B'}\\
&=\frac{1}{\sqrt{t}}\sum_{z=1}^{t} \ket{\hat{\psi}_z}_{A\tilde{B}} \ket{z}_{B'} \enspace ,
\end{align*}
where $\theta'(x,y',z) = \theta(x,y)$ for $y$ corresponding to
$(y',z)$. Using the reasoning from \eqref{eq:switchsum}, we derive the
existence of a $z$, such that the state $\ket{\hat{\psi}_z} \in
\emb{P_{\hat{X},\hat{Y}}}$ is a regular embedding of a primitive $P_{\hat{X},\hat{Y}}$ that is $\eps(t)$-close to
$P_{X,Y'}$ and $\eps(t) \rightarrow 0$ when $t \rightarrow
\infty$. Furthermore, we have that $S(A)_{\hat{\psi}_z} \leq
S(A)_{\psi}$. As $\ket{\hat{\psi}_z}$ is a regular embedding, we can
write $\ket{\hat{\psi}_z} = \sum_{\hat{x},\hat{y}} \sqrt{
  P_{\hat{X},\hat{Y}}(\hat{x},\hat{y})} e^{i
  \hat{\theta}(\hat{x},\hat{y})} \ket{x} \ket{y}$ for some phase
function $\hat{\theta}(\hat{x}, \hat{y})$.  We define the
desired state $\ket{\tilde{\psi}} \in \emb{P_{X,Y'}}$  as
$\ket{\tilde{\psi}} := \sum_{x,y} \sqrt{P_{X,Y'}(x,y)} e^{i
  \hat{\theta}(x,y)} \ket{x} \ket{y}$. We can choose $t$ large enough
such that the distance $\left\| \ket{\hat{\psi}_z} - \ket{\tilde{\psi}} \right\|$
is arbitrarily small and hence, by the continuity of the von
Neumann entropy, also their entropies $S(A)$ differ by at most
$\delta$. Hence, $S(A)_{\tilde{\psi}} \leq S(A)_{\hat{\psi}_z} + \delta \leq
S(A)_{\psi} + \delta$, which is what we wanted to show. \qed

\end{proof}

% \begin{proof}[Sketch]
%   The proof idea is to pre-process the registers storing $X$ and $Y$
%   in a way allowing Alice and Bob to convert a regular embedding of
%   $P_{X,Y}$ (for which the minimum leakage is achieved) into a regular
%   embedding of $P_{\dep{X}{Y},\dep{Y}{X}}$ by measuring parts of these
%   registers.  It follows that on average, the leakage of the resulting
%   regular embedding of $P_{\dep{X}{Y},\dep{Y}{X}}$ is at most the
%   leakage of the embedding of $P_{X,Y}$ the players started
%   with. Hence, there must be a regular embedding of
%   $P_{\dep{X}{Y},\dep{Y}{X}}$ leaking at most as much as the best
%   embedding of $P_{X,Y}$. See
%   \append{Appendix~\ref{app:specialform}}{\cite{SSS09arxiv}} for the complete proof. 
%     \qed
% \end{proof}

\subsection{Leakage as Measure of Privacy and Hardness of
  Implementation} \label{sec:limitedresources} The main results of
this section are consequences of the Holevo bound
(Theorem~\ref{holevo}).
\begin{theorem} \label{thm:nonleaktrivial} 
If a two-party \correct\ quantum protocol for $P_{X,Y}$ does not
leak  then $P_{X,Y}$
is a trivial primitive.
\end{theorem}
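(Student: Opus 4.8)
The plan is to establish the contrapositive in disguise: I assume a \correct\ protocol for $P_{X,Y}$ does not leak and deduce $H(\dep{X}{Y}|Y)=0$ (which by the stated equivalence also gives $H(\dep{Y}{X}|X)=0$, so proving one side suffices). First I would reduce to a regular embedding. Given the non-leaking embedding $\ket{\psi}$, Lemma~\ref{super_leak} supplies a regular $\ket{\psi^*}\in\emb{P_{X,Y}}$ with $\Delta_{\psi^*}(P_{X,Y})\le\Delta_{\psi}(P_{X,Y})=0$, and since leakage is non-negative, $\ket{\psi^*}$ is non-leaking as well. So I may assume $\ket{\psi}=\sum_{x,y}e^{i\theta(x,y)}\sqrt{P_{X,Y}(x,y)}\ket{x,y}_{AB}\in\emb{P_{X,Y}}$ with $S(X;B)-I(X;Y)=0$. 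Writing $\ket{\varphi^x}\assign\sum_y e^{i\theta(x,y)}\sqrt{P_{Y|X=x}(y)}\ket{y}$ for Bob's (pure) state conditioned on Alice measuring $x$, I have $\rho_B=\sum_x P_X(x)\proj{\varphi^x}$, and, as already observed for regular embeddings, $S(X;B)=S(B)=S(\rho_B)$. Hence the non-leaking hypothesis reads $S(\rho_B)=I(X;Y)$.

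Next I would read this as equality in the Holevo bound. For the ensemble $\{P_X(x),\proj{\varphi^x}\}$ all states are pure, so the Holevo quantity equals $S(\rho_B)$; measuring register $B$ in the computational basis yields outcome $y$ from $\proj{\varphi^x}$ with probability $P_{Y|X=x}(y)$, reproducing the joint distribution $P_{X,Y}$ and thus extracting exactly $I(X;Y)$ bits about $X$. The identity $S(\rho_B)=I(X;Y)$ therefore says that this fixed measurement attains the Holevo bound, so by the equality clause of Theorem~\ref{holevo} the projectors $\{\proj{\varphi^x}\}_x$ are simultaneously diagonalizable. For rank-one projectors this forces, for every pair $x\neq x'$, either $\ket{\varphi^x}$ parallel to $\ket{\varphi^{x'}}$ or $\ket{\varphi^x}$ orthogonal to $\ket{\varphi^{x'}}$. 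Moreover, parallelism implies $|\bracket{y}{\varphi^x}|^2=|\bracket{y}{\varphi^{x'}}|^2$ for all $y$, i.e.\ $P_{Y|X=x}=P_{Y|X=x'}$, so that $\dep{X}{Y}$ takes the same value on $x$ and $x'$.

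Finally I would extract triviality from the resulting eigenvalue structure. Grouping the $x$'s into their parallel classes defines a random variable $L=L(X)$ that refines $\dep{X}{Y}$ (parallel implies equal dependent part), and since states in distinct classes are orthogonal, $\rho_B=\sum_\ell P_L(\ell)\proj{\phi_\ell}$ with $\{\ket{\phi_\ell}\}$ orthonormal, whence $S(\rho_B)=H(L)$. Combining with the non-leaking identity and the properties of the dependent part gives the chain $I(X;Y)=S(\rho_B)=H(L)\ge H(\dep{X}{Y})\ge I(\dep{X}{Y};Y)=I(X;Y)$, where the first inequality uses that $\dep{X}{Y}$ is a function of $L$ and the last equality is the identity $I(\dep{X}{Y};Y)=I(X;Y)$. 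All quantities being equal forces $H(\dep{X}{Y})=I(\dep{X}{Y};Y)$, i.e.\ $H(\dep{X}{Y}|Y)=0$, so $P_{X,Y}$ is trivial. I expect the main obstacle to be the middle step: because of the phase freedom in $\theta$, equal conditionals $P_{Y|X=x}$ need \emph{not} produce parallel states, so one cannot read off disjoint supports directly; the real work is to invoke the Holevo equality condition to obtain the parallel/orthogonal dichotomy and then to track how the parallel classes are sandwiched between $X$ and $\dep{X}{Y}$.
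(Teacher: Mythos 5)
Your proof is correct, and it takes a genuinely different route from the paper's. The paper first invokes Theorem~\ref{specialform2} to pass from a non-leaking embedding of $P_{X,Y}$ to a non-leaking embedding of $P_{\dep{X}{Y},\dep{Y}{X}}$; after that reduction all conditional distributions $P_{Y|X=x}$ are distinct, so the Holevo equality condition forces the pure states $\ket{\varphi_x}$ to be pairwise orthogonal (parallelism is excluded outright), and combining with $Y\equiv\dep{Y}{X}$ identifies each $\ket{\varphi_x}$ with a computational basis state $\ket{y_x}$, from which triviality is immediate. You instead stay with the original $P_{X,Y}$, accept that the Holevo equality only yields the parallel-or-orthogonal dichotomy, quotient by parallelism to get a variable $L$ with $S(\rho_B)=H(L)$, and close the argument with the sandwich $I(X;Y)=H(L)\geq H(\dep{X}{Y})\geq I(\dep{X}{Y};Y)=I(X;Y)$. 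What your version buys is independence from Theorem~\ref{specialform2}, whose proof is the most technical part of that section (the isometry/slicing construction); you only need Lemma~\ref{super_leak} for regularity plus the elementary dependent-part identities. What the paper's version buys is a sharper structural conclusion --- the shared state is literally $\sum_x\sqrt{P_X(x)}\ket{x}\ket{y_x}$ --- at the cost of routing through the heavier reduction. Your closing remark correctly identifies the one place where a naive argument would fail: equal conditionals do not imply parallel states because of the phase freedom, which is exactly why the direction you use (parallel $\Rightarrow$ equal conditionals, so $L$ refines $\dep{X}{Y}$) is the right one, and why the entropy sandwich rather than a direct support argument is needed.
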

\begin{proof} Theorem~\ref{specialform2} implies that if there is a
  $0$--leaking embedding of $P_{X,Y}$ then there is also a
  $0$--leaking embedding of $P_{\dep{X}{Y},\dep{Y}{X}}$. Let us
  therefore assume that $\ket{\psi}$ is a non-leaking embedding of
  $P_{X,Y}$ such that $X\equiv \dep{X}{Y}$ and $Y\equiv \dep{Y}{X}$. We can write
  $\ket{\psi}$ in the form
  $\ket{\psi}=\sum_{x}\sqrt{P_X(x)}\ket{x}\ket{\varphi_x}$ and get
  $\rho_B = \sum_x P_X(x)\proj{\varphi_x}$. For the leakage of
  $\ket{\psi}$ we have:
  $\Delta_\psi(P_{X,Y})=S(X;B)-I(X;Y)=S(\rho_B)-I(X;Y)=0$. From the
  Holevo bound (Theorem~\ref{holevo}) follows that the states
  $\{\ket{\varphi_x}\}_x$ form an orthonormal basis of their span
  (since $X\equiv \dep{X}{Y}$, they are all different) and that $Y$ captures
  the result of a measurement in this basis, which therefore is the
  computational basis. Since $Y\equiv \dep{Y}{X}$, we get that for each $x$,
  there is a single $y_x\in\mathcal{Y}$ such that
  $\ket{\varphi_x}=\ket{y_x}$.
Primitives $P_{\dep{X}{Y},\dep{Y}{X}}$ and $P_{X,Y}$ are therefore
trivial. \qed
\end{proof}

In other words, the only primitives that two-party quantum protocols
can implement \correctly\ (without the help of a trusted third party)
without leakage are the trivial ones! We note also that \correctness\ 
is not required  for Theorem~\ref{thm:nonleaktrivial} to be true. A slightly more involved
proof can be done solely based on the correct distribution of the output
values. This result can be seen as a quantum extension of the
corresponding characterization for the cryptographic power of
classical protocols in the HBC model. \emph{Whereas classical two-party
protocols cannot achieve anything non-trivial, their quantum
counterparts necessarily leak information when they implement
non-trivial primitives.}

\subsection{Tripartite Embeddings} 
% and Proof of Theorem~\ref{no-amplif}}
\label{app:tripartite}

% The notion of leakage can be extended to protocols involving a trusted
% third party (see
% \append{Appendix~\ref{app:tripartite}}{\cite{SSS09arxiv}}). A special
% case of such protocols are the ones where the players are allowed one
% call to a black box for a certain non-trivial primitive. It is natural
% to ask which primitives can be implemented without leakage in this
% case.  As it turns out, the monotones $H(\dep{X}{Y}|Y)$ and
% $H(\dep{Y}{X}|X)$, introduced in~\cite{WW04}, are also monotones for
% quantum computation, in the sense that all joint random variables
% $X',Y'$ that can be generated by quantum players without leakage using
% one black-box call to $P_{X,Y}$ satisfy $H(\dep{X'}{Y'}|Y')\leq
% H(\dep{X}{Y}|Y)$ and $H(\dep{Y'}{X'}|X')\leq
% H(\dep{Y}{X}|X)$. 
In this section, we extend the notion of leakage to protocols involving a trusted third party. A special case of such protocols are the ones where the players are allowed one call to a black box who provides them with classical variables $\tilde{X},\tilde{Y}$ sampled according to distribution $P_{\tilde{X},\tilde{Y}}$. It is natural to ask which primitives $P_{X,Y}$ can be implemented without leakage in this case.

The state produced by purifying Alice's and Bob's actions in such a protocol up to the final measurement yielding $X$ and $Y$ can without loss of generality be viewed as a pure state shared among Alice, Bob and an
environment $\ket{\psi}_{EABA'B'}= \sum_e
\sqrt{P_E(e)}\ket{e}_E\otimes\ket{\psi^e}_{ABA'B'}$ . 
We define tripartite embeddings of a
primitive $P_{X,Y}$ analogously to the case of embeddings:

\begin{definition}
\label{corr_impl}
A state $\ket{\psi}=\sum_e P_E(e)\ket{e}_E\otimes
\ket{\psi^e}_{ABA'B'}$ is a \emph{tripartite embedding} of $P_{X,Y}$,
if measuring registers $A$ and $B$ in the computational basis yields
$X,Y$ with distribution $P_{X,Y}$ and the ensemble
$\rho_{ABA'B'}\assign \tr_E\proj{\psi}$ satisfies
$S(X;YB')=S(XA';Y)=I(X;Y)$ .
\end{definition}

The generalization of the notion of leakage to tripartite embeddings
is straightforward:
\begin{definition}
Let $\ket{\psi}\in \hil_{E}\otimes\hil_{ABA'B'}$ be a tripartite embedding of $P_{X,Y}$. 
We define the \emph{leakage} 
of $\rho_{ABA'B'}\assign \tr_E\proj{\psi}$ viewed as an implementation of $P_{X,Y}$ as
$$\Delta_{\rho_{ABA'B'}}(P_{X,Y}):=\max \left\{ S(X;BB')-I(X;Y) \, , \, S(AA';Y)-I(X;Y) \right\}\, .$$
\end{definition}

The leakage of a tripartite embedding is non-negative, for the same reason 
as in the bipartite case. However, it is not necessarily symmetric, for instance for the state $\ket{\psi}_{EAB} = \sqrt{1/3} \left( \ket{001} + \ket{110} + \ket{111} \right)$ which can be verified numerically.
% \chris{can we find an easier example? 
% $S(X)=S(Y)=S(A)=S(XB)=h(1/3)=0.9183$ and
% $ \rho_B = \begin{pmatrix}
% 1 & 1\\
% 1 & 2
% \end{pmatrix}/3$
% , $S(B)=0.55$ and $\rho_{AY} = \begin{pmatrix} 
% 0 & & &\\ 
% & 1 & &\\
% & & 1 &\\
% & & & 1\\
% \end{pmatrix}/3$, $S(AY)=1.585$}

The following theorem shows that non-leaking embeddings of any given 
primitive have the property that Bob's register $\tilde{B}$ holding his dependent part $\dep{Y}{X}$ has to be classical if Alice honestly measures her register $A$ in the computational basis to obtain $X$. An analogous statement holds with the roles of Alice and Bob exchanged. Intuitively, this means that Bob cannot learn more than $\dep{Y}{X}$ about Alice's outcome $X$ from a non-leaking embedding.
\begin{theorem}
\label{th:nonleakprivate}
Let $\ket{\psi}\in \hil_{E}\otimes\hil_{ABA'B'}$  be a non-leaking tripartite embedding of primitive $P_{X,Y}$, where $\hil_{A}=\hil_{\tilde{A}}\otimes \hil_{\tilde{A}^{\perp}}$
and $\hil_{B}=\hil_{\tilde{B}}\otimes\hil_{\tilde{B}^{\perp}}$. Then, there exist unitary transforms
 $U\in \unit{A}$ 
 and $V\in \unit{B}$ such that the state
 $\ket{\psi_{U,V}}= (U\otimes \I_{A'} \otimes V\otimes \I_{B'})\ket{\psi}$ has the following property.
Let
  $\tilde{\rho}_{X \tilde{B}\tilde{B}^{\perp}B'}$ and 
$\tilde{\rho}_{\tilde{A} Y \tilde{A}^{\perp}A'}$ be the states obtained when register $A$
of $\trace[E A']{\proj{{\psi_{U,V}}}}$ is measured in the computational basis to obtain $X$, and register
$B$ of $\trace[E B']{\proj{{\psi_{U,V}}}}$ 
is measured in the computational basis to obtain $Y$.
It then holds that
\[\tilde{\rho}_{X \tilde{B}\tilde{B}^{\perp}B'} = \sum_{x,\tilde{y}} P_{X,\dep{Y}{X}}(x,\tilde{y}) \proj{x}_{X} \otimes \proj{\tilde{y}}_{\tilde{B}} \otimes \sigma^{\tilde{y}}_{\tilde{B}^{\perp}B'}\]
and 
\[
\tilde{\rho}_{\tilde{A} Y \tilde{A}^{\perp}A'} = \sum_{\tilde{x},y} P_{\dep{X}{Y},Y}(\tilde{x},y) \,
\proj{\tilde{x}}_{\tilde{A}} \otimes \proj{y}_{Y} \otimes  \tau^{\tilde{x}}_{\tilde{A}^{\perp}A'} \enspace ,
\]
for some set of density matrices $\{\sigma^{\tilde{y}}\}_{\tilde{y}}$ in $\hil_{\tilde{B}^{\perp}B'}$ and
$\{\tau^{\tilde{x}}\}_{\tilde{x}}$ in $\hil_{\tilde{A}^{\perp}A'}$. 
\end{theorem}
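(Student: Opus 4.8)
The plan is to prove the Bob-side statement (the one producing $\tilde\rho_{X\tilde B\tilde B^\perp B'}$) and to obtain the Alice-side statement by the symmetric argument, using that the tripartite leakage is defined symmetrically in the two conditions $S(X;BB')-I(X;Y)$ and $S(AA';Y)-I(X;Y)$. Throughout I work with the cq-state obtained by measuring register $A$ of $\rho_{ABA'B'}=\tr_E\proj\psi$ in the computational basis, namely $\sum_x P_X(x)\proj{x}_X\otimes\rho^x_{BB'}$ with $\rho^x_{BB'}=\tr_{EAA'}\!\big[(\proj{x}_A\otimes\I)\proj\psi\big]/P_X(x)$; its Holevo quantity is exactly $S(X;BB')$. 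Since $\dep{Y}{X}=f_Y(Y)$ is a deterministic coarse-graining of $Y$ (Definition~\ref{dep_part}) and the dependent-part identities give $I(X;Y)=I(X;\dep{Y}{X})$, the non-leaking hypothesis $S(X;BB')=I(X;Y)$ reads $S(X;BB')=I(X;\dep{Y}{X})$. The projective measurement $\{\Pi_{\tilde y}\}$ on $B$ with $\Pi_{\tilde y}=\sum_{y:f_Y(y)=\tilde y}\proj{y}_B\otimes\I_{B'}$ returns $\dep{Y}{X}$ with mutual information $I(X;\dep{Y}{X})$, so it \emph{saturates} the Holevo bound of Theorem~\ref{holevo} for the ensemble $\{P_X(x),\rho^x_{BB'}\}$.

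Next I would extract structure from this saturation. Writing $\Phi\assign\sum_{\tilde y}\Pi_{\tilde y}(\cdot)\Pi_{\tilde y}$ for the pinching by the $\{\Pi_{\tilde y}\}$, the Holevo quantity decomposes as $S(X;BB')=I(X;\dep{Y}{X})+T+\sum_{\tilde y}P_{\dep{Y}{X}}(\tilde y)\,\chi_{\tilde y}$, where $T\ge 0$ is the decrease of the Holevo quantity under $\Phi$ and $\chi_{\tilde y}\ge 0$ is the Holevo quantity of the post-measurement ensemble inside block $\tilde y$. Saturation forces $T=0$ and every $\chi_{\tilde y}=0$. The latter says that, conditioned on $\dep{Y}{X}=\tilde y$, the normalized block states $\Pi_{\tilde y}\rho^x_{BB'}\Pi_{\tilde y}/P_{\dep{Y}{X}\mid X}(\tilde y\mid x)$ are independent of $x$; call this common $x$-independent state $\sigma^{\tilde y}_{BB'}$, supported in the $\tilde y$-block. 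The condition $T=0$, together with the equality case of Theorem~\ref{holevo} (the $\rho^x_{BB'}$ are simultaneously diagonalizable), is what I would use to kill the coherences of $\rho^x_{BB'}$ across distinct $\dep{Y}{X}$-blocks. Here I would also invoke correctness: by Lemma~\ref{lem:qmarkov} the chain $X\leftrightarrow Y\leftrightarrow B'$ fixes the fully pinched computational diagonal blocks to the $x$-independent form $\proj{y}_B\otimes\rho^y_{B'}$, which constrains the admissible common eigenbasis and pins down $\sigma^{\tilde y}_{BB'}$. The target of this step is the block form $\rho^x_{BB'}=\sum_{\tilde y}P_{\dep{Y}{X}\mid X}(\tilde y\mid x)\,\sigma^{\tilde y}_{BB'}$ with each $\sigma^{\tilde y}_{BB'}$ supported in block $\tilde y$ and independent of $x$.

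Granting this block form, the unitary $V\in\unit{B}$ is the computational relabeling $\ket{y}_B\mapsto\ket{f_Y(y)}_{\tilde B}\otimes\ket{r(y)}_{\tilde B^\perp}$ for a fixed bijection $y\leftrightarrow(f_Y(y),r(y))$, which carries the $\tilde y$-block onto $\proj{\tilde y}_{\tilde B}\otimes\hil_{\tilde B^\perp}$ and turns the block form into $\sum_{\tilde y}P_{\dep{Y}{X}\mid X}(\tilde y\mid x)\,\proj{\tilde y}_{\tilde B}\otimes\sigma^{\tilde y}_{\tilde B^\perp B'}$. Averaging over $x$ against $P_X$ and using $P_X(x)\,P_{\dep{Y}{X}\mid X}(\tilde y\mid x)=P_{X,\dep{Y}{X}}(x,\tilde y)$ yields exactly the claimed $\tilde\rho_{X\tilde B\tilde B^\perp B'}$. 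The existence of $U\in\unit{A}$ and the expression for $\tilde\rho_{\tilde A Y\tilde A^\perp A'}$ follow by repeating the argument with the roles of $A,B$ and of $X,Y,\dep{Y}{X},\dep{X}{Y}$ exchanged, using $S(AA';Y)=I(\dep{X}{Y};Y)$. Since $V$ is a relabeling of $B$ preserving the computational $Y$-readout and $U$ a relabeling of $A$ preserving the computational $X$-readout, and they act on disjoint registers, both conclusions hold simultaneously on the single state $\ket{\psi_{U,V}}=(U\otimes\I_{A'}\otimes V\otimes\I_{B'})\ket{\psi}$.

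The main obstacle is the step that removes the cross-block coherences, i.e.\ upgrading ``$\Phi$ preserves the Holevo quantity'' ($T=0$) to ``each $\rho^x_{BB'}$ is genuinely block-diagonal across $\dep{Y}{X}$-blocks''. Preservation of the Holevo quantity under a pinching is only the equality case of joint convexity of relative entropy and does not by itself force block-diagonality; indeed, for the \emph{full} computational pinching the analogue of $T=0$ is automatically satisfied (there it merely restates the non-leaking hypothesis together with correctness) and yields nothing new. The vanishing of the off-block terms must instead be squeezed out of the simultaneous diagonalizability of $\{\rho^x_{BB'}\}$ combined with the $x$-independence of the within-block states and the correctness constraint on the computational diagonal. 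Concretely, the delicate claim is that a common eigenbasis of $\{\rho^x_{BB'}\}$ can be chosen subordinate to the $\{\Pi_{\tilde y}\}$ decomposition, so that each eigenvector carries a definite value of $\dep{Y}{X}$; making this simultaneous-alignment argument rigorous is the technical heart of the proof.
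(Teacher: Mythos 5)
Your strategy coincides with the paper's: split $B$ into a register $\tilde{B}$ carrying $\dep{Y}{X}$ and a remainder, observe that the non-leaking condition together with $I(X;Y)=I(X;\dep{Y}{X})$ forces $S(X;BB')=I(X;\dep{Y}{X})$, and then exploit saturation of the Holevo bound. Your decomposition $S(X;BB')=I(X;\dep{Y}{X})+T+\sum_{\tilde{y}}P_{\dep{Y}{X}}(\tilde{y})\chi_{\tilde{y}}$ is a clean restatement of the paper's chain $S(X;\tilde{B}\tilde{B}^{\perp}B')\geq S(X;\dep{Y}{X}\tilde{B}^{\perp}B')\geq I(X;\dep{Y}{X})$ with all inequalities tight, and your use of $\chi_{\tilde{y}}=0$ (strict concavity of the von Neumann entropy) to get $x$-independence of the within-block states is equivalent to the paper's closing distinguishability argument showing $\sigma^{\tilde{y},x}=\sigma^{\tilde{y}}$. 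Your choice of $V$ as the relabeling $\ket{y}\mapsto\ket{f_Y(y)}\otimes\ket{r(y)}$ and the remark that non-leaking for tripartite embeddings gives \emph{both} equalities (so the Alice-side statement follows symmetrically even though tripartite leakage is not symmetric in general) are also fine and match the paper.

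However, there is a genuine gap, and you name it yourself: you never prove that the coherences of $\rho^x_{BB'}$ across distinct $\dep{Y}{X}$-blocks vanish. You correctly observe that $T=0$ is only the equality case of monotonicity under a pinching and does not by itself force block-diagonality, and you then defer the missing argument to ``simultaneous diagonalizability combined with correctness,'' declaring that making the alignment of the common eigenbasis with the $\{\Pi_{\tilde{y}}\}$ decomposition rigorous ``is the technical heart of the proof.'' But that alignment \emph{is} the content of the theorem: the claimed form of $\tilde{\rho}_{X\tilde{B}\tilde{B}^{\perp}B'}$ asserts exactly this block-diagonality with $x$-independent blocks, so a proof that stops at announcing the difficulty has not established the statement. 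The paper closes this step concretely: since the accessible information $I(X;\dep{Y}{X})$ attains the Holevo quantity $S(X;BB')$, the equality condition of Theorem~\ref{holevo} yields a common eigenbasis $\{\ket{\gamma_z}\}$ with $\rho_x=\sum_z P_{Z|X=x}(z)\proj{\gamma_z}$, and the additional equality $S(X;\tilde{B}\tilde{B}^{\perp}B')=S(X;\dep{Y}{X}\tilde{B}^{\perp}B')$ is then used to argue that this common eigenbasis can be chosen compatible with the computational basis of $\tilde{B}$, after which the $x$-independence of the conditional states is derived separately. You have assembled all of the same ingredients but left the decisive assembly step as a plan rather than a proof.
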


\begin{proof}
Let $\ket{\psi}_{EABA'B'}$ be a tripartite embedding 
of $P_{X,Y}$:
\[
 \ket{\psi} = 
  \sum_{e,x,y,i,j} \sqrt{P_{E,X,Y,I,J}(e,x,y,i,j)}  e^{i\theta(e,x,y,i,j)}\ket{e,x,y,i,j}_{EABA'B'} \enspace .
\]
Let $U$ and $V$ be unitary transforms acting in $\hil_{A}$ and $\hil_{B}$ 
respectively and extracting each party's dependent part ($\dep{X}{Y}$ and $\dep{Y}{X}$ respectively)
in subregisters $\tilde{A}\subseteq A$ and $\tilde{B}\subseteq B$  respectively:
\[ U\ket{x}_A = \ket{f(x)}_{\tilde{A}}\otimes \ket{\mu_x}_{\tilde{A}^{\perp}} \mbox{ and }
   V\ket{y}_{B} = \ket{g(y)}_{\tilde{B}}\otimes \ket{\nu_y}_{\tilde{B}^{\perp}} \enspace ,
\]
for classical functions $f$ and $g$ providing the dependent parts $\dep{X}{Y}$ and $\dep{Y}{X}$
associated to $X$ and $Y$ respectively. For $U$ and $V$ to be unitary, we have that $\bracket{\mu_x}{\mu_{x'}}=0$
for all $x\neq x'$ such that $f(x)=f(x')$, and that $\bracket{\nu_y}{\nu_{y'}}=0$ for all
$y\neq y'$ such that $g(y)=g(y')$.
We define
\[
\begin{split}
 \ket{\tilde{\psi}} & = (\I_{E}  \otimes   U\otimes   \I_{A'} \otimes V \otimes \I_{B'})\ket{\psi} \\
 & =
\sum_{e,x,y,i,j} \sqrt{P_{E,X,Y,I,J}(e,x,y,i,j)}  
e^{i\theta(e,x,y,i,j)}\ket{e,f(x),g(y)}_{\tilde{A}\tilde{B}} \otimes \\
& \hspace{3.7in}
\ket{\mu_x}_{\tilde{A}^{\perp}}\ket{\nu_y}_{\tilde{B}^{\perp}}\ket{i,j}_{A'B'}\enspace ,
\end{split}
\]
where $\hil_{\tilde{A}\tilde{A}^{\perp}}\cong \hil_{A}$ and
$\hil_{\tilde{B}\tilde{B}^{\perp}}\cong \hil_{B}$. 
Re-writing $\ket{\tilde{\psi}}$ in terms of the different values which $X$ may take results in:
\begin{align*}
 \ket{\tilde{\psi}}  &= \sum_{x} \sqrt{P_X(x)} \ket{x}_A
\otimes 
\sum_{e,i} \sqrt{P_{E,I |X=x}(e,i)}
\ket{e,i}_{EA'} \ket{\varphi_{e,{x},i}}_{BB'} \\
 &=: \sum_{x} \sqrt{P_{X}(x)} \ket{x}_{A}\otimes\ket{\zeta_{x}}_{A'E\tilde{B}\tilde{B}^{\perp}B'} \enspace .
\end{align*}
We can view the information provided to Bob about $X$ as the information
available about $X$ when encoded by  
 $x \mapsto \rho_{x}$ where:
\begin{align*}
\rho_{x}  &= \trace[A'E]{\proj{\zeta_{x}}}=
\sum_{e,i} P_{E,I |X=x}(e,i) \proj{\varphi_{e,i}}_{BB'}\enspace .
\end{align*}

By basic properties of the von Neumann entropy, we have that 
\begin{align*}
S(X;BB')_\psi &= S(X;BB')_{\tilde{\psi}}\\
&=S(X;\tilde{B}\tilde{B}^\perp B')_{\tilde{\psi}}\\
&\geq S(X;\dep{Y}{X} \tilde{B}^\perp B')_{\tilde{\psi}}\\  
&\geq I(X;\dep{Y}{X}) \, .
\end{align*} 
Suppose now that $\ket{{\psi}}$ is non-leaking, that is $S(X;BB')=I(X;Y)=I(X;\dep{Y}{X})$. It follows that for non-leaking embeddings, all terms above are actually equal. 

By the Holevo bound (Theorem~\ref{holevo}), we conclude that 
states in $\{\rho_{x}\}_{x}$ are simultaneously diagonalizable. In other words, 
for all $x$, 
\[ \rho_{x} = \sum_{z} P_{Z|X=x}(z) \proj{\gamma_z}_{BB'} \enspace ,
\] 
where $\{\ket{\gamma_z}\}_z$ form an orthonormal basis for some subspace
of $\hil_{\tilde{B}\tilde{B}^\perp B'}$. Since $S(X;BB')_{\tilde{\psi}}=
S(X;\dep{Y}{X}\tilde{B}^{\perp}B')_{\tilde{\psi}}$, we conclude that such a basis can be chosen to be the computational basis for the register $\tilde{B}$ holding $\dep{Y}{X}$:
\begin{eqnarray*}
\tilde{\rho}_{X \tilde{B}\tilde{B}^{\perp}B'} &=& \sum_{x} P_{X}(x) \proj{x}_{X} \otimes \rho_{x}\\
&=& \sum _{x} P_{X}(x) \proj{x}_{X} \otimes \sum_z
P_{Z|X=x}(z) \, \proj{\gamma_z}_{BB'}\\
&=&\sum_{x} P_{X}(x) \proj{x}_{X} \otimes 
\sum_{\tilde{y}} P_{\dep{Y}{X}|X=x}(\tilde{y}) \,
\proj{\tilde{y}}_{\tilde{B}} \otimes \sigma^{\tilde{y},x}_{\tilde{B}^{\perp}B'} \\
&=& \sum_{x,\tilde{y}} P_{X,\dep{Y}{X}}(x,\tilde{y}) 
\proj{x}_{X} \otimes 
\proj{\tilde{y}}_{\tilde{B}}\otimes \sigma^{\tilde{y},x}_{\tilde{B}^{\perp}B'} \enspace .
\end{eqnarray*}
We now observe that $\ket{\psi}$ being non-leaking implies that $\sigma^{\tilde{y},x}_{\tilde{B}^{\perp}B'}$ 
cannot depend on $x$. Otherwise, suppose that for some $\tilde{y}$ there exist
$x\neq x'$ such that $\sigma^{\tilde{y},x}_{\tilde{B}^{\perp}B'}\neq 
\sigma^{\tilde{y},x'}_{\tilde{B}^{\perp}B'}$ with $P_{X,\dep{Y}{X}}(x,\tilde{y})>0$
and $P_{X,\dep{Y}{X}}(x',\tilde{y})>0$. After having measured $\tilde{y}$, Bob can apply an optimal 
measurement for distinguishing between $\sigma^{\tilde{y},x}_{\tilde{B}^{\perp}B'}$ and 
$\sigma^{\tilde{y},x'}_{\tilde{B}^{\perp}B'}$ with some strictly positive bias 
allowing him to get more information than $I(X;\dep{Y}{X})$ thereby implying that $\ket{\psi}$
is leaking. It follows that
\begin{equation*}
\tilde{\rho}_{X \tilde{B}\tilde{B}^{\perp} B'} 
=  \sum_{x,\tilde{y}} P_{X,\dep{Y}{X}}(x,\tilde{y}) 
\proj{x}_{X} \otimes \proj{\tilde{y}}_{\tilde{B}}\otimes \sigma^{\tilde{y}}_{\tilde{B}^{\perp}B'} \enspace .
\end{equation*}
The same argument symmetrically applied to $\rho_{\tilde{A}Y\tilde{A}^{\perp}A'}$ leads to 
\begin{equation*}
 \tilde{\rho}_{\tilde{A} Y \tilde{A}^{\perp} A'} = \sum_{\tilde{x},y} P_{\dep{X}{Y},Y}(\tilde{x},y)\, 
\proj{\tilde{x}}_{\tilde{A}} \otimes  \proj{y}_{Y} \otimes \tau^{x}_{\tilde{A}^{\perp}A'}  \enspace .
\end{equation*}
\qed
\end{proof}

For the remainder of this section, we focus on primitives $P_{{X},{Y}}$
where each variable is equivalent to its dependent part: $X\equiv \dep{X}{Y}$ and $Y\equiv \dep{Y}{X}$.
For non-leaking tripartite embeddings of these primitives, we establish lower bounds on the conditional von Neumann entropy
of the environment given each party's quantum states.

In order to define what we mean by the entropy of the environment, we decompose any tripartite
embedding $\ket{\psi}$ of $P_{X,Y}$ in its Schmidt form with respect to
the environment:
\[ \ket{\psi}_{EABA'B'} = \sum_{w} \sqrt{\lambda_w}\ket{e_w}_E \otimes \ket{\psi_w}_{ABA'B'}\enspace ,
\]
where $\bracket{e_w}{e_{w'}}=\bracket{\psi_w}{\psi_{w'}}=\delta_{w,w'}$.
Now, imagine the environment measures register $E$
 in the Schmidt basis $\{\proj{e_w}\}_{w}$ to get classical random outcome $W$
 such that
 $\Pr[W=w]=\lambda_w$. 
Corollary~\ref{cor:seab} below shows that non-leaking tripartite embeddings
of $P_{X,Y}$ must satisfy $S(W|\,A\,A')\geq H(Y|\, X)$ and 
$S(W|\,B\,B')\geq H(X|\,Y)$. If the Schmidt decomposition is not
unique, the result holds for a measurement in any Schmidt basis.
Measurements in the Schmidt basis minimize the
entropy of the outcome among any complete Von Neumann measurement 
applied to the state of the environment. 
Intuitively, $S(W|\,A\,A')$ measures the amount of
\emph{shared entanglement} between Alice and the environment
(similarly,  $S(W|\,B\,B')$ is a measure for the shared entanglement
between Bob and the environment). The more non-trivial a primitive gets, 
the more the environment has to be entangled with the players in order to preserve privacy. 

\begin{corollary} \label{cor:seab}
Let $\ket{\psi}_{EABA'B'}$ be any non-leaking tripartite embedding of 
$P_{X,Y}$ where $X\equiv \dep{X}{Y}$ and $Y\equiv \dep{Y}{X}$. 
Let $W$ be the random variable 
for the outcome of measuring the environment register $E$ in a Schmidt basis.
Then,
\begin{align*}
S(W|\,A\, A') &= S(W|\,X\, A') \geq
H(Y|\,X)=H(\dep{Y}{X}|\,\dep{X}{Y}) \mbox{ and }\\
 S(W|\,B\, B') &=S(W|\,Y\, B')\geq H(X|\,Y)=H(\dep{X}{Y}|\,\dep{Y}{X}) \enspace .
\end{align*}
\end{corollary}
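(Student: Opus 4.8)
The plan is to derive everything from the explicit form of the reduced states supplied by Theorem~\ref{th:nonleakprivate}. Since $X\equiv\dep{X}{Y}$ and $Y\equiv\dep{Y}{X}$, the unitaries $U,V$ are trivial and the theorem yields $\tilde\rho_{XBB'}=\sum_{x,y}P_{X,Y}(x,y)\proj{x}_X\otimes\proj{y}_B\otimes\sigma^y_{B'}$ together with the symmetric statement for $\tilde\rho_{AYA'}$ with blocks $\tau^x_{A'}$. In particular $\rho_{BB'}=\sum_y P_Y(y)\proj{y}_B\otimes\sigma^y_{B'}$ and $\rho_{AA'}=\sum_x P_X(x)\proj{x}_A\otimes\tau^x_{A'}$ are classical on $B$ and $A$ respectively. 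I would first use this to justify the two equalities $S(W|AA')=S(W|XA')$ and $S(W|BB')=S(W|YB')$: the environment's correlation with Alice's (resp. Bob's) lab is mediated entirely by the classical output register, so replacing $A$ by its computational-basis value $X$ (resp. $B$ by $Y$) leaves the conditional entropy of $W$ unchanged.

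For the inequalities I would work with the Schmidt decomposition $\ket\psi=\sum_w\sqrt{\lambda_w}\ket{e_w}_E\ket{\psi_w}_{ABA'B'}$, where the $\ket{\psi_w}$ are orthonormal, so that $W$ is the eigenbasis measurement of $\rho_{ABA'B'}$. From the forms above one computes directly $S(XBB')=H(X,Y)+\sum_y P_Y(y)S(\sigma^y)$ and $S(BB')=H(Y)+\sum_y P_Y(y)S(\sigma^y)$, hence $S(XBB')-S(BB')=H(X|Y)$. Consequently
\begin{equation*}
S(W|BB')-H(X|Y)=\bigl(S(WBB')-S(BB')\bigr)-\bigl(S(XBB')-S(BB')\bigr)=S(WBB')-S(XBB')\enspace .
\end{equation*}
Considering the joint outcome of measuring $E$ and $A$, the right-hand side equals $S(W|XBB')-S(X|WBB')$, so since $W$ is classical it suffices to prove $S(X|WBB')=0$, i.e.\ that Alice's output $X$ is determined by the environment's Schmidt outcome $W$ together with Bob's registers $BB'$. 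The inequality $S(W|AA')\ge H(Y|X)$ reduces symmetrically to $S(Y|WAA')=0$.

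The crux is therefore to establish $S(X|WBB')=0$, which I would prove by combining purity with the two rigidity properties already in force. By the correctness Markov chain $A'\leftrightarrow X\leftrightarrow Y\leftrightarrow B'$, Alice's workspace $A'$ is independent of $Y$ given $X$, so it cannot carry any of the information distinguishing the value of $X$ inside a fixed value of $Y$; and by the Holevo-equality part of Theorem~\ref{th:nonleakprivate}, the registers $BB'$ reveal exactly $Y$ and nothing further about $X$. As the global state is pure, the residual information needed to pin down $X$ given $Y$ can reside neither in $A'$ nor in $BB'$, and must hence be held by the environment $E$; equivalently, $S(X|BB')_{\psi_w}=0$ must hold for every Schmidt vector $\ket{\psi_w}$. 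Concretely I would write $\ket\psi=\sum_x\sqrt{P_X(x)}\ket{x}_A\ket{\zeta_x}_{EA'BB'}$ with orthonormal $\ket{\zeta_x}$, use the Markov condition to peel off $A'$ as a register depending on $x$ alone, and then invoke the simultaneous diagonalizability from the Holevo bound to argue that the reduced encodings of $X$ into $EBB'$ are perfectly distinguishable.

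I expect the main obstacle to be precisely this last point: showing that tracing out $A'$ does not destroy the distinguishability of the $x$-encodings and, above all, that the \emph{fixed} eigenbasis (Schmidt) measurement of $E$ is the one that extracts $X$ in the presence of $BB'$ (rather than merely \emph{some} measurement of $E$). Once $S(X|WBB')=0$ and the symmetric $S(Y|WAA')=0$ are in hand, the two inequalities follow immediately from the entropy identities above, and the equalities from the classicality of $A$ and $B$; the remainder is routine bookkeeping.
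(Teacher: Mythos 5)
Your architecture tracks the paper's: the equalities come from the classicality of registers $A$ and $B$ established in Theorem~\ref{th:nonleakprivate}, and each inequality is reduced, via correct entropy bookkeeping, to a determinism claim of the form ``$W$ together with one party's registers determines the other party's output'' ($S(X|WBB')=0$, resp.\ $S(Y|WAA')=0$). The gap is that this determinism claim---which is the entire content of the corollary---is asserted rather than proved. Your argument for it is a conservation-of-information heuristic (``the residual information can reside neither in $A'$ nor in $BB'$, and must hence be held by the environment''), which does not establish that a \emph{fixed-basis} measurement of $E$, jointly with $BB'$, recovers $X$ perfectly; you flag this yourself as the main obstacle. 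Two concrete steps are missing, and they are exactly what the paper supplies. First, the Markov condition $A'\leftrightarrow X\leftrightarrow Y$ does not let you ``peel off $A'$ as a register depending on $x$ alone'': it only fixes the reduced state $\rho_{A'}^{x,y}=\rho_{A'}^{x}$, while $A'$ may stay correlated with $EBB'$ inside the pure state. The paper instead conditions on the classical outcome $\mathsf{A'}=a$ of measuring $A'$ and shows, using non-leakage, that the amplitudes $\kappa_a^{x,y}$ cannot depend on $y$ (otherwise measuring $A'$ would bias $P_{Y|X=x}$), so that conditioning on $(x,a)$ yields a pure state $\ket{\eta^{x,a}}_{BEB'}$ carrying the unperturbed conditional distribution of $Y$.

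Second, and decisively, the paper proves the determinism claim by an explicit computation: writing $\ket{\eta^{x,a}}=\sum_y\sqrt{P_{Y|X=x}(y)}\ket{y}_B\ket{\mu^{x,y,a}}_{EB'}$ with each $\ket{\mu^{x,y,a}}$ in Schmidt form over $E$ versus $B'$, it shows that the diagonal form $\tr_E\proj{\eta^{x,a}}=\sum_y P_{Y|X=x}(y)\proj{y}_B\otimes\sigma^y_{B'}$ guaranteed by Theorem~\ref{th:nonleakprivate} forces the environment subspaces $S^{x,y}_a=\mathrm{span}_i\{\ket{e^{x,y,a}_i}\}$ to be mutually orthogonal for distinct $y$: any non-orthogonality would leave off-diagonal $\ketbra{y}{y'}_B$ terms in the partial trace. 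This orthogonality is what makes the environment remember the output unambiguously, and it is the step your appeal to ``simultaneous diagonalizability from the Holevo bound'' does not replace---Holevo equality concerns the encodings of $X$ into $BB'$ (which reveal only $Y$), not the orthogonality of the $E$-supports attached to different output values. Without it, a Schmidt vector $\ket{e_w}$ could a priori straddle the supports belonging to several outputs and $S(X|WBB')=0$ would fail. Your entropy identities are fine and would complete the proof once this lemma is in place.
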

\begin{proof}
Let us write the non-leaking tripartite embedding as a function of Alice's 
output $X=x$ as follows
\begin{align}
 \ket{\psi}_{ABEA'B'} &=\sum_x \sqrt{P_{X}(x)}\ket{x}_A\otimes \sum_y 
\sqrt{P_{Y|{X}=x}(y)}\  \ket{y}_{B}\otimes \sum_{a} \kappa_a^{x,y} \,\ket{a}_{A'} \otimes \ket{\mu^{x,y,a}}_{EB'}
\enspace , \label{trie2} 
\end{align}
where we assume without loss of generality that all $\kappa_a^{x,y}$
are real (and possible complex phases are put into
$\ket{\mu^{x,y,a}}$). We claim that $\kappa^{x,y}_a=\kappa^x_a$, i.e.~the coefficients
do not depend on $y$. To see this, let $p^{x}_a=\Pr{({\sf A'}=a|X=x)}$ where ${\sf A'}$
denotes the classical outcome of measuring register $A'$ in the
computational basis. Suppose for a contradiction
that there exists $y$
such that $|\kappa^{x,y}_a|^2 \neq p^{x}_a$:
\begin{align*} 
\Pr{\left(Y=y|X=x,{\sf{A'}}=a\right)} &= \frac{\Pr{\left(Y=y|X=x\right)}\Pr{\left({\sf{A'}}=a|X=x, Y=y\right)}}
{\Pr{\left({\sf{A'}}=a|X=x \right)}} \\
&= \frac{\Pr{\left(Y=y|X=x\right)}|\kappa^{x,y}_a|^2}{p^x_a}\\
&\neq \Pr{\left(Y=y|X=x\right)} \enspace ,
\end{align*}
which would contradict the fact that $\ket{\psi}$ is
non-leaking. Hence, we can write $\ket{\psi}$ as
\begin{equation*}
 \ket{\psi}_{ABEA'B'} =\sum_x \sqrt{P_{X}(x)}\ket{x}_A \sum_a
 \kappa_a^x \ket{a}_{A'} \ket{\eta^{x,a}}_{BEB'} 
\enspace , 
\end{equation*}
where 
\begin{align*}
 \ket{\eta^{x,a}}_{BEB'} &=\sum_y \sqrt{P_{Y|X=x}(y)} \ket{y}_B
 \ket{\mu^{x,y,a}} \\
&= \sum_y \sqrt{P_{Y|X=x}(y)} \ket{y}_B \sum_i \sqrt{\lambda^{x,y,a}_i} \ket{e^{x,y,a}_i}_E\otimes \ket{b^{x,y,a}_i}_{B'}
\enspace , 
\end{align*}
where in the last step, we wrote the bipartite states  $\ket{\mu^{x,y,a}}_{EB'}$ in the
Schmidt form.

Theorem~\ref{th:nonleakprivate} establishes that if $\ket{\psi}$ is non-leaking then 
\begin{equation}\label{req}
 \trace[E]{\proj{\eta^{x,a}}_{BEB'}} = \sum_y P_{Y|X=x}(y) \proj{y}_B \otimes \sigma^y_{B'} \enspace .
\end{equation}
We claim that \eqref{req} implies that the subspaces $S_a^{x,y} =
\mathrm{span}_i \{ \ket{e^{x,y,a}_i}_E \}$ must be perpendicular for
different values of $y$.
Let $q^x_y := \sqrt{P_{Y|X=x}(y)}$ be used to shorten the notation. We have 
\begin{align}
\trace[E]{\proj{\eta^{x,a}}_{BEB'}} &= \sum_{y,y'} q^x_y q^x_{y'}
                                           \trace[E]{\ketbra{y}{y'}_B \otimes \ketbra{\mu^{x,y,a}}{\mu^{x,y',a}}_{EB'}}\nonumber \\
&= \sum_{y,y'} q^x_yq^x_{y'}
\trace[E]{\ketbra{y}{y'}_B \otimes \sum_{i,j} \ketbra{e^{x,y,a}_i}{e^{x,y',a}_j}_E \otimes \ketbra{b^{x,y,a}_i}{b^{x,y',a}_j}_{B'}} \nonumber \\
&= \sum_{y,y'} q^x_yq^x_{y'}\ketbra{y}{y'}_B \otimes 
     \trace[E]{ \sum_{i,j} \ketbra{e^{x,y,a}_i}{e^{x,y',a}_j}_E \otimes \ketbra{b^{x,y,a}_i}{b^{x,y',a}_j}_{B'}}\nonumber \\
&= \sum_{y,y'} q^x_yq^x_{y'}\ketbra{y}{y'}_B \otimes \nonumber \\
&\phantom{=} \hspace{0.2in} \sum_h \bra{e^{x,y,a}_h} \left( \sum_{i,j} \ketbra{e^{x,y,a}_i}{e^{x,y',a}_j}_E \otimes \ketbra{b^{x,y,a}_i}{b^{x,y',a}_j}_{B'} \right) \ket{e^{x,y,a}_h}\nonumber \\
&= \sum_{y,y'} q^x_yq^x_{y'}\ketbra{y}{y'}_B \otimes 
\sum_{i,j} \bracket{e^{x,y',a}_j}{e^{x,y,a}_i} \otimes \ketbra{b^{x,y,a}_i}{b^{x,y',a}_j}_{B'} \enspace . \label{conc}
\end{align}
Clearly, if $S_a^{x,y}\perp S_a^{x,y'}$ is not
satisfied then there exists $i\neq j,y\neq y'$ such that $\bracket{e^{x,y',a}_j}{e^{x,y,a}_i}\neq 0$ and
register $B$ is not diagonal in the computational basis according (\ref{conc}). 

It follows that for $X=x$, any $\mathsf{A'}=a$, and when $Y=y$ is measured by Bob,  the environment $E$
ends up in subspace $S_a^{x,y}$ of $E$ which corresponds to $Y=y$
unambiguously. As $W$ is the outcome of measuring
$E$ in the Schmidt basis, knowledge of $W$, $X$ and $\mathsf{A'}$ determines
$Y$. Formally, we have $0 \leq S(Y|WXA') \leq S(Y|WX\mathsf{A'})=0$ and it follows that
\begin{align} \begin{split}
S(W|XA') &= S(W|XA') + S(Y|WXA')\\
&= S(WY|XA') \\
&\geq S(Y|XA') \\
&= H(Y|X) \enspace , \end{split} \label{swxy1}
\end{align}
where the inequality holds due to the classicality of $W$ and the last step is due to \correctness.

The same argument with the roles of Alice and Bob reversed results in:
\begin{equation}\label{seyx2}
S(W|\,{Y}\, B') \geq H({X}|\,{Y}) \enspace .
\end{equation}
Equations (\ref{swxy1}) and (\ref{seyx2}) establish the inequalities
of the statement.

To prove the statement's equalities, 
consider  Theorem~\ref{th:nonleakprivate} 
when Bob measures $Y$:
\[ \tilde{\rho}_{A'AY} = \sum_{x,y} P_{X,Y}(x,y)\,
  \tau^x_{A'}\otimes \proj{x,y}_{AY}\enspace ,
\]
which obviously means that
\[ \tilde{\rho}_{A'A} = \sum_x P_{X}(x) \, \tau^x_{A'} \otimes \proj{x}_A\enspace .
\]
Alice's register $A$ is therefore diagonal in the computational basis.
It follows that
\[ S(W|\,A\,A') = S(W|\, X\, A')\enspace .
\]
A symmetric argument from (\ref{seyx2}) shows that
\[ S(W|\, B\, B') = S(W|\, Y\, B')\enspace .
\]
\qed
\end{proof}

Suppose Alice and Bob have access to an ideal functionality for $P_{X,Y}$
as a cryptographic resource. What primitives can Alice and Bob implement
without leakage given access to this resource? Is it possible for them to
``promote'' the ideal functionality for $P_{X,Y}$ to a stronger cryptographic primitive?
Before answering this question in the negative,
let us define what we exactly mean by an \emph{ideal functionality} for primitive $P_{X,Y}$:
\begin{definition}
An \emph{ideal functionality \idf{P_{X,Y}} for primitive $P_{X,Y}$}
is a box that provides Alice and 
Bob with $X$ and $Y$ respectively and nothing more. In particular, the ideal functionality
never provides extra working registers (otherwise, extra registers could without violating \correctness\
provide additional cryptographic resources to Alice and Bob). More formally,
\[ \idf{P_{X,Y}} = \sum_{x,y} P_{X,Y}(x,y) 
\proj{x}_A \otimes \proj{y}_B\enspace .
\] 
\end{definition}

The next theorem shows that one call to an ideal functionality is never sufficient for a non-leaking implementation of a stronger primitive. In other words, quantum communication and computation never allow to amplify  an ideal classical two-party cryptographic primitive into a stronger one without leakage.

\begin{theorem}
\label{no-amplif}
Let $P_{{X},{Y}}$ and $P_{{X'},{Y'}}$ 
be two primitives, where $X\equiv \dep{X}{Y}$, $Y\equiv \dep{Y}{X}$, $X'\equiv \dep{X'}{Y'}$, and
$Y'\equiv \dep{Y'}{X'}$. 
Suppose that  $H({X'}|\,{Y'})>H({X}|\, {Y})$ 
or $H({Y'}|\, {X'})>H({Y}|{X})$. 
Then, any implementation of $P_{{X'},{Y'}}$ using just one call 
to the ideal functionality $\idf{P_{{X},{Y}}}$ leaks information. 
\end{theorem}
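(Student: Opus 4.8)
The plan is to argue by contradiction: assume there is a non-leaking implementation of $P_{X',Y'}$ that makes a single call to $\idf{P_{X,Y}}$, and derive that $H(Y'|X') \le H(Y|X)$ and $H(X'|Y') \le H(X|Y)$, contradicting the hypothesis. First I would purify the whole protocol (as in Section~\ref{purification}) so that the only system outside Alice's and Bob's control is an environment $E$ holding the randomness of the ideal functionality. Because $\idf{P_{X,Y}}$ is a purely classical resource that ``never provides extra working registers'', its canonical purification can be taken with $E=E_AE_B$ holding classical copies, $\sum_{x,y}\sqrt{P_{X,Y}(x,y)}\,\ket{x}_{E_A}\ket{y}_{E_B}\ket{x}_{A_0}\ket{y}_{B_0}$, and the resulting final state $\ket{\psi}_{EABA'B'}$ is a non-leaking tripartite embedding of $P_{X',Y'}$ in the sense of Definition~\ref{corr_impl}. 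Since Alice and Bob never touch $E$, the reduced state $\rho_E=\sum_{x,y}P_{X,Y}(x,y)\proj{x}_{E_A}\otimes\proj{y}_{E_B}$ is diagonal in the computational basis; hence this basis is a Schmidt basis of $\ket{\psi}$ on the $E$-side, and the corresponding outcome is $W=(X_{\mathrm{func}},Y_{\mathrm{func}})$ distributed according to $P_{X,Y}$, with $S(W)=H(X,Y)$.

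Next I would apply Corollary~\ref{cor:seab} to this non-leaking embedding of $P_{X',Y'}$, whose hypotheses $X'\equiv\dep{X'}{Y'}$ and $Y'\equiv\dep{Y'}{X'}$ are exactly those assumed in the theorem. This yields the two lower bounds $S(W|AA')\ge H(Y'|X')$ and $S(W|BB')\ge H(X'|Y')$, valid for a measurement in any Schmidt basis and in particular for our $W$.

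The core of the argument is the matching pair of upper bounds $S(W|AA')\le H(Y|X)$ and $S(W|BB')\le H(X|Y)$. Here I would use that the players are honest-but-curious and therefore retain every value they receive: the curious Alice keeps her functionality output $X_{\mathrm{func}}$ and the curious Bob keeps $Y_{\mathrm{func}}$, so that $S(X_{\mathrm{func}}|AA')=0$ and $S(Y_{\mathrm{func}}|BB')=0$. Writing $W=(X_{\mathrm{func}},Y_{\mathrm{func}})$ and combining the chain rule with the fact that conditioning on the extra (quantum) register cannot increase the entropy of a classical variable (strong subadditivity), I obtain
\[
S(W|AA') = S(X_{\mathrm{func}}|AA') + S(Y_{\mathrm{func}}|X_{\mathrm{func}}AA') \le 0 + S(Y_{\mathrm{func}}|X_{\mathrm{func}}) = H(Y|X),
\]
and symmetrically $S(W|BB')\le H(X|Y)$. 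Combining with the lower bounds gives $H(Y'|X')\le H(Y|X)$ and $H(X'|Y')\le H(X|Y)$, which contradicts the assumption that $H(X'|Y')>H(X|Y)$ or $H(Y'|X')>H(Y|X)$.

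The step I expect to be the main obstacle is precisely this upper bound, i.e.~making rigorous the claim that in a non-leaking implementation each party effectively ``holds'' its own functionality output, so that the residual environment entropy given one party's view is only the conditional entropy of the other party's output. One must check that keeping $X_{\mathrm{func}}$ is consistent with the non-leaking assumption (a curious Alice who keeps $X_{\mathrm{func}}$ must still learn no more than $I(X';Y')$ about $Y'$); this is where the classicality of $\idf{P_{X,Y}}$ and the Markov structure guaranteed by Lemma~\ref{lem:qmarkov} and Theorem~\ref{th:nonleakprivate} are used, and it is the reason the functionality must be a bare classical box with no side registers. The remaining steps---the purification, the identification of $W$ with the classical functionality outputs, and the invocation of Corollary~\ref{cor:seab}---are routine.
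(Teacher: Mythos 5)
Your proposal is correct and follows essentially the same route as the paper's proof: purify the single call to $\idf{P_{X,Y}}$ into an environment whose Schmidt-basis outcome $W$ is the classical pair of functionality outputs, lower-bound $S(W|AA')$ and $S(W|BB')$ via Corollary~\ref{cor:seab}, and upper-bound them by $H(Y|X)$ and $H(X|Y)$ using the fact that the QHBC parties may keep classical copies of their functionality outputs. The "main obstacle" you flag is resolved exactly as you suggest (and as the paper does): if keeping the classical copy already reveals more than $I(X';Y')$ the protocol leaks and we are done, and otherwise the state including the copies is a non-leaking tripartite embedding to which the corollary applies.
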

\begin{proof}
We may view the ideal functionality $\idf{P_{X,Y}}$ as a box that conceals its environment
to Alice and Bob. For instance, the state
\[ \ket{\psi}_{E\hat{A}\hat{B}} = \sum_{x,y} \sqrt{P_{\dep{X}{Y},\dep{Y}{X}}(x,y)} \ket{x,y}_E\otimes\ket{x,y}_{\hat{A}\hat{B}}\enspace .
\]
is a non-leaking embedding of $P_{\dep{X}{Y},\dep{Y}{X}}$ 
with $S(W |\,\hat{A})=H(Y|\,{X})$ and $S(W|\,\hat{B})=H(X|\,Y)$ where
$W$ is defined as above as the classical outcome when measuring the environment
in the Schmidt basis.

Consider any \correct\ protocol implementing $P_{X',Y'}$ where Alice
and Bob purify their actions but are otherwise honest. An execution
of the protocol will produce
a non-leaking tripartite embedding of $P_{X',Y'}$.
Just before the call to $\idf{P_{X,Y}}$, Alice's internal register
$A_0$ and Bob's internal register $B_0$ are such that 
\[ S(W|\,A_0)=S(W|\,B_0)=0 \enspace,
\]
since the environment is in a pure state.
Just after the call to \idf{P_{X,Y}},
Alice's register $A_1$ and Bob's register $B_1$ satisfy:
\[ S(W|\,A_1)=H(Y|\, X)\mbox{ and } S(W|\, B_1)=H(X|\, Y)\enspace ,
\]
since $\hat{A}\subseteq A_1$ and $\hat{B}\subseteq B_1$. Notice
also that the state provided to Alice and Bob by  $\idf{P_{X,Y}}$
is diagonal in the computational basis: the information is classical.
It follows that Alice and Bob can copy this information and keep it 
with them during the execution of the protocol while remaining able
to run the protocol in a honest-but-curious fashion. The Schmidt basis for the environment
remains the same after the call to  \idf{P_{X,Y}}. It follows that 
at any point $t$ in the protocol evolution, Alice's and Bob's internal
quantum registers $A_t$ and $B_t$ respectively are such that: 
\begin{equation}\label{mon}
 S(W|\,A_t)\leq H(Y|\, X)\mbox{ and } S(W|\, B_t)\leq H(X|\, Y)\enspace .
\end{equation}
That is, $S(W|A_t)$ and $S(W|B_t)$ are non-increasing monotones for
honest-but-curious quantum players in secure two-party
computation similar to $H(Y|\, X)$ and $H(X|\, Y)$ in the classical case~\cite{WW04}. 

At the very last step $t_{\tt max}$ of the protocol, $A_{t_{\tt max}} :=A\otimes A'$
and $B_{t_{\tt max}}:=B\otimes B'$.  Therefore,
\[S(W|\,A\,A')\leq H(Y|\, X)\mbox{ and } S(W|\, B\,B')\leq H(X|\, Y)\enspace .
\]
Since $H(Y|\, X)<H(Y'|\, X')$ or $H(X|\, Y)<H(X'|\, Y')$, we conclude that
either $S(W|\, A\,A') < H(Y'|\, X')$ or $S(W|\, B\,B') < H(X'|\, Y')$. It follows by Corollary~\ref{cor:seab}
that the implementation of $P_{X',Y'}$ must leak.
\qed
\end{proof}

As in the classical case~\cite{WW04}, it is straightforward to use Theorem~\ref{no-amplif} in order to determine 
a lower bound on the number of calls to a weaker primitive required to implement
a stronger one without leakage: $P_{X',Y'}$ can be implemented without leakage by $n$ calls
to $P_{X,Y}$ only if $H(X'|\, Y')\leq n H(X|\, Y)$ and $H(Y'|\, X') \leq n H(Y|\, X)$.

\subsection{Reducibility of Primitives and Their Leakage}
This section is concerned with the following question: Given two
primitives $P_{X,Y}$ and $P_{\tilde{X},\tilde{Y}}$ such that $P_{X,Y}$ is reducible
to $P_{\tilde{X},\tilde{Y}}$, what is the relationship between the leakage of
$P_{X,Y}$ and the leakage of $P_{\tilde{X},\tilde{Y}}$?  We use the notion of
reducibility in the following sense: We say that a primitive $P_{X,Y}$
is \emph{reducible in the HBC model} to a primitive $P_{\tilde{X},\tilde{Y}}$ if
$P_{X,Y}$ can be securely implemented in the HBC model from (one call
to) a secure implementation of $P_{\tilde{X},\tilde{Y}}$. The above question can
also be generalized to the case where $P_{X,Y}$ can be computed from
$P_{\tilde{X},\tilde{Y}}$ only with certain probability.  Notice that the answer,
even if we assume perfect reducibility, is not captured in our
previous result from Lemma~\ref{super_leak}, since an embedding of
$P_{\tilde{X},\tilde{Y}}$ is not necessarily an embedding of $P_{X,Y}$ (it might
violate the \correctness\ condition). However, under certain
circumstances, we can show that $\Delta_{P_{\tilde{X},\tilde{Y}}}\geq
\Delta_{P_{X,Y}}.$
\begin{theorem}
\label{reduction}
Assume that primitives $P_{X,Y}$ and $P_{\tilde{X},\tilde{Y}}=P_{\tilde{X}_0\tilde{X}_1,\tilde{Y}_0\tilde{Y}_1}$ satisfy the condition:
$$
\sum_{x,y:P_{\tilde{X}_0,\tilde{Y}_0|\tilde{X}_1=x,\tilde{Y}_1=y}\simeq P_{X,Y}} P_{\tilde{X}_1,\tilde{Y}_1}(x,y)\geq 1-\delta,
$$
where the relation $\simeq$ means that the two distributions are equal
up to relabeling of the alphabet. Then, $\Delta_{P_{\tilde{X},\tilde{Y}}}\geq
(1-\delta)\Delta_{P_{X,Y}}.$
\end{theorem}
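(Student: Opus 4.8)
The plan is to reduce to the minimal-leakage \emph{regular} embedding of $P_{\tilde{X},\tilde{Y}}$ and then show that conditioning on the ``index'' $(\tilde{X}_1,\tilde{Y}_1)$ splits its leakage into a convex combination of leakages of conditional primitives, most of which are copies of $P_{X,Y}$. Concretely, by Lemma~\ref{super_leak} I may pick a regular embedding $\ket{\psi}\in\emb{P_{\tilde{X},\tilde{Y}}}$ attaining $\Delta_{P_{\tilde{X},\tilde{Y}}}$, and by Lemma~\ref{symmth} work with $\Delta_\psi=S(\tilde{X};B)-I(\tilde{X};\tilde{Y})$. Writing $A=A_0A_1$ and $B=B_0B_1$ with $A_1,B_1$ the registers holding $\tilde{X}_1,\tilde{Y}_1$, I observe that for each fixed pair $(x,y)$ the state obtained by projecting $A_1,B_1$ onto $\ket{x},\ket{y}$ is, up to the now-frozen index registers, a regular embedding $\ket{\phi_{x,y}}$ of the conditional primitive $P_{\tilde{X}_0\tilde{Y}_0\mid \tilde{X}_1=x,\tilde{Y}_1=y}$. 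When $(x,y)$ lies in the ``good'' set of the hypothesis this conditional is $P_{X,Y}$ up to relabeling, so $\Delta_{\phi_{x,y}}\geq \Delta_{P_{X,Y}}$ by the minimality in Definition~\ref{infimum} (relabeling is a local permutation and leaves the leakage invariant).

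The technical core is the decomposition inequality
\[
  \Delta_\psi(P_{\tilde{X},\tilde{Y}}) \;\geq\; \sum_{x,y} P_{\tilde{X}_1,\tilde{Y}_1}(x,y)\,\Delta_{\phi_{x,y}}\bigl(P_{\tilde{X}_0\tilde{Y}_0\mid\tilde{X}_1=x,\tilde{Y}_1=y}\bigr)\, .
\]
To prove it I would peel off the two indices one side at a time. First let Bob measure $B_1$ in the computational basis; since this is a local operation, monotonicity of mutual information under CPTP maps~\cite[Thm.~11.15]{NC00} gives $S(\tilde{X};B)\geq S(\tilde{X};\tilde{Y}_1 B_0)$, and the chain rule for a classical register yields $S(\tilde{X};\tilde{Y}_1 B_0)=S(\tilde{X};\tilde{Y}_1)+\sum_y P_{\tilde{Y}_1}(y)\,S(\tilde{X};B_0)_{\psi_y}$. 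Then, for each fixed $y$, I split $\tilde{X}=\tilde{X}_0\tilde{X}_1$ by the chain rule again, $S(\tilde{X};B_0)_{\psi_y}=S(\tilde{X}_1;B_0)_{\psi_y}+\sum_x P_{\tilde{X}_1\mid\tilde{Y}_1=y}(x)\,S(\tilde{X}_0;B_0)_{\phi_{x,y}}$, and recognise $S(\tilde{X}_0;B_0)_{\phi_{x,y}}=\Delta_{\phi_{x,y}}+I(\tilde{X}_0;\tilde{Y}_0\mid \tilde{X}_1=x,\tilde{Y}_1=y)$.

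The crucial — and initially counterintuitive — point is that the term $S(\tilde{X}_1;B_0)_{\psi_y}$ must be kept rather than discarded: measuring $B_0$ in the computational basis gives $S(\tilde{X}_1;B_0)_{\psi_y}\geq I(\tilde{X}_1;\tilde{Y}_0\mid \tilde{Y}_1=y)$, and averaging over $y$ this supplies exactly the cross-term $I(\tilde{X}_1;\tilde{Y}_0\mid\tilde{Y}_1)$. Collecting everything, the ``overhead'' beyond $\sum_{x,y}P_{\tilde{X}_1,\tilde{Y}_1}(x,y)\,\Delta_{\phi_{x,y}}$ equals $I(\tilde{X};\tilde{Y}_1)+I(\tilde{X}_1;\tilde{Y}_0\mid\tilde{Y}_1)+I(\tilde{X}_0;\tilde{Y}_0\mid\tilde{X}_1\tilde{Y}_1)-I(\tilde{X};\tilde{Y})$, which vanishes by the two classical chain-rule expansions $I(\tilde{X};\tilde{Y})=I(\tilde{X};\tilde{Y}_1)+I(\tilde{X};\tilde{Y}_0\mid\tilde{Y}_1)$ and $I(\tilde{X};\tilde{Y}_0\mid\tilde{Y}_1)=I(\tilde{X}_1;\tilde{Y}_0\mid\tilde{Y}_1)+I(\tilde{X}_0;\tilde{Y}_0\mid\tilde{X}_1\tilde{Y}_1)$. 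This exact cancellation is where I expect the main difficulty to lie: $(\tilde{X}_1,\tilde{Y}_1)$ is \emph{not} a connected component (it is not a function of one side alone), so the clean decomposition of Equation~\eqref{eq:splitleak} does not apply directly, and one must orient every data-processing step so the leftover is identically zero rather than a stray term of the wrong sign. Finally, since leakage is nonnegative I may drop the ``bad'' terms, bound each good one below by $\Delta_{P_{X,Y}}$, and invoke $\sum_{\mathrm{good}}P_{\tilde{X}_1,\tilde{Y}_1}(x,y)\geq 1-\delta$ to conclude $\Delta_{P_{\tilde{X},\tilde{Y}}}=\Delta_\psi\geq(1-\delta)\,\Delta_{P_{X,Y}}$.
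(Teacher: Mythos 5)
Your proposal is correct and takes essentially the same route as the paper's proof: both pass to a minimal regular embedding via Lemma~\ref{super_leak}, peel off the classical indices $\tilde{X}_1$ and $\tilde{Y}_1$ one side at a time using data processing/Holevo to arrive at the decomposition $\Delta_\psi(P_{\tilde{X},\tilde{Y}})\geq\sum_{x,y}P_{\tilde{X}_1,\tilde{Y}_1}(x,y)\,\Delta_{\phi_{x,y}}$ (the paper's Equation~\eqref{reduc}), and then lower-bound the ``good'' terms by $\Delta_{P_{X,Y}}$ while discarding the rest. The only difference is bookkeeping: the paper writes the leakage as $H(\tilde{Y}|\tilde{X})-S(\tilde{Y}|A)$, under which the conditional entropies average cleanly over the index and the cross-terms you track never surface, whereas your mutual-information chain-rule version requires the explicit cancellation you correctly carry out.
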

\begin{proof}
State $\ket{\psi}_{A_0A_1B_0B_1}\in \emb{P_{\tilde{X},\tilde{Y}}}$ can be written in the form:
$$\ket{\psi}=\sum_{x\in\mathcal{X}'_1} \sqrt{P_{\tilde{X}_1}(x)}\ket{x}_{A_1}\ket{\psi^x}_{A_0B}\, ,$$ 
where each
$\ket{\psi^x}$ 
is a regular embedding of $P_{\tilde{X}_0\tilde{Y}_0\tilde{Y}_1|\tilde{X}_1=x}$.  
Due to the Holevo bound (Theorem~\ref{holevo}), we have
$$S(\tilde{Y}|A)_\psi = S(\tilde{Y}|A_0 A_1)_\psi \leq
S(\tilde{Y}|A_0,\tilde{X}_1)_\psi=\sum_x
P_{\tilde{X}_1}(x)S(\tilde{Y}|A_0,\tilde{X}_1=x)_{\psi^x} \enspace ,$$
and we obtain for the leakage of $\ket{\psi}$ that
\begin{eqnarray*}
\Delta_\psi(P_{\tilde{X},\tilde{Y}})&=&H(\tilde{Y}|\tilde{X})-S(\tilde{Y}|A)_\psi\\
&\geq& H(\tilde{Y}|\tilde{X})-\sum_x P_{\tilde{X}_1}(x)S(\tilde{Y}|A_0,\tilde{X}_1=x)_{\psi^x}\\
&=& \sum_x P_{\tilde{X}_1}(x)(H(\tilde{Y}|\tilde{X}_0,\tilde{X}_1=x)-S(\tilde{Y}|A_0,\tilde{X}_1=x)_{\psi^x})\\
&=& \sum_x P_{\tilde{X}_1}(x)\Delta_{\psi^x}(P_{\tilde{X}_0,\tilde{Y}_0\tilde{Y}_1|\tilde{X}_1=x}) \, .
\end{eqnarray*}

By applying the same argument to each $\ket{\psi^x}$, we obtain that 
\begin{equation}
\label{reduc}
\Delta_\psi(P_{\tilde{X},\tilde{Y}})\geq
\sum_{xy}P_{\tilde{X}_1,\tilde{Y}_1}(x,y)\Delta_{\psi^{x,y}}(P_{\tilde{X}_0,\tilde{Y}_0|\tilde{X}_1=x,\tilde{Y}_1=y})
\, ,
\end{equation}
where each $\ket{\psi^{x,y}}$ is a regular embedding of $P_{\tilde{X}_0,\tilde{Y}_0|\tilde{X}_1=x,\tilde{Y}_1=y}$. 
For each $(x,y)$ such that 
$P_{\tilde{X}_0,\tilde{Y}_0|\tilde{X}_1=x,\tilde{Y}_1=y}\simeq P_{X,Y}$ is satisfied, we get that 
$$\Delta_{\psi^{x,y}}(P_{\tilde{X}_0,\tilde{Y}_0|\tilde{X}_1=x,\tilde{Y}_1=y})\geq \Delta_{P_{X,Y}}\, .$$ 
Since $\sum_{x,y:P_{\tilde{X}_0,\tilde{Y}_0|\tilde{X}_1=x,\tilde{Y}_1=y}\simeq P_{X,Y}} P_{\tilde{X}_1,\tilde{Y}_1}(x,y)\geq 1-\delta,$ 
we get from~(\ref{reduc}) that 
$$\Delta_\psi(P_{\tilde{X},\tilde{Y}})\geq (1-\delta)P_{X,Y}\, .$$ 
\qed
\end{proof}

Theorem~\ref{reduction} will allow to derive a lower bound on the leakage of
1-out-of-2 Oblivious Transfer of $r$-bit strings in
Section~\ref{sec:primleakage}.

\section{The Leakage of Universal Cryptographic
  Primitives}\label{sec:primleakage} 

 In this section, we exhibit lower
bounds on the leakage of the following universal two-party primitives.

\begin{description}
\item[String Rabin OT ($\srot$):]~\cite{Rabin81} Alice sends a random
  string of $r$ bits to Bob who receives it with probability $1/2$,
  otherwise he receives a special symbol $\bot$. Alice does not learn
  any information about whether Bob has received the string she sent. 

For $x\in\{0,1\}^r$ and $y\in\{0,1\}^r\cup \{\bot\}$:
\[\psrot(x,y)=\left\{\begin{array}{ll}{2^{-r-1}} &\mbox{if $x=y$ or $y=\bot$,}\\
                                    0                 &\mbox{otherwise},
\end{array}\right.
\]
is the joint probability distribution associated to an execution of
Rabin OT of a random binary string of length $r$.

\item[One-out-of-two String OT ($\sotr$):]~\cite{Wiesner83,EGL82}
Alice sends two random $r$-bit strings to Bob who decides which of them he receives. Bob does not learn 
any information about the other one of Alice's strings and Alice does not learn which of the strings has been received by Bob. We simply write \ot\ for the case of 1-out-of-2 oblivious transfer of bits ($r=1$).

For $x_0,x_1,y\in\{0,1\}^r$ and $c\in\{0,1\}$:
\[\psot((x_0,x_1),(c,y))=\left\{\begin{array}{ll}{2^{-2r-1}} &\mbox{if $y=x_c$,}\\
                                                            0    &\mbox{otherwise},
\end{array}\right.
\]
is the joint probability distribution associated to an execution of
one-out-of-two $r$-bit string OT upon random inputs.

\item[Additive sharing of AND ($\sand$):]~\cite{PR94}
Alice and Bob choose their respective input bits $x$ and $y$, and receive the output bits $a$ resp. $b$ such that $a\oplus b=x\wedge y$ and $\Pr[a=0]=1/2$. They do not get any other information. 

For $x,y,a,b\in\{0,1\}$:
\[\pnl((x,a),(y,b))=\left\{\begin{array}{ll}\frac{1}{8} & \mbox{if $xy=a\oplus b$,}\\
                                                 0       & \mbox{otherwise}, 
                          \end{array}\right.
\]
is the joint probability distribution associated to the generation of an additive
sharing for the {\sc and} of two random bits.

\item[Noisy one-out-of-two OT ($\otn$):] Alice sends two bits to Bob
 who decides which of them he wants to receive. The selected bit is
 transmitted to him over a noisy channel with noise rate $p$. Bob
 does not learn any information about the other one of Alice's bits
 and Alice does not learn any information about Bob's selection bit.

For $x_0,x_1,y,c\in\{0,1\}$ and $p\in\,(0,1/2)$:
\[\potn((x_0,x_1),(c,y))=\left\{\begin{array}{ll}\frac{1-p}{8} &\mbox{if $y=x_c$,}\\
                                                \frac{p}{8} &\mbox{otherwise},
\end{array}\right.
\]
is the joint probability distribution associated to an execution
of one-out-of-two OT where the selected bit is received  through a binary symmetric
channel with error rate $p$.

\end{description}

% In the following table, $\srot$ denotes the $r$-bit string version of
% randomized Rabin OT, where Alice receives a random $r$-bit string and
% Bob receives the same string or an erasure symbol, each with
% probability 1/2.  Similarly, $\sotr$ denotes the string version of
% \ot, where Alice receives two $r$-bit strings and Bob receives one of
% them. By $\otn$ we denote the noisy version of \ot, where the \ot\
% functionality is implemented \correctly\ only with probability $1-p$.
Table~\ref{tab:lowerbounds} summarizes the lower bounds on the leakage
of these primitives (the derivations can be found in
Appendix~\ref{app:universalleakage}). We note that Wolf and
Wullschleger~\cite{WW05b} have shown that a randomized \ot\ can be
transformed by local operations into an additive sharing of an AND
(here called \sand).  Therefore, our results for \ot\ below also apply
to \sand.

\begin{table}[h] \center \renewcommand{\arraystretch}{1.5}
\begin{tabular}{@{}l|c|@{\ \ }l@{}} \hline
{\bf primitive} & {\bf leaking at least} & {\bf comments} \\ \hline
$\srotp{1}$ & $(h(\frac{1}{4})-\frac{1}{2}) \approx 0.311$ & same
leakage for all regular embeddings \\ \hline 
$\srot$ & $(1-O(r2^{-r}))$ & same leakage for all regular embeddings\\ \hline
$\ot,\sand$ \hspace{2mm} & $\frac12$ & minimized by canonical embedding\\ \hline 
$\sotr$ & $(1-O(r2^{-r}))$ & (suboptimal) lower bound \\ \hline 
$\otn$ & $\frac{\left(1/2-p-\sqrt{p(1-p)}\right)^2}{8\ln 2}$ \hspace{2mm} &
if $p<\sin^2(\pi/8)\approx 0.15$, (suboptimal) lower bound \\ \hline
\end{tabular}
\medskip
\caption{Lower bounds on the leakage for universal two-party
  primitives} \label{tab:lowerbounds}
\vspace{-0.6cm}
\end{table}

\sotr\ and \otn\ are primitives where the direct evaluation of the
leakage for a general embedding $\ket{\psi_\theta}$ is hard, because
the number of possible phases increases exponentially in the number of
qubits. Instead of computing $S(A)$ directly, we derive (suboptimal)
lower bounds on the leakage.

For the primitive \potn, our lower bound on the leakage only holds for
$p<\sin^2(\pi/8)\approx 0.15$. Notice that in reality, the leakage is
strictly positive for any embedding of \potn\ with $p<1/4$, since for
$p<1/4$, \potn\ is a non-trivial primitive. On the other hand, $\potnq$ is a trivial primitive implemented securely by the following
protocol in the classical HBC model:
\begin{enumerate}
\item Alice chooses randomly between her input bits $x_0$ and $x_1$ and sends the chosen value $x_a$ to Bob. 
\item Bob chooses his selection bit $c$ uniformly at random and sets $y\assign x_a$.
\end{enumerate}
Equality $x_c=y$ is satisfied if either $a=c$, which happens with
probability $1/2$, or if $a\neq c$ and $x_a=x_{1-a}$, which happens
with probability $1/4$. Since the two events are disjoint, it follows
that $x_c=y$ with probability $3/4$ and that the protocol implements
\potnq. The implementation is clearly secure against
honest-but-curious Alice, since she does not receive any message from
Bob. It is also secure against Bob, since he receives only one bit
from Alice. By letting Alice randomize the value of the bit she is
sending, the players can implement \potn\ securely for any value
$1/4<p\leq 1/2$.

\section{Conclusion and Open Problems}\label{conclusion}
We have provided a quantitative extension of qualitative impossibility
results for two-party quantum cryptography. All non-trivial classical primitives
leak information when implemented by quantum protocols.  Notice that
demanding a protocol to be non-leaking does in general not imply the
privacy of the players' outputs.  For instance, consider a protocol
implementing \ot\ but allowing a curious receiver with probability
$\frac12$ to learn both bits simultaneously or with probability
$\frac12$ to learn nothing about them.  Such a protocol for \ot\ would
be non-leaking but nevertheless insecure.  Consequently,
Theorem~\ref{thm:nonleaktrivial} not only tells us that any quantum
protocol implementing a non-trivial primitive must be insecure, but
also that a privacy breach will reveal itself as
leakage. Our framework allows to quantify the leakage of any two-party quantum
protocol \correctly\ implementing a primitive. Our impossibility results
are different than common ones since they only rely on
the \correctness\ of the protocol, not on the perfect privacy
of a protocol against one party . Moreover, the generic attack that allows to 
show leakage is simply implemented by purifying the parties' actions.  
Furthermore, we present
lower bounds on the leakage of some \correct\ universal two-party primitives.

A natural open question is to find a way to identify good embeddings
for a given primitive. 
Based on the examples of \srot\ and \ot, it is tempting to conjecture
the following.
\begin{conjecture}
The leakage of any primitive $P_{X,Y}$ is minimized by its canonical embedding.
\end{conjecture}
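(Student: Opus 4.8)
The plan is to reduce the conjecture to a purely spectral question and then attack it via majorisation. By Lemma~\ref{super_leak} the leakage of the primitive equals the minimum of the leakage over all \emph{regular} embeddings $\ket{\psi_\theta}\in\emb{P_{X,Y}}$, and by Lemma~\ref{symmetry} the leakage of a regular embedding equals $S(B)_{\psi_\theta}-I(X;Y)$, which in turn equals $S(A)_{\psi_\theta}-I(X;Y)$ since the state is pure. As $I(X;Y)$ depends only on $P_{X,Y}$ and not on the phases, the conjecture is equivalent to the statement that the canonical phase function $\theta\equiv\vec{0}$ minimises the von Neumann entropy $S(\rho_B^\theta)$, where $\rho_B^\theta=\tr_A\proj{\psi_\theta}$ has the $\theta$-independent diagonal $(\rho_B^\theta)_{y,y}=P_Y(y)$ and off-diagonal entries $(\rho_B^\theta)_{y,y'}=\sum_x e^{i(\theta(x,y)-\theta(x,y'))}\sqrt{P_{X,Y}(x,y)P_{X,Y}(x,y')}$. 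Writing $\ket{v_x^\theta}=\sum_y e^{i\theta(x,y)}\sqrt{P_{X,Y}(x,y)}\ket{y}$ gives the convenient decomposition $\rho_B^\theta=\sum_x\proj{v_x^\theta}$, in which changing $\theta$ only alters the phases of the coordinates of the $\ket{v_x^\theta}$ while preserving their moduli (and hence the diagonal); equivalently $\ket{\psi_\theta}=U_\theta\ket{\psi_{\vec{0}}}$ for the $A$-controlled phase unitary $U_\theta=\sum_x\proj{x}_A\otimes\mathrm{diag}_y(e^{i\theta(x,y)})$.

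The approach I would take is to prove the stronger statement that the eigenvalue vector of the canonical matrix \emph{majorises} that of every phased matrix, $\lambda^{\downarrow}(\rho_B^{\vec{0}})\succ\lambda^{\downarrow}(\rho_B^\theta)$ for all $\theta$. Since the von Neumann entropy is Schur-concave, majorisation immediately yields $S(\rho_B^{\vec{0}})\le S(\rho_B^\theta)$, which is exactly the conjecture. The first Ky Fan partial sum is elementary: for a normalised leading eigenvector $\ket{u}$ of $\rho_B^\theta$,
\[ \lambda_1(\rho_B^\theta)=\sum_x|\langle u|v_x^\theta\rangle|^2=\sum_x\Big|\sum_y\overline{u_y}\,e^{i\theta(x,y)}\sqrt{P_{X,Y}(x,y)}\Big|^2\le\sum_x\Big(\sum_y|u_y|\sqrt{P_{X,Y}(x,y)}\Big)^2=\langle w|\rho_B^{\vec{0}}|w\rangle\le\lambda_1(\rho_B^{\vec{0}}), \]
where $\ket{w}$ is the unit vector with non-negative coordinates $|u_y|$; the first inequality is the triangle inequality and the last is the Rayleigh bound, reflecting that coherent (in-phase) superpositions are optimal for the entrywise-non-negative matrix $\rho_B^{\vec{0}}$. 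This already settles the conjecture whenever Bob's output is essentially binary: for $|\mathcal{Y}|=2$ the diagonal is fixed and the entropy of a $2\times2$ density matrix is strictly decreasing in the modulus of its unique off-diagonal entry, which is maximised at $\theta\equiv\vec{0}$; this recovers the tight bounds already obtained for \ot\ and \sand.

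The main obstacle is the general Ky Fan sum $\sum_{i=1}^k\lambda_i^{\downarrow}$ for $k\ge 2$. By Ky Fan's maximum principle this equals $\max\sum_{j=1}^k\sum_x|\langle u_j|v_x^\theta\rangle|^2$ over \emph{orthonormal} $k$-tuples $\{\ket{u_j}\}$, and the natural attempt is to replace each optimiser $\ket{u_j}$ by its entrywise-modulus vector $\ket{w_j}$ and pass to $\rho_B^{\vec{0}}$. The difficulty is that taking entrywise moduli destroys orthonormality, so the $\ket{w_j}$ are no longer an admissible test family and the bound by $\sum_{i\le k}\lambda_i^{\downarrow}(\rho_B^{\vec{0}})$ does not follow. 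Overcoming this is the crux, and I expect it to require either a genuinely multi-vector Perron--Frobenius/Wielandt-type comparison for Hermitian matrices entrywise dominated in modulus by a non-negative one, or an explicit unital (doubly-stochastic) relation sending $\rho_B^{\vec{0}}$ to $\rho_B^\theta$. No such relation is apparent, since $U_\theta$ correlates the target phases with the summation index $x$ rather than acting as a fixed channel. The persistence of precisely this obstruction is consistent with the fact that only suboptimal leakage bounds are known for the high-dimensional primitives \sotr\ and \otn, and it is the reason the statement remains a conjecture rather than a theorem.
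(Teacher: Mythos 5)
The statement you are trying to prove is stated in the paper as a \emph{conjecture}: the authors give no proof, explicitly list it as an open question, and even cite the Jozsa--Schlienz phenomenon as a warning that the geometric intuition behind it (``less pairwise distinguishable states give a less entropic mixture'') can fail. So there is no proof in the paper to compare against, and your text, by its own admission, is not a proof either. That said, your reduction is sound: by Lemma~\ref{super_leak} it suffices to minimize over regular embeddings, by Lemma~\ref{symmetry} the leakage of a regular embedding is $S(\rho_B^\theta)-I(X;Y)$ with $I(X;Y)$ phase-independent, and your decomposition $\rho_B^\theta=\sum_x\proj{v_x^\theta}$ with fixed entrywise moduli is the right way to set up the spectral question. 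Your first Ky Fan inequality $\lambda_1(\rho_B^\theta)\le\lambda_1(\rho_B^{\vec 0})$ (triangle inequality followed by the Rayleigh bound on the entrywise non-negative matrix) is correct and is a genuine, if modest, piece of evidence for the conjecture.

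The gap is exactly where you locate it, and it is the entire content of the conjecture: the partial sums $\sum_{i\le k}\lambda_i^{\downarrow}$ for $k\ge 2$, where taking entrywise moduli of an orthonormal optimizing family destroys orthonormality. Two further cautions. First, the majorisation statement $\lambda^{\downarrow}(\rho_B^{\vec 0})\succ\lambda^{\downarrow}(\rho_B^\theta)$ is strictly stronger than the entropy inequality you need, so even if the conjecture is true your strengthening could be false; the Jozsa--Schlienz example quoted in Section~\ref{conclusion} shows that modulus/overlap domination between ensembles does not by itself control the entropy ordering, which is evidence that no purely entrywise argument of the kind you propose can close the $k\ge2$ case. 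Second, your claim that the $|\mathcal{Y}|=2$ case ``recovers the tight bounds already obtained for \ot\ and \sand'' is incorrect: for \ot\ Bob's output $(c,x_c)$ takes four values, and the paper's Theorem~\ref{thm:pot} requires diagonalizing a genuine $4\times4$ matrix, reducing the phase freedom to a single residual parameter $\omega$ by local phase unitaries, and then a concavity/Jensen argument to show $\omega=0$ is optimal. Your $2\times 2$ argument does apply to primitives where one party's alphabet is binary (e.g.\ it explains why all embeddings of $\srotp{1}$ leak the same amount, cf.\ Theorem~\ref{thm:psrot}), but it does not subsume the \ot\ computation.
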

The conjecture agrees with the geometric intuition that the minimal pairwise
distinguishability of quantum states in a mixture minimizes the von
Neumann entropy of the mixture. However, Jozsa and Schlienz have shown
that this intuition is sometimes incorrect~\cite{JS00}. In a quantum
system of dimension at least three, we can have the following
situation: For two sets of pure states $\{\ket{u_i}\}_{i=1}^n$ and
$\{\ket{v_i}\}_{i=1}^n$ satisfying $|\bracket{u_i}{u_j}|\leq
|\bracket{v_i}{v_j}|$ for all $i,j$, there exist probabilities $p_i$
such that for $\rho_u \assign \sum_{i=1}^np_i\ketbra{u_i}{u_i}$,
$\rho_v \assign \sum_{i=1}^np_i\ketbra{v_i}{v_i}$, it holds that
$S(\rho_u)<S(\rho_v)$.  As we can see, although each pair $\ket{u_i}$,
$\ket{u_j}$ is more distinguishable than the corresponding pair
$\ket{v_i}$, $\ket{v_j}$, the overall $\rho_u$ provides us with less
uncertainty than $\rho_v$. It follows that although for the canonical
embedding $\ket{\psi_{\vec{0}}}=\sum_y\ket{\varphi_y}\ket{y}$ of
$P_{X,Y}$ the mutual overlaps $|\bracket{\varphi_y}{\varphi_{y'}}|$
are clearly maximized, it does not necessarily imply that $S(A)$ in
this case is minimal over $\emb{P_{X,Y}}$. It is an interesting open
question to find a primitive whose canonical embedding does not
minimize the leakage or to prove that no such primitive exists. In
particular, how far can the leakage of the canonical embedding be from
the best one? Such a characterization, even if only applicable to special primitives, would allow to lower
bound their leakage and would also help to understand the power of
two-party quantum cryptography in a more concise way.

A very natural generalization of our approach
would be  to see what happens when
 \emph{\correctness}\ is relaxed. 
\begin{conjecture}
Any correct protocol for $P_{X,Y}$ leaks as much as a \correct\ 
protocol for $P_{X,Y}$.
\end{conjecture}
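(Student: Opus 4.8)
The goal is to show that for an arbitrary \emph{correct} protocol—one whose purification $\ket{\psi}_{ABA'B'}$ yields $(X,Y)\sim P_{X,Y}$ upon measuring $A,B$ in the computational basis, but which need \emph{not} satisfy the Markov condition $A'\leftrightarrow X\leftrightarrow Y\leftrightarrow B'$ of Definition~\ref{defcorrect}—one still has $\Delta_\psi(P_{X,Y})\ge \Delta_{P_{X,Y}}$. Since every \correct\ protocol is correct, this is exactly the assertion that relaxing \correctness\ to mere correctness does not lower the minimal achievable leakage. A first warning is that the symmetry of Lemma~\ref{symmetry} is unavailable here: its proof uses $S(XA';Y)=S(X;YB')=I(X;Y)$ to force the Schmidt eigenvalues of the working registers to be independent of $(x,y)$, so for merely correct $\ket{\psi}$ the two quantities $S(X;BB')-I(X;Y)$ and $S(AA';Y)-I(X;Y)$ genuinely differ and the $\max$ in the definition of leakage is essential. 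In particular, the component‑splitting identity~\eqref{eq:splitleak}, whose derivation passes through Lemma~\ref{symmetry}, must be re‑examined before reducing to a single connected component.

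The natural strategy is to upgrade Lemma~\ref{super_leak}: from any correct $\ket{\psi}$ produce, by local leakage‑non‑increasing operations, a regular embedding $\ket{\psi^*}\in\emb{P_{X,Y}}$ with $\Delta_\psi(P_{X,Y})\ge\Delta_{\psi^*}(P_{X,Y})\ge\Delta_{P_{X,Y}}$. Recall how Lemma~\ref{super_leak} achieves this for an embedding: writing $\ket{\psi}=\sum_{x,y}\sqrt{P_{X,Y}(x,y)}\ket{x,y}_{AB}\otimes\ket{\varphi^{x,y}}_{A'B'}$ and Schmidt‑decomposing each $\ket{\varphi^{x,y}}_{A'B'}$, \correctness\ forces the eigenvalues to be independent of $(x,y)$ and the $B'$‑vectors to satisfy $f^{x,y}_k=f^y_k$, independent of $x$; this is precisely what makes the measurement $\mathbb{Q}_h=\sum_y \proj{y}_B\otimes\proj{f^y_h}_{B'}$ well defined and, by data processing, non‑increasing on $S(X;BB')$ while collapsing $B'$. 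For a merely correct protocol both facts fail: $\rho^{x,y}_{A'}$ depends on $x$ and on $y$, so no $x$‑independent operation on Bob's side strips $B'$ coherently.

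I expect this to be the main obstacle, and it is presumably why the statement is only conjectured. One must exhibit a replacement operation on $BB'$ that provably does not decrease $S(X;BB')$, preserves the marginal $P_{X,Y}$ on $AB$, and yields a regular embedding—together with the symmetric statement on $AA'$, since only the larger of the two terms needs to dominate. The difficulty is sharpened by the fact that the conditional states $\rho^{x}_{BB'}$ are now \emph{mixed} (they are entangled with $AA'$), so the Holevo term $\sum_x P_X(x)S(\rho^x_{BB'})$ can shrink $S(X;BB')-I(X;Y)$ below $\Delta_{P_{X,Y}}$; in that regime the bound must instead come from Alice's side, and making this trade‑off quantitative is the crux. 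The simultaneous‑diagonalizability (Holevo‑equality) arguments driving Theorem~\ref{thm:nonleaktrivial} and Lemma~\ref{super_leak} no longer apply, and the Jozsa--Schlienz phenomenon cited for the previous conjecture warns that ``more distinguishable conditional ensembles have larger entropy'' can fail in dimension at least three, so no purely geometric shortcut is safe.

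A promising alternative route is to measure \emph{all} of $AA'$ and $BB'$ in the computational basis, viewing $\ket{\psi}$ as a regular embedding of an enlarged primitive $P_{\hat X,\hat Y}$ with $X=f(\hat X)$ and $Y=g(\hat Y)$ for local coarse‑graining maps $f,g$; one would then combine a monotonicity of leakage under such local coarse‑graining with a marginalization analogue of Theorem~\ref{reduction} and with Theorem~\ref{specialform2} (after reducing to $X\equiv\dep{X}{Y}$, $Y\equiv\dep{Y}{X}$). The gap in this route is that the enlarged embedding controls $\Delta_\psi(P_{\hat X,\hat Y})$ rather than the target $\Delta_\psi(P_{X,Y})$: passing from $\hat X,\hat Y$ back to $X,Y$ raises both $S(\hat X;BB')$ and $I(\hat X;\hat Y)$, so the comparison of leakage with respect to the fine and coarse primitives is inconclusive by elementary inequalities and reduces to the same delicate estimate identified above. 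Either way, the decisive step is bounding the effect of the non‑Markov correlations in $A',B'$ on $S(X;BB')$ and $S(AA';Y)$ simultaneously, and this is where I expect the real work to lie.
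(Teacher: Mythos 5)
This statement is presented in the paper only as a \emph{conjecture} in the concluding section; the paper supplies no proof, and your submission, quite rightly, does not claim to supply one either. What you have written is an accurate diagnosis of why the question is open rather than a proof, so there is no argument of the paper's to compare it against. Your analysis is consistent with the little the paper does say: the authors themselves remark that for the standard (non-strict) notion of correctness the leakage is no longer symmetric, which matches your observation that Lemma~\ref{symmetry} breaks down because its proof uses the Markov conditions $A'\leftrightarrow X\leftrightarrow Y$ and $X\leftrightarrow Y\leftrightarrow B'$ to force the Schmidt eigenvalues of $\ket{\varphi^{x,y}}_{A'B'}$ to be independent of $(x,y)$. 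Your identification of the precise failure point in Lemma~\ref{super_leak}---that without $f^{x,y}_k=f^y_k$ the projectors $\mathbb{Q}_h$ are no longer well defined as an $x$-independent operation on Bob's side---is also correct and is, as far as the paper indicates, exactly the obstruction that leaves the question open.

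To be explicit about status: since you do not exhibit the replacement operation on $BB'$ (or the coarse-graining comparison in your alternative route), the conjecture remains unproven by your submission, and the ``gap'' is simply that the decisive estimate you identify at the end is not carried out. That is not a flaw in your reasoning relative to the paper---the paper leaves the same step undone---but you should present this as a problem analysis, not as a proof, and in particular you should not let the phrase ``proving the claim'' appear anywhere downstream of this discussion. If you want to make partial progress concrete, the one direction that does survive without strict correctness is the non-negativity argument ($S(X;BB')\geq S(X;\tilde B)=I(X;Y)$ via measuring $B$ and tracing out $B'$), so the leakage of a merely correct protocol is still well defined and non-negative; the open content is entirely in lower-bounding it by $\Delta_{P_{X,Y}}$.
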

The most obvious relaxation would be to consider as correct any $\ket{\psi}\in {\cal H}_{AB}\otimes 
{\cal H}_{A'B'}$ that produces $(x,y)$
with probability $P_{X,Y}(x,y)$ when registers $A$ and $B$ are measured but
 registers $A'$ and $B'$ can provide extra information 
about $Y$ and $X$ respectively. Remember that this is equivalent
to allowing for the quantum Markov chain 
conditions $A'\leftrightarrow X \leftrightarrow Y$ and 
$B'\leftrightarrow Y \leftrightarrow X$ not to
hold anymore. Would it be possible to find such a $\ket{\psi}$
with the property that for any regular embedding $\ket{\phi}\in \emb{P_{X,Y}}$:
\[ \Delta_{\psi}(P_{X,Y})  < \Delta_{\phi}(P_{X,Y})\,\,?
\]
A positive answer would reveal that some primitive $P_{X,Y}$ 
may be implemented with minimum leakage when viewed as a marginal in some
\emph{larger} probability distribution $P_{XX',YY'}$. A negative answer
would rather show that all our results hold unaffected for the standard notion
of correctness. Note however that the leakage is no more symmetric for the  standard 
notion of correctness.

It would also be interesting to find a measure of cryptographic
non-triviality for two-party primitives and  see how it relates to
the minimum leakage of any implementation by quantum protocols. For
instance, is it true that quantum protocols for primitive $P_{X,Y}$
leak more if the \emph{distance} between $P_{X,Y}$
and any trivial primitive increases?  

Another question we leave for future research is to define and
investigate other notions of leakage, e.g.~in the one-shot setting
instead of in the asymptotic regime (as outlined in
Footnote~\ref{foot:guess}). Results in the one-shot setting have
already been established for data compression~\cite{RW05}, channel
capacities~\cite{RWW06}, state-merging~\cite{WR07,Berta08} and other
(quantum-) information-theoretic tasks.

Furthermore, it would be interesting to find more applications for the
concept of leakage, considered also for protocols using an environment
as a trusted third party. In this direction, we have shown in
Theorem~\ref{no-amplif} that any two-party quantum protocol for a
given primitive, using a black box for an ``easier'' primitive, leaks
information. Lower-bounding this leakage is an interesting open
question. We might also ask how many copies of the ``easier''
primitive are needed to implement the ``harder'' primitive by a
quantum protocol, which would give us an alternative measure of
non-triviality for two-party primitives.

The approach used in this paper cannot easily be applied to 
cryptographic primitives modeled by unitary transforms.
Our approach is specialized to deal with classical primitives. 
It is an open question to determine the leakage of protocols implementing  
some unitary primitive. The few impossibility proofs for unitary
primitives  that we are aware of 
simply establish that perfect privacy cannot be achieved.
For example, it is shown in~\cite{DNS10}\ that quantum {\sf SWAP} 
is impossible (in fact, any unitary that never allows any of the party
to recover their input state). 
It would be very interesting to investigate the landscape of possibilities and impossibilities for unitary  primitives and see how it relates to the one for classical primitives. These two worlds might be very different\footnote{See~\cite{FKSZZ13} for a recent classification result for quantum protocols of classical cryptographic primitives.}.

\section*{Acknowledgements}
We would like to thank an anonymous referee for pointing out several shortcomings in earlier versions of this paper. LS is supported by the Danish Natural Science Research Council project QUSEP, and Canada's NSERC discovery grant. 
CS is supported by a NWO VENI project.

\bibliographystyle{alpha}

\bibliography{crypto,qip,procs}

\newcommand{\etalchar}[1]{$^{#1}$}
\begin{thebibliography}{KNTsZ01}

\bibitem[AKSW07]{AKSW07}
Giacomo Mauro~D' Ariano, Dennis Kretschmann, Dirk Schlingemann, and Reinhard~F.
  Werner.
\newblock Reexamination of quantum bit commitment: The possible and the
  impossible.
\newblock {\em Physical Review A (Atomic, Molecular, and Optical Physics)},
  76(3):032328, 2007.

\bibitem[Amb05]{Ambainis05OT}
Andris Ambainis.
\newblock personal communication, 2005.

\bibitem[BB84]{BB84}
Charles~H. Bennett and Gilles Brassard.
\newblock Quantum cryptography: Public key distribution and coin tossing.
\newblock In {\em IEEE International Conference on Computers, Systems, and
  Signal Processing}, pages 175--179, 1984.

\bibitem[BCH{\etalchar{+}}08]{BCHLW08}
H.~Buhrman, M.~Christandl, P.~Hayden, H.-K. Lo, and S.~Wehner.
\newblock Possibility, impossibility and cheat-sensitivity of quantum bit
  string commitments.
\newblock {\em Physical Review A}, 78:022316, 2008.

\bibitem[BCS12]{BCS12}
Harry Buhrman, Matthias Christandl, and Christian Schaffner.
\newblock Complete insecurity of quantum protocols for classical two-party
  computation.
\newblock {\em Physical Review Letters}, 109:160501, 2012.

\bibitem[Ber08]{Berta08}
Mario Berta.
\newblock Single-shot quantum state merging.
\newblock Master's thesis, ETH Zurich, 2008.

\bibitem[BLM{\etalchar{+}}05]{BLMPPR05}
Jonathan Barrett, Noah Linden, Serge Massar, Stefan Pironio, Sandu Popescu, and
  David Roberts.
\newblock Nonlocal correlations as an information-theoretic resource.
\newblock {\em Physical Review A}, 71:022101, 2005.

\bibitem[CK91]{BK91}
Benny Chor and Eyal Kushilevitz.
\newblock A zero-one law for boolean privacy.
\newblock {\em SIAM J. Discrete Math.}, 4(1):36--47, 1991.

\bibitem[CK09]{CK09}
Andre Chailloux and Iordanis Kerenidis.
\newblock Optimal quantum strong coin flipping.
\newblock In {\em 50th Annual IEEE Symposium on Foundations of Computer Science
  (FOCS)}, 2009.

\bibitem[CKS13]{CKS13}
Andr{\'e} Chailloux, Iordanis Kerenidis, and Jamie Sikora.
\newblock Lower bounds for quantum oblivious transfer.
\newblock {\em Quantum Information {\&} Computation}, 13(1-2):158--177, 2013.

\bibitem[Col07]{Colbeck07}
Roger Colbeck.
\newblock Impossibility of secure two-party classical computation.
\newblock {\em Physical Review A}, 76(6):062308, 2007.

\bibitem[CT91]{CT91}
T.M. Cover and J.A. Thomas.
\newblock {\em Elements of Information Theory}.
\newblock Wiley, 1991.

\bibitem[DFSS07]{DFSS07}
Ivan~B. Damg{\aa}rd, Serge Fehr, Louis Salvail, and Christian Schaffner.
\newblock Secure identification and {QKD} in the bounded-quantum-storage model.
\newblock In {\em Advances in Cryptology---CRYPTO~'07}, volume 4622 of {\em
  Lecture Notes in Computer Science}, pages 342--359. Springer, 2007.

\bibitem[DNS10]{DNS10}
Fr{\'e}d{\'e}ric Dupuis, Jesper~Buus Nielsen, and Louis Salvail.
\newblock Secure two-party quantum evaluation of unitaries against specious
  adversaries.
\newblock In {\em Advances in Cryptology---CRYPTO~'10}, volume 6223 of {\em
  Lecture Notes in Computer Science}, pages 685--706. Springer, 2010.

\bibitem[EGL82]{EGL82}
Shimon Even, Oded Goldreich, and Abraham Lempel.
\newblock A randomized protocol for signing contracts.
\newblock In {\em Advances in Cryptology: Proceedings of CRYPTO~82}. Plenum
  Press, 1982.

\bibitem[FKS{\etalchar{+}}13]{FKSZZ13}
Serge Fehr, Jonathan Katz, Fang Song, Hong-Sheng Zhou, and Vassilis Zikas.
\newblock Feasibility and completeness of cryptographic tasks in the quantum
  world.
\newblock In {\em Theory of Cryptography Conference (TCC)}, volume 7785 of {\em
  Lecture Notes in Computer Science}, pages 281--296. Springer, 2013.

\bibitem[FS09]{FS09}
Serge Fehr and Christian Schaffner.
\newblock Composing quantum protocols in a classical environment.
\newblock In {\em Theory of Cryptography Conference (TCC)}, volume 5444 of {\em
  Lecture Notes in Computer Science}, pages 350--367. Springer, 2009.

\bibitem[FWW04]{FWW04}
Matthias Fitzi, Stefan Wolf, and J\"{u}rg Wullschleger.
\newblock Pseudo-signatures, broadcast, and multi-party computation from
  correlated randomness.
\newblock In {\em Advances in Cryptology---CRYPTO~'04}, volume 3152 of {\em
  Lecture Notes in Computer Science}, pages 562--579. Springer, 2004.

\bibitem[Hol73]{Holevo73}
A.~S. Holevo.
\newblock Information-theoretical aspects of quantum measurement.
\newblock {\em Problemy Pereda\v ci Informacii}, 9(2):31--42, 1973.

\bibitem[IMNW04]{IMNW04}
Hideki Imai, J\"orn {M\"uller-Quade}, Anderson Nascimento, and Andreas Winter.
\newblock Rates for bit commitment and coin tossing from noisy correlation.
\newblock In {\em Proceedings of 2004 IEEE International Symposium on
  Information Theory}, pages 47--47, June 2004.

\bibitem[JS00]{JS00}
Richard Jozsa and J\"urgen Schlienz.
\newblock Distinguishability of states and von neumann entropy.
\newblock {\em Phys. Rev. A}, 62(1):012301, Jun 2000.

\bibitem[Ken04]{Kent04}
Adrian Kent.
\newblock Promising the impossible: Classical certification in a quantum world,
  2004.
\newblock \url{http://arxiv.org/abs/quant-ph/0409029}.

\bibitem[Kit03]{Kitaev03}
A.~Kitaev.
\newblock Quantum coin-flipping.
\newblock presented at QIP'03. A review of this technique can be found in
  \url{http://lightlike.com/~carlosm/publ}, 2003.

\bibitem[Kla04]{Klauck04}
Hartmut Klauck.
\newblock On quantum and approximate privacy.
\newblock {\em Theory of Computing Systems}, 37(1):221--246, 2004.
\newblock \url{http://arxiv.org/abs/quant-ph/0110038}, also in the proceedings
  of STACS 2002.

\bibitem[KMR09]{KMR09}
Robin K{\"u}nzler, J{\"o}rn {M{\"u}ller-Quade}, and Dominik Raub.
\newblock Secure computability of functions in the it setting with dishonest
  majority and applications to long-term security.
\newblock In {\em Theory of Cryptography Conference (TCC)}, volume 5444 of {\em
  Lecture Notes in Computer Science}, pages 238--255. Springer, 2009.

\bibitem[KNTsZ01]{KNTZ01}
Hartmut Klauck, Ashwin Nayak, Amnon Ta-shma, and David Zuckerman.
\newblock Interaction in quantum communication and the complexity of set
  disjointness.
\newblock In {\em In Proceedings of the 33rd Annual ACM Symposium on Theory of
  Computing}, pages 124--133, 2001.

\bibitem[Kus92]{Kushilevitz92}
Eyal Kushilevitz.
\newblock Privacy and communication complexity.
\newblock {\em SIAM J. Discrete Math.}, 5(2):273--284, 1992.

\bibitem[LC97]{LC97}
Hoi-Kwong Lo and Hoi~Fung Chau.
\newblock Is quantum bit commitment really possible?
\newblock {\em Physical Review Letters}, 78(17):3410--3413, April 1997.

\bibitem[Lo97]{Lo97}
Hoi-Kwong Lo.
\newblock Insecurity of quantum secure computations.
\newblock {\em Physical Review A}, 56(2):1154--1162, 1997.

\bibitem[May97]{Mayers97}
Dominic Mayers.
\newblock Unconditionally secure quantum bit commitment is impossible.
\newblock {\em Physical Review Letters}, 78(17):3414--3417, April 1997.

\bibitem[Moc04]{Mochon04}
Carlos Mochon.
\newblock Quantum weak coin-flipping with bias of 0.192.
\newblock In {\em 45th Annual IEEE Symposium on Foundations of Computer Science
  (FOCS)}, pages 2--11, 2004.

\bibitem[Moc05]{Mochon05}
Carlos Mochon.
\newblock A large family of quantum weak coin-flipping protocols.
\newblock {\em Phys. Rev. A}, 72:022341, 2005.

\bibitem[Moc07]{Mochon07}
Carlos Mochon.
\newblock Quantum weak coin flipping with arbitrarily small bias, 2007.
\newblock \url{http://arxiv.org/abs/0711.4114}.

\bibitem[NC00]{NC00}
Michael~A. Nielsen and Isaac~L. Chuang.
\newblock {\em Quantum Computation and Quantum Information}.
\newblock Cambridge university press, 2000.

\bibitem[PR94]{PR94}
Sandu Popescu and Daniel Rohrlich.
\newblock Quantum nonlocality as an axiom.
\newblock {\em Foundations of Physics}, 24(3):379--385, 1994.

\bibitem[Rab81]{Rabin81}
M.~Rabin.
\newblock How to exchange secrets by oblivious transfer.
\newblock Technical report, Harvard Aiken Computation Lab, 1981.

\bibitem[Rus02]{Ruskai02}
Mary~Beth Ruskai.
\newblock Inequalities for quantum entropy: A review with conditions for
  equality.
\newblock {\em Journal of Mathematical Physics}, 43(9):4358--4375, 2002.

\bibitem[RW05]{RW05}
Renato Renner and Stefan Wolf.
\newblock Simple and tight bounds for information reconciliation and privacy
  amplification.
\newblock In {\em Advances in Cryptology---ASIACRYPT~2005}, volume 3788 of {\em
  Lecture Notes in Computer Science}, pages 199--216. Springer, 2005.

\bibitem[RWW06]{RWW06}
Renato Renner, Stefan Wolf, and Juerg Wullschleger.
\newblock The single-serving channel capacity.
\newblock In {\em Proceedings of the International Symposium on Information
  Theory (ISIT)}. IEEE, July 2006.
\newblock \url{http://arxiv.org/abs/cs.IT/0608018}.

\bibitem[Sot08]{Sotakova08}
Miroslava Sot\'{a}kov\'{a}.
\newblock {\em The Power Of Two-Party Quantum Cryptography}.
\newblock PhD thesis, Department of Computer Science, University of Aarhus,
  Denmark, 2008.

\bibitem[SR01]{SR01}
R.~W. Spekkens and T.~Rudolph.
\newblock Degrees of concealment and bindingness in quantum bit commitment
  protocols.
\newblock {\em Phys. Rev. A}, 65(1):012310, 2001.

\bibitem[SSS09]{SSS09}
Louis Salvail, Miroslava Sot\'{a}kov\'{a}, and Christian Schaffner.
\newblock On the power of two-party quantum cryptography.
\newblock In {\em Advances in Cryptology---ASIACRYPT~2009}, volume 5912 of {\em
  Lecture Notes in Computer Science}, pages 70--87. Springer, 2009.

\bibitem[Wie83]{Wiesner83}
Stephen Wiesner.
\newblock Conjugate coding.
\newblock {\em SIGACT News}, 15(1):78--88, 1983.
\newblock Original manuscript written circa 1970.

\bibitem[WR07]{WR07}
Andreas Winter and Renato Renner.
\newblock Single-shot state merging, 2007.
\newblock unpublished note.

\bibitem[WW04]{WW04}
Stefan Wolf and J\"urg Wullschleger.
\newblock Zero-error information and applications in cryptography.
\newblock In {\em {IEEE} Information Theory Workshop (ITW)}, San Antonio,
  Texas, October 2004.

\bibitem[WW05a]{WW05a}
Stefan Wolf and J\"urg Wullschleger.
\newblock New monotones and lower bounds in unconditional two-party
  computation.
\newblock In {\em Advances in Cryptology---CRYPTO~'05}, volume 3621 of {\em
  Lecture Notes in Computer Science}, pages 467--477. Springer, 2005.

\bibitem[WW05b]{WW05b}
Stefan Wolf and J\"urg Wullschleger.
\newblock Oblivious transfer and quantum non-locality.
\newblock In {\em International Symposium on Information Theory (ISIT 2005)},
  pages 1745--1748, 2005.

\end{thebibliography}

\appendix

\section{Leakage of Universal Primitives}
\label{app:universalleakage}

\subsection{Exact calculations}
First, we look at the leakage of the embeddings of Rabin String OT (\srot). 

\begin{theorem} \label{thm:psrot} Any embedding of \psrot\ is at least
 $(1-O(r2^{-r}))$-leaking. For $r=1$ any embedding is at least
 $(h(\frac{1}{4})-\frac{1}{2}) \approx 0.311$-leaking. Furthermore, 
the leakage is the same  for all embeddings of \psrot.
\end{theorem}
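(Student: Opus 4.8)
The plan is to reduce to regular embeddings, collapse the leakage to a single von Neumann entropy, and then diagonalise Bob's reduced density matrix explicitly, the key point being that its spectrum does not depend on the phases of the embedding.

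First I would record the relevant classical quantities for \psrot. One has $H(X)=r$, and since $Y$ determines $X$ exactly when $Y\neq\bot$ (probability $1/2$) and reveals nothing when $Y=\bot$, we get $H(X|Y)=\tfrac{r}{2}$, hence $I(X;Y)=\tfrac{r}{2}$. By Lemma~\ref{super_leak} it suffices to treat regular embeddings $\ket{\psi_\theta}\in\emb{\psrot}$, and by the symmetry Lemma~\ref{symmth} the leakage of such an embedding equals $S(X;B)-I(X;Y)=S(B)-\tfrac{r}{2}$, using $S(X;B)=S(B)$ for regular embeddings. Everything thus reduces to computing $S(B)=S(\rho_B)$.

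Next I would write down $\rho_B=\tr_A\proj{\psi_\theta}$. Because for each $x$ the only $y$ with $\psrot(x,y)>0$ are $y=x$ and $y=\bot$, Bob's (unnormalised) conditional states are $\sqrt{2^{-r-1}}\bigl(e^{i\theta(x,x)}\ket{x}+e^{i\theta(x,\bot)}\ket{\bot}\bigr)$, so that
\[ \rho_B = 2^{-r-1}\sum_x \proj{x} + \tfrac12\proj{\bot} + 2^{-r-1}\bigl(\ket{w}\bra{\bot}+\ket{\bot}\bra{w}\bigr), \qquad \ket{w}:=\sum_x e^{i\phi_x}\ket{x}, \]
where $\phi_x:=\theta(x,x)-\theta(x,\bot)$ and $\|\ket{w}\|^2=2^r$. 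The observation establishing the ``same leakage'' claim is that the diagonal unitary $D:=\proj{\bot}+\sum_x e^{i\phi_x}\proj{x}$ conjugates the canonical case ($\phi_x\equiv 0$) into the general one, $\rho_B^\theta = D\,\rho_B^{\vec{0}}\,D^\dagger$; since conjugation preserves the spectrum, $S(B)$, and hence the leakage, is independent of $\theta$. It then remains to diagonalise $\rho_B^{\vec{0}}$ (with $\ket{w}=\sum_x\ket{x}$): every vector of the $x$-subspace orthogonal to $\ket{w}$ is an eigenvector with eigenvalue $2^{-r-1}$, of multiplicity $2^r-1$, while on the two-dimensional span of $\ket{w}/\sqrt{2^r}$ and $\ket{\bot}$ the matrix is $\bigl(\begin{smallmatrix}2^{-r-1}&2^{-r/2-1}\\ 2^{-r/2-1}&1/2\end{smallmatrix}\bigr)$, of trace $\tfrac12+2^{-r-1}$ and determinant $0$, with eigenvalues $0$ and $\tfrac12+2^{-r-1}$. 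Plugging the full spectrum into $S(B)$ and subtracting $\tfrac{r}{2}$ gives the closed form
\[ \Delta_{\psi_\theta}(\psrot) = 1 - r\,2^{-r-1} - \bigl(\tfrac12+2^{-r-1}\bigr)\log_2\bigl(1+2^{-r}\bigr), \]
from which $\Delta = 1-O(r2^{-r})$ is immediate, and for $r=1$ the spectrum $\{0,\tfrac14,\tfrac34\}$ yields $S(B)=h(\tfrac14)$ and $\Delta = h(\tfrac14)-\tfrac12$.

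The main obstacle is precisely the phase-independence (the ``same leakage for all embeddings'' part): a priori the eigenvalues of $\rho_B$ could depend on the relative phases $\phi_x$, and one must see that they enter only through the single coupling vector $\ket{w}$, whose norm is phase-independent, so that they can be gauged away by the local diagonal unitary $D$. Everything else is a routine spectral computation; the only mild care needed is to check that the $2\times 2$ block has vanishing determinant, so that one zero eigenvalue drops out and $S(B)$ collapses to the two-term expression above.
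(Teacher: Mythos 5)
Your proof is correct and follows essentially the same route as the paper's: reduce to regular embeddings, express the leakage as $S(\rho)-\tfrac{r}{2}$ for a reduced density matrix, and diagonalise explicitly, observing that the spectrum is phase-independent. The only (cosmetic) difference is that you work with $\rho_B$ and gauge away the phases by an explicit diagonal unitary, whereas the paper works with $\rho_A=\tfrac12\bigl(2^{-r}\id+\proj{\varphi}\bigr)$, whose spectrum is immediate because it is a multiple of the identity plus a rank-one projector; by purity the two nonzero spectra coincide, and your closed form and both numerical claims check out.
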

\begin{proof}
Let
$$\ket{\psi}=\frac{1}{2^{\frac{r+1}{2}}}\sum_{x\in\{0,1\}^r}e^{i\theta(x,x)}\ket{xx}+\frac{1}{2^{\frac{r+1}{2}}}\left(\sum_{x\in\{0,1\}^r}e^{i\theta(x,\bot)}\ket{x}\right)\ket{\bot}\, ,$$
where $\bot$ denotes an erasure, be a general form of an embedding of \psrot. 

Define $\ket{\varphi} \assign \frac{1}{2^{r/2}}\sum_{x\in\{0,1\}^r}e^{i\theta(x,\bot)}\ket{x}$. If Bob 
receives the value of Alice's string successfully, Alice gets an ensemble $\rho^0=\frac{1}{2^r}\sum_{x\in\{0,1\}^r}\ketbra{x}{x}$. If an erasure occurs on Bob's side, Alice gets $\rho^1=\ketbra{\varphi}{\varphi}$. We find $S(A)$ by computing the eigenvalues of $\rho_A \assign \frac{1}{2}(\rho^0+\rho^1)$.

Since $\rho^0=\frac{1}{2^r}\id_A$, $\ket{v}$ is an eigenvector of $\rho_A$ if and only if it is an eigenvector of $\rho^1$. If $\ket{v}$ is an eigenvector of $\rho^1$ then either a) $\ket{v}=e^{i\theta}\ket{\varphi}$ or b) $\bracket{v}{\varphi}=0$. If a) is the case, then
$$\rho_A\ket{v}=\frac{1}{2}(\rho^0\ket{v}+\rho^1\ket{v})=\frac{1}{2}\left(1+\frac{1}{2^r}\right)\ket{v}\, ,$$
whereas in the case b),
$$\rho_A\ket{v}=\frac{1}{2}(\rho^0\ket{v}+\rho^1\ket{v})=\frac{1}{2^{r+1}}\, .$$
The state $\rho_A$ has eigenvalues $\{\frac{1}{2}+\frac{1}{2^{r+1}},\frac{1}{2^{r+1}}\}$, where $\frac{1}{2^{r+1}}$ has multiplicity $2^r-1$. $S(A)$ can then be computed as follows:
\begin{eqnarray*}
S(A)&=&-\left(\frac{1}{2}+\frac{1}{2^{r+1}}\right)\log\left(\frac{1}{2}+\frac{1}{2^{r+1}}\right)+\frac{2^r-1}{2^{r+1}}(r+1)\\
&= &\left(\frac{1}{2}+\frac{1}{2^{r+1}}\right)\left(1-\frac{1}{\ln{2}\cdot 2^{r}}+o\left(\frac{1}{2^r}\right)\right)+\frac{r+1}{2}-\frac{r+1}{2^{r+1}}= \frac{r}{2}+1-O\left(\frac{r}{2^{r}}\right).
\end{eqnarray*}

Since $I(X;Y)=\frac{r}{2}$, for the leakage we get:
$$\Delta_{\psi}(\psrot)=S(A)-I(X;Y)= 1- O\left(\frac{r}{2^{r}}\right)\, .$$
As we can see, the leakage does not depend on the phase-function $\theta$.
\qed

\end{proof}

In the following theorem we minimize the leakage of an embedding of $\pot$.
\begin{theorem}
\label{thm:pot}
Any $\ket{\psi}\in \emb{\pot}$ is at least $\frac{1}{2}$-leaking. 
The leakage is minimized by the canonical embedding.
\end{theorem}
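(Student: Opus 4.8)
The plan is to invoke Lemma~\ref{super_leak}, which reduces the problem to minimizing over \emph{regular} embeddings $\ket{\psi}\in\emb{\pot}$, where by Lemma~\ref{symmetry} and the identity $S(X;B)=S(B)=S(A)$ the leakage equals $S(A)-I(X;Y)$. For the \ot-distribution one has $H(X)=H(Y)=2$ and $H(Y|X)=1$, hence $I(X;Y)=1$, so it suffices to prove $S(A)\geq\tfrac32$ with equality for the canonical embedding. First I would write a regular embedding as $\ket{\psi}=\tfrac12\sum_{x_0x_1}\ket{x_0x_1}_A\ket{\beta_{x_0x_1}}_B$, where $\ket{\beta_{x_0x_1}}=\tfrac{1}{\sqrt2}\bigl(e^{i\alpha}\ket{0,x_0}+e^{i\beta}\ket{1,x_1}\bigr)$ collects Bob's two possible outputs. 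Then $\rho_A=\tfrac14 G$, where $G$ is the Gram matrix of the four unit vectors $\ket{\beta_{x_0x_1}}$, and $S(A)=H\bigl(\tfrac14\,\mathrm{spec}(G)\bigr)$.

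The key structural observation is that two of Bob's states overlap exactly when they share a computational basis vector, and the resulting overlap graph on the four indices is the $4$-cycle $00$--$01$--$11$--$10$--$00$, which is bipartite and triangle-free. Consequently the power sums $\mathrm{Tr}(G)$, $\mathrm{Tr}(G^2)$, $\mathrm{Tr}(G^3)$ depend only on the \emph{moduli} of the off-diagonal entries (all equal to $\tfrac12$) and not on the phases, so the first three elementary symmetric functions are $e_1=4,\ e_2=5,\ e_3=2$ regardless of the phase function $\theta$. Only $e_4=\det G$ can carry the phase accumulated around the single plaquette: a direct $4\times4$ determinant computation, in which the unique cycle $00\to01\to11\to10\to00$ contributes the only phase-dependent term, gives $\det G=\tfrac18(1-\cos\Phi)\in[0,\tfrac14]$, where $\Phi$ is the alternating sum of the four edge phases. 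Equivalently, since $S(A)$ is invariant under local unitaries, I can use Alice's and Bob's diagonal phase gauges to eliminate every phase degree of freedom except the gauge-invariant plaquette phase $\Phi$; the canonical embedding corresponds to $\Phi=0$.

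It then remains to analyze this single-parameter family. The characteristic polynomial $\mu^4-4\mu^3+5\mu^2-2\mu+e_4$ I would solve via the substitution $u=\mu^2-2\mu$, which turns it into $u^2+u+e_4=0$; writing $s:=\sqrt{1-4e_4}\in[0,1]$ yields the four eigenvalues $1\pm\sqrt{\tfrac{1+s}{2}}$ and $1\pm\sqrt{\tfrac{1-s}{2}}$ of $G$. Setting $p:=\sqrt{\tfrac{1+s}2}$ and $\bar p:=\sqrt{\tfrac{1-s}2}$ (so $p^2+\bar p^2=1$), the entropy becomes $S(A)=f(p)+f(\bar p)$ with $f(x)=-\tfrac{1+x}4\log\tfrac{1+x}4-\tfrac{1-x}4\log\tfrac{1-x}4$. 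At the canonical point $e_4=0$ (so $s=1$, $p=1$, $\bar p=0$) this gives $f(1)+f(0)=\tfrac12+1=\tfrac32$, hence leakage $\tfrac12$.

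The main obstacle is the final minimization: showing $f(p)+f(\bar p)\geq\tfrac32$ for all $p^2+\bar p^2=1$, i.e.\ that the canonical point is a global minimum. I plan to parametrize $p=\cos\phi$, $\bar p=\sin\phi$ with $\phi\in[0,\tfrac{\pi}4]$ and differentiate. Using $f'(x)=\tfrac14\log\tfrac{1-x}{1+x}$, which is a negative multiple of $\mathrm{artanh}(x)$ independent of the logarithm base, one finds that $\tfrac{dS}{d\phi}$ is a positive multiple of $\sin\phi\,\mathrm{artanh}(\cos\phi)-\cos\phi\,\mathrm{artanh}(\sin\phi)$. This is nonnegative precisely because $t\mapsto\mathrm{artanh}(t)/t=1+\tfrac{t^2}3+\tfrac{t^4}5+\cdots$ is increasing and $\cos\phi\geq\sin\phi$ on $[0,\tfrac\pi4]$, so that $\tfrac{\mathrm{artanh}(\cos\phi)}{\cos\phi}\geq\tfrac{\mathrm{artanh}(\sin\phi)}{\sin\phi}$. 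Hence $S$ is nondecreasing in $\phi$ and minimized at $\phi=0$, i.e.\ at the canonical embedding, establishing both the lower bound $\tfrac12$ and the optimality of the canonical embedding. An alternative to this calculus step would be to show directly that the canonical spectrum $(\tfrac12,\tfrac14,\tfrac14,0)$ majorizes every other spectrum in the family and invoke Schur-concavity of the entropy; I expect the majorization bookkeeping to be of comparable difficulty to the monotonicity argument above.
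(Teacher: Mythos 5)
Your proposal is correct, and it reproduces the same overall skeleton as the paper's proof: reduce to regular embeddings via Lemma~\ref{super_leak}, show that the spectrum of $\rho_A$ depends on a single gauge-invariant phase, obtain the eigenvalues $\frac14(1\pm p)$, $\frac14(1\pm\bar p)$ with $p^2+\bar p^2=1$, and minimize the entropy over this one-parameter family. The two key steps are, however, executed by genuinely different means. For the phase reduction, the paper writes down explicit diagonal local unitaries $U^A\otimes U^B$ that gauge away all phases except one parameter $\omega$ and then solves the characteristic equation of the resulting state; you instead work with the Gram matrix of Bob's four conditional states, note that the overlap graph is the triangle-free $4$-cycle so that $e_1=4$, $e_2=5$, $e_3=2$ are phase-independent, and isolate the plaquette phase in $\det G=\frac18(1-\cos\Phi)$, after which the substitution $u=\mu^2-2\mu$ cracks the quartic (all of which I have checked). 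Your route has the advantage of exhibiting, without choosing a gauge, exactly which combination of phases the spectrum can see; the paper's route is more pedestrian but hands you the explicit local transformations. For the final minimization, the paper proves concavity of $x\mapsto h\bigl(\frac{1-\sqrt{x}}{2}\bigr)$ and applies Jensen to get $f(\alpha)+f(1-\alpha)\geq f(0)+f(1)$, whereas you differentiate along $p=\cos\phi$, $\bar p=\sin\phi$ and use that $t\mapsto\mathrm{artanh}(t)/t$ is increasing; both arguments ultimately rest on comparing the power series of $\mathrm{artanh}$, and both correctly identify $\Phi=0$ (respectively $\omega=0$), i.e.\ the canonical embedding, as the minimizer with $S(A)=\frac32$ and leakage $\frac12$. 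Either version is a complete proof of the theorem.
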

\begin{proof}
Let
$$\ket{\psi}=\frac{1}{2\sqrt{2}}\sum_{x_0,x_1,c\in\{0,1\}}e^{i\theta(x_0x_1,cx_c)}\ket{x_0x_1}\ket{cx_c}$$
be a regular embedding of \pot. Without loss of generality assume that $\theta(00,00)=0$.  Notice that for the local phase-changing transformations
\begin{eqnarray*}
U^A&\assign&\ketbra{00}{00}+{\rm exp}(i\theta(01,00))\ketbra{01}{01}+{\rm   exp}(i(\theta(10,10)-\theta(00,10)))\ketbra{10}{10}\\
&+&{\rm exp}(i(\theta(10,10)+\theta(11,01)-\theta(00,10)-\theta(10,01)))\ketbra{11}{11},\\
U^B&\assign&\ketbra{00}{00}+{\rm exp}(i(\theta(00,10)+\theta(10,01)-\theta(10,10)))\ketbra{01}{01}\\
&+&{\rm exp}(i\theta(00,10))\ketbra{10}{10}+{\rm exp}(i(\theta(01,11)-\theta(01,00)))\ketbra{11}{11},
\end{eqnarray*}
we get
$$U^A\otimes U^B\ket{\psi}=\ket{\psi'}=\frac{1}{2}(\ket{0+}\ket{00}+\ket{1+}\ket{01}+\ket{+0}\ket{10}+\frac{\ket{0}+e^{i\omega}\ket{1}}{\sqrt{2}}\ket{1}\ket{11})\, ,$$
where $\omega=\theta(00,10)+\theta(01,00)+\theta(10,01)+\theta(11,11)-\theta(01,01)-\theta(10,10)-\theta(11,01).$

Let $A'$ denote Alice's quantum system for Alice and Bob sharing
$\ket{\psi'}$. Since $S(A)=S(A')$, we can minimize $S(A')$ in
order to minimize $S(A)$.  Assume that Alice and Bob share $\ket{\psi'}$.  For
Bob's selection bit $c=0$, Alice gets an ensemble
$\rho_0=\frac{1}{2}(\ketbra{0+}{0+}+\ketbra{1+}{1+})$, whereas for $c=1$, she
gets
$\rho_1=\frac{1}{2}(\ketbra{+0}{+0}+(\ket{01}+e^{i\omega}\ket{11})(\bra{01}+e^{-i\omega}\bra{11}))$,
where
$\rho_{A'}=\frac{1}{2}(\rho_0+\rho_1)$. By solving the characteristic
equation of $\rho_{A'}$ we get the set of eigenvalues
$\{\frac{1}{4}(1\pm\cos\frac{\omega}{4}),\frac{1}{4}(1\pm
\sin\frac{\omega}{4})\}$.  $S(A')$ can then be expressed as follows:
$$S(A')=1+\frac{h(\frac{1-\cos(\omega/4)}{2})+h(\frac{1-\sin(\omega/4)}{2})}{2}\, .$$
By computing the second derivative of $f(x)=h(\frac{1-\sqrt{x}}{2})$, we get that $f''(x)\leq 0$ in $[0,1]$, implying that $f$ is concave in $[0,1]$.
For $\alpha\in[0,1]$, Jensen's inequality yields $\frac{f(0)+f(1)}{2}\leq f(\alpha)$, and therefore,
$\frac{f(0)+f(1)}{2}\leq \frac{f(\alpha)+f(1-\alpha)}{2}$.
Consequently, the minimum of $h(\frac{1-\cos(\omega/4)}{2})+h(\frac{1-\sin(\omega/4)}{2})=f(\cos^2\frac{\omega}{4})+f(\sin^2\frac{\omega}{4})$ is achieved for $\omega=0$ and in this
case, $S(A')=\frac{3}{2}$.

Finally, we can conclude that the leakage is minimal for the canonical embedding and
$\Delta_{\psi}(P_{X,Y})=S(A)-I(X;Y)=S(A')-I(X;Y)\geq \frac{3}{2}-1=\frac{1}{2}$.
\qed

\end{proof}

There is also a more direct way to
interpret this quantity in the case of the canonical embedding
$\ket{\psi_{\vec{0}}}$ for \pot: If Alice and Bob share a single copy of
$\ket{\psi_{\vec{0}}}$ then there exist POVMs for both of them which
reveal Bob's selection bit to Alice, and the XOR of Alice's bits to
Bob, both with probability $\frac{1}{2}$.
Let $\ket{\Phi^{\pm}}=\frac{1}{\sqrt{2}}(\ket{00}\pm\ket{11})$,
$\ket{\Psi^{\pm}}=\frac{1}{\sqrt{2}}(\ket{01}\pm\ket{10})$ denote the
Bell states, and $\ket{\pm} \assign \frac{1}{\sqrt{2}}(\ket{0}\pm\ket{1})$.
Observe that the canonical embedding $\ket{\psi_{\vec{0}}}$ of \pot\ can be expressed as follows:
\[ 
\ket{\psi_{\vec{0}}} = \frac{1}{2}\ket{\Psi^{-}}\otimes \frac{\ket{\Psi^{-}}-\ket{\Phi^{-}}}{\sqrt{2}} +
           \frac{1}{2}\ket{\Phi^{-}}\otimes \frac{\ket{\Psi^{+}}-\ket{\Phi^{+}}}{\sqrt{2}} +
            \frac{1}{\sqrt{2}} \ket{++}\ket{++}.
\]
In order to get the value $x_0\oplus x_1$ of Alice's bits $x_0$ and $x_1$, 
Bob can use  POVM ${\sf B}=\{{\sf B}_0,{\sf B}_1,{\sf B}_?\}$ where
${\sf B}_0 \assign  
\frac{1}{2}(\ket{\Psi^-}-\ket{\Phi^-})(\bra{\Psi^-}-\bra{\Phi^-})$,
${\sf B}_1  \assign  
\frac{1}{2}(\ket{\Psi^+}-\ket{\Phi^+})(\bra{\Psi^+}-\bra{\Phi^+})$,
and
${\sf B}_{?}  \assign  \proj{++}$. It is easy to verify that Bob gets
outcome ${\sf B}_z$ for $z\in\{0,1\}$ (in which
case $x_0\oplus x_1= z$ with certainty) with probability $\frac{1}{2}$.
Alice's POVM can be defined as ${\sf A}=\{{\sf A}_0,{\sf A}_1,{\sf A}_?\}$ where
${\sf A}_0  \assign  \proj{-+}$,
${\sf A}_1  \assign  \proj{+-}$, and
${\sf A}_{?}  \assign  \id_2-{\sf A}_0-{\sf A}_1$.
By inspection we easily find that 
the probability for Alice to get Bob's selection bit 
is $1-\trace{({\sf A}_?\otimes\id_2)\proj{\psi_{\vec{0}}}} = \frac{1}{2}$.
For any regular embedding of \pot\ we can construct similar POVMs revealing the XOR 
of Alice's bits to Bob and Bob's selection bit to Alice with probability strictly 
more than $\frac{1}{4}$.

\subsection{Lower Bounds}
\begin{theorem}
\label{stringotleakage}
Any embedding $\ket{\psi}$ of \psot\ is  $(1-O(r2^{-r}))$-leaking.
\end{theorem}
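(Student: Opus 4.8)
The plan is to obtain the bound by a reduction to Rabin string OT, whose leakage is pinned down exactly in Theorem~\ref{thm:psrot}, fed into the reducibility inequality of Theorem~\ref{reduction}. Concretely, I would cast \psot\ into the split form $P_{\tilde{X}_0\tilde{X}_1,\tilde{Y}_0\tilde{Y}_1}$ demanded by Theorem~\ref{reduction} by letting Alice's first string be $\tilde{X}_0 := x_0$, her second string be the conditioning register $\tilde{X}_1 := x_1$, Bob's full output be $\tilde{Y}_0 := (c,y)$, and taking $\tilde{Y}_1$ empty. The reducibility condition is then checked by computing the conditional distribution $P_{\tilde{X}_0,\tilde{Y}_0 \mid \tilde{X}_1 = x_1}$ and comparing it, up to relabeling, with \psrot.

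The key observation is that conditioning on the unselected string $x_1$ collapses $1$-out-of-$2$ string OT to Rabin string OT on $x_0$. Indeed, for every fixed $x_1$ the conditional primitive is: $x_0 \in_R \{0,1\}^r$ together with $(c,y)$ where $c \in_R \{0,1\}$ and $y = x_c$. When $c=0$, Bob's output $(0,x_0)$ reveals $x_0$; when $c=1$, Bob's output is the single fixed symbol $(1,x_1)$, carrying no information about $x_0$. Relabeling $(0,s)\mapsto s$ and $(1,x_1)\mapsto \bot$ on Bob's side matches this against \psrot\ exactly, with each joint probability equal to $2^{-r-1}$. Hence $P_{\tilde{X}_0,\tilde{Y}_0 \mid \tilde{X}_1 = x_1} \simeq \psrot$ for \emph{every} value of $x_1$, so the hypothesis of Theorem~\ref{reduction} holds with $\delta = 0$.

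Applying Theorem~\ref{reduction} then yields $\Delta_{\psot} \geq \Delta_{\psrot}$, and Theorem~\ref{thm:psrot} supplies $\Delta_{\psrot} = 1 - O(r2^{-r})$. Since the reduction inequality is proved embedding-by-embedding (the proof of Theorem~\ref{reduction} starts from an arbitrary regular $\ket{\psi}\in\emb{\psot}$), and since Lemma~\ref{super_leak} lets one pass from any embedding to a regular one without increasing the leakage, the bound $\Delta_\psi(\psot) \geq 1 - O(r2^{-r})$ holds for \emph{every} embedding $\ket{\psi}$ of \psot, which is precisely the claim.

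I expect no serious obstacle: the whole content lies in recognizing the correct variable split so that conditioning on $x_1$ produces Rabin OT. The only point needing care is that the relabeling sending $(1,x_1)$ to the erasure symbol $\bot$ depends on the conditioned value $x_1$; this is harmless, because the relation $\simeq$ in Theorem~\ref{reduction} is verified separately for each conditioning value, so a per-$x_1$ relabeling is permitted and one still gets $\delta = 0$. As flagged in Table~\ref{tab:lowerbounds}, the resulting estimate is only a (presumably non-tight) lower bound, which is unavoidable here since an embedding of \psrot\ need not extend to an embedding of \psot.
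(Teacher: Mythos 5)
Your proposal is correct and follows essentially the same route as the paper: the paper likewise conditions on Alice's second string $x_1$ (register $A_1$), observes that the residual conditional primitive is \psrot\ up to relabeling ($c=0$ reveals $x_0$, $c=1$ yields the fixed symbol $(1,x_1)$ playing the role of $\bot$), and then invokes Theorem~\ref{reduction} with $\delta=0$ together with Theorem~\ref{thm:psrot}. Your explicit verification of the $\simeq$ condition and the remark that the relabeling may depend on the conditioning value are sound elaborations of steps the paper leaves implicit.
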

\begin{proof}
We use Theorem~\ref{reduction} to show that any (regular) embedding of \psot\ leaks at least as much as some regular embedding of \psrot. Let $(A_0,A_1)$ and $B$ denote Alice's and Bob's respective registers.
Then $\ket{\psi}_{A_0A_1B}\in \emb{\psot}$ can be written in the form:
$$\ket{\psi}=\frac{1}{2^{r/2}}\sum_{x\in\{0,1\}^r} \ket{x}^{A_1}\ket{\psi^x}_{A_0B}\, ,$$ 
where each 
$$\ket{\psi^x}=\frac{1}{2^{(r+1)/2}}\sum_{x'\in\{0,1\}^r}\left(e^{i\theta(x',x,0)}\ket{x'}^{A_0}\ket{0,x'}^B+
e^{i\theta(x',x,1)}\ket{x'}_{A_0}\ket{1,x}^B\right)$$ 
can be viewed as a regular embedding of \psrot. According to Theorem~\ref{reduction} and Theorem~\ref{thm:psrot}, 
we get that 
$$\Delta_{\psot}\geq \Delta_{\psrot}=1-O(r/2^r)\, .$$
\qed
\end{proof}

\begin{theorem}
\label{noisyotleakage2}
If $p<\frac{1}{2}-\frac{1}{2\sqrt{2}} \approx 0.1464$ then 
$\Delta_{\potn}\geq \frac{\left(1/2-p-\sqrt{p(1-p)}\right)^2}{8\ln 2}.$
\end{theorem}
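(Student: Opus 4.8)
The plan is to reduce the claim to a phase-robust lower bound on how much Alice's register reveals about Bob's \emph{selection bit}. By Lemma~\ref{super_leak} it suffices to treat regular embeddings $\ket{\psi_\theta}\in\emb{\potn}$, and by Lemma~\ref{symmetry} we may argue from Alice's side: for a regular embedding $\Delta_{\psi}(\potn)=S(A;Y)-I(X;Y)$, with $I(X;Y)=1-h(p)$. Writing Bob's output as $Y=(c,y)$ (selection bit $c$, received bit $y$), I would use the chain rule $S(A;Y)=S(A;c)+S(A;y\mid c)$ together with $I(X;Y)=I(X;c)+I(X;y\mid c)=I(X;y\mid c)$ (since $c$ is independent of $X$). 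Because Alice recovers $X$ by a computational-basis measurement of $A$, data processing gives $S(A;y\mid c)\ge I(X;y\mid c)$ for each fixed value of $c$, hence
\[ \Delta_{\psi}(\potn)\ \ge\ S(A;c)\,. \]
So it is enough to lower bound the information Alice's quantum state carries about the bit Bob is supposed to keep perfectly hidden.

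Next I would convert $S(A;c)$ into a trace-distance statement. Let $\rho_A^{0},\rho_A^{1}$ be Alice's reduced states conditioned on $c=0,1$, and $\bar\rho=\tfrac12(\rho_A^0+\rho_A^1)$. Then $S(A;c)$ is the Holevo quantity of $\{(\tfrac12,\rho_A^{0}),(\tfrac12,\rho_A^{1})\}$, which equals the averaged relative entropy $\tfrac12 S(\rho_A^0\|\bar\rho)+\tfrac12 S(\rho_A^1\|\bar\rho)$. Applying quantum Pinsker, $S(\rho\|\sigma)\ge\frac{1}{2\ln2}\|\rho-\sigma\|_1^2$, to each term and using $\|\rho_A^{c}-\bar\rho\|_1=\tfrac12\|\rho_A^0-\rho_A^1\|_1$ yields
\[ S(A;c)\ \ge\ \frac{D^2}{2\ln 2}\,,\qquad D:=\tfrac12\|\rho_A^0-\rho_A^1\|_1\,. \]
It then remains to prove the \emph{phase-independent} estimate $D\ge\tfrac12\bigl(\tfrac12-p-\sqrt{p(1-p)}\bigr)$, which combined with the two displays gives exactly $\Delta\ge\frac{(1/2-p-\sqrt{p(1-p)})^2}{8\ln2}$.

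The heart of the argument, and the step I expect to be the main obstacle, is this last inequality: a general embedding has exponentially many free phases $\theta(\cdot,\cdot)$ that the protocol designer may tune to push $\rho_A^0$ and $\rho_A^1$ together, which is precisely why the bound is only suboptimal. To get a phase-robust handle I would extract a single off-diagonal entry of $\rho_A^0-\rho_A^1$ in the computational basis of $A=A_0A_1$: for the Hermitian observable $M=e^{i\xi}\ket{x}\bra{x'}+\mathrm{h.c.}$ one has $D\ge|(\rho_A^0-\rho_A^1)_{x,x'}|$. Choosing $x=00,\,x'=01$ is the key, since these agree in $x_0$ but differ in $x_1$. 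Each entry equals $\tfrac14$ times the overlap of the two conditional data states. When $c=0$ the data qubit encodes $x_0$, so that overlap has the form $(1-p)e^{i\alpha}+p\,e^{i\beta}$ and magnitude at least $1-2p$ \emph{for all phases}; when $c=1$ the data encodes $x_1$, so the overlap is between two ``flipped'' states and has magnitude at most $2\sqrt{p(1-p)}$. By the reverse triangle inequality,
\[ \bigl|(\rho_A^0-\rho_A^1)_{00,01}\bigr|\ \ge\ \tfrac14(1-2p)-\tfrac14\cdot 2\sqrt{p(1-p)}\ =\ \tfrac12\Bigl(\tfrac12-p-\sqrt{p(1-p)}\Bigr)\,, \]
which is nonnegative exactly when $1-2p\ge 2\sqrt{p(1-p)}$, i.e.\ $8p^2-8p+1\ge0$, i.e.\ $p\le\tfrac12-\tfrac1{2\sqrt2}=\sin^2(\pi/8)$. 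This is the origin of the threshold in the hypothesis.

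Assembling the three displays proves the statement. The points to check with care are that the conditional states $\rho_A^c$ really have the block structure making the two relevant overlaps depend only on $x_0$ resp.\ $x_1$ (with the computational-basis diagonal pinned by $P_{X,Y}$ independently of the phases), and the exact normalisation $\tfrac14$ of the matrix entries. The reverse-triangle step is what makes the bound phase-robust but also lossy, so I would not expect it to be tight; this is consistent with the table value $1/2$ at $p=0$, where the present method yields only $1/(32\ln2)$. A symmetric route through Bob's information about $x_0\oplus x_1$ is available but, I expect, no simpler, since the same phase freedom reappears there.
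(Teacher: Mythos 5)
Your proof is correct, and it actually lands on the constant $8\ln 2$ stated in the theorem, so it is worth recording how it differs from the paper's argument. The overall strategy is the same: reduce to lower-bounding the information Alice's register carries about Bob's selection bit, and make that bound phase-robust by isolating the off-diagonal entry $(\rho_A^0-\rho_A^1)_{00,01}$, whose two contributions (magnitude at least $(1-2p)/4$ from $c=0$, at most $2\sqrt{p(1-p)}/4$ from $c=1$) produce the threshold $p<\tfrac12-\tfrac{1}{2\sqrt2}$. The two bridging steps differ, and yours are sharper. First, the paper derives $\Delta\ge S(A;Y_0)$ from the superadditivity $S(A;Y_0Y_1)\ge S(A;Y_0)+S(A;Y_1)$ combined with the claim $S(A;Y_1)\ge 1-h(p)$; that claim is asserted without justification (only $I(X;Y_1)=\tfrac12(1-h(p))$ is immediate, and for the canonical embedding of \pot\ at $p=0$ one computes $S(A;Y_1)=\tfrac32-h(\tfrac14)<1$, so the inequality as stated does not hold in general). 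Your route via the chain rule $S(A;cy)=S(A;c)+S(A;y\mid c)$ and conditional data processing $S(A;y\mid c)\ge I(X;y\mid c)=I(X;Y)$ reaches $\Delta\ge S(A;c)$ cleanly and sidesteps that step entirely. Second, the paper passes from $S(A;Y_0)$ to trace distance via the Average Encoding Theorem and then uses the loose bound $|(\rho^0_A-\rho^1_A)_{ij}|\le\|\rho^0_A-\rho^1_A\|_1$, whereas you write the Holevo quantity as an averaged relative entropy, apply quantum Pinsker, and use the sharp variational fact $\tfrac12\|\rho^0_A-\rho^1_A\|_1\ge|(\rho^0_A-\rho^1_A)_{x,x'}|$. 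The net effect is a factor of $4$: your chain of inequalities proves the bound with denominator $8\ln 2$ as claimed in the theorem, while the paper's own final display only reaches $32\ln 2$.
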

\begin{proof}
Before starting with the actual proof, we formulate a useful statement, relating two measures of uncertainty of 
a quantum ensemble. 

\begin{theorem}[Average Encoding Theorem~\cite{KNTZ01}]
\label{aver_encoding}
Let $E$ denote a quantum system storing the quantum part of a cq-state $\rho_{XE}= \sum_{x\in\mathcal{X}}P_X(x)\proj{x}\otimes \rho^x_E$. Then 
$$\sum_x P_X(x)\|\rho_E-\rho^x_E\|_1\leq\sqrt{2(\ln 2)S(X;E)}\, .$$
\end{theorem}

In order to prove Theorem~\ref{noisyotleakage2}, we first notice that for any regular embedding of $P_{X,Y_0Y_1}$ such that $Y_0$ and $Y_1$ are independent, it holds that
\begin{align} \label{jointentropy}
S(A;Y_0Y_1)\geq S(A;Y_0)+S(A;Y_1)\, .
\end{align}

We can write 
\begin{align*}
S(A;Y_0)+S(A;Y_1)&=H(Y_0)+H(Y_1)-S(Y_0|A)-S(Y_1|A)\\
&=H(Y_0Y_1)-S(Y_0|A)-S(Y_1|A)\\
&\leq H(Y_0Y_1)-S(Y_0Y_1|A)=S(A;Y_0Y_1) \, ,
\end{align*} 
which proves Inequality~\eqref{jointentropy}.

Let $X,Y_0,Y_1$ be random variables corresponding to Alice's pair of bits, Bob's selection bit, and its value, respectively. For $\potn$ we have that $I(X;Y_0Y_1)=1-h(p)$. As the selection bit $Y_0$ and the value $Y_1$ are independent, we can use~\eqref{jointentropy} to lower bound $S(A;Y_0Y_1)$ as follows
$$S(A;Y_0Y_1)\geq S(A;Y_0)+S(A;Y_1)\geq S(A;Y_0)+(1-h(p))\, .$$ 
Hence, for computing the lower bound on $S(A;Y_0Y_1)$, we only need 
to compute the lower bound on $S(A;Y_0)$. 
A state $\ket{\psi}\in \emb{\potn}$ can be written as 
$$\ket{\psi}=\frac{1}{\sqrt{2}}(\ket{\psi_0}^{AB_1}\ket{0}^{B_0}+\ket{\psi_1}^{AB_1}\ket{1}^{B_0})\, .$$ 

Let $\rho^0_A\assign \tr_{B_1}\proj{\psi_0}$ and $\rho^1_A\assign 
\tr_{B_1}\proj{\psi_1}.$  By applying Theorem~\ref{aver_encoding} from above, 
we get that 
$$\|\rho^0_A-\rho^1_A\|_1\leq \sqrt{8(\ln 2)S(A;Y_0)}\, ,$$
and therefore,
\begin{equation}
\label{averageencoding}
\frac{\|\rho^0_A-\rho^1_A\|_1^2}{8\ln 2}\leq S(A;Y_0).
\end{equation}
The trace norm of $\rho^0_A-\rho^1_A$ yields an upper bound on the entries of the matrix:
\begin{equation}
\label{norma}
|(\rho^0_A-\rho^1_A)_{ij}|\leq \|\rho^0_A-\rho^1_A\|_1.
\end{equation} 

We can write the state $\ket{\psi}$ in the form:
$$\ket{\psi}=\frac{1}{2}\sum_{y_0,y_1}\ket{\varphi^{y_0,y_1}}_A\ket{y_0,y_1}_{B_0B_1}\, ,$$ 
where 
\begin{eqnarray*}
\ket{\varphi_{0,y}}&=&\sqrt{\frac{1-p}{2}}\sum_{x=0}^1 e^{i\theta(y,x,0,y)}
\ket{y,x}_A\ket{0,y}_{B_0B_1}+\sqrt{\frac{p}{2}}\sum_{x=0}^1 e^{i\theta(y,x,0,1-y)}\ket{y,x}_A\ket{0,1-y}_{B_0B_1}\\
\ket{\varphi_{1,y}}&=&\sqrt{\frac{1-p}{2}}\sum_{x=0}^1 e^{i\theta(x,y,1,y)}
\ket{x,y}_A\ket{1,y}_{B_0B_1}+\sqrt{\frac{p}{2}}\sum_{x=0}^1 e^{i\theta(x,y,1,1-y)}\ket{x,y}_A\ket{1,1-y}_{B_0B_1}.
\end{eqnarray*} 
By evaluating the individual matrix entries of $(\rho^0_A-\rho^1_A)$ we get a simple lower bound on $|(\rho^0_A-\rho^1_A)_{ij}|$ for $i\neq j\in\{0,\dots,3\}$:\begin{equation}
\label{upboundentry}
|(\rho^0_A-\rho^1_A)_{ij}|\geq \frac{1-2p}{4}-\frac{\sqrt{(1-p)p}}{2}
\end{equation}
hence, from~(\ref{norma}) follows that 
$$\|\rho^0_A-\rho^1_A\|_1\geq  \frac{1-2p}{4}-\frac{\sqrt{(1-p)p}}{2}\, ,$$
yielding due to~(\ref{jointentropy}) and (\ref{averageencoding})  that 

$$S(A;Y_0Y_1)\geq 1-h(p)+S(A;Y_0)\geq  1-h(p)+\frac{(1/2-p-\sqrt{(1-p)p})^2}{32\ln 2}\, .$$ 

The lower-bound is non-trivial if $1/2-p-\sqrt{(1-p)p}>0$, 
which is true for $p<\frac{1}{2}-\frac{1}{2\sqrt{2}}$. 
The results yields the following lower-bound on the leakage of \potn: 
$$\Delta_{\potn} \geq \frac{(1/2-p-\sqrt{(1-p)p})^2}{32\ln 2}\, .$$ 
However, this lower-bound is very loose, since for $p=0$ we get 
that 
$$\Delta_{\pot}\geq \frac{1}{128\ln 2}\approx 0.011\, ,$$
which is much weaker than the optimal 
$$\Delta_{\pot}\geq \frac{1}{2}\, .$$ \qed
\end{proof}

It remains to mention that by using more careful analysis of the phases of $\ket{\varphi_{0.y}}$ and $\ket{\varphi_{1,y}}$, the lower bound on the absolute value of the outside-diagonal 
entries from~(\ref{upboundentry}) can be improved, yielding a non-trivial lower bound on the leakage for $p>0.1464$
and eventually, even for any $p<1/4$. 

\end{document}